\title{Frugal, Flexible,  Faithful: Causal Data Simulation via Frengression}
\author[1]{Linying Yang\thanks{E-mails:
    \texttt{linying.yang@stats.ox.ac.uk},
    \texttt{evans@stats.ox.ac.uk},
    \texttt{xwshen@uw.edu}}}
\author[2]{Jens Magelund Tarp}
\author[1,3]{Robin J.~Evans\protect\footnotemark[1]}
\author[4]{Xinwei Shen\protect\footnotemark[1]}
\affil[1]{Department of Statistics, University of Oxford}
\affil[2]{Novo Nordisk}
\affil[3]{Pioneer Centre for SMARTbiomed, University of Oxford}
\affil[4]{Department of Statistics, University of Washington}
\date{\today}
\renewcommand{\Pr}{\mathbb{P}}
\newcommand{\E}{\mathbb{E}}
\DeclareMathOperator{\Var}{Var}
\newcommand{\cmid}{\,|\,}
\newcommand\indep{\protect\mathpalette{\protect\independenT}{\perp}}
\def\independenT#1#2{\mathrel{\rlap{$#1#2$}\mkern2mu{#1#2}}}
\theoremstyle{plain}
\theoremstyle{definition}
\newcommand{\benum}{\begin{enumerate}}
\newcommand{\eenum}{\end{enumerate}}
\newcommand{\bitem}{\begin{itemize}}
\newcommand{\eitem}{\end{itemize}}
\newcommand{\barr}{\begin{array}}
\newcommand{\earr}{\end{array}}
\newcommand{\bmat}{\begin{pmatrix}}
\newcommand{\emat}{\end{pmatrix}}
\newcommand{\blist}{\renewcommand{\labelenumi}{\textbf{\arabic{enumi}}.} \begin{enumerate}}
\newcommand{\elist}{\end{enumerate} \renewcommand{\labelenumi}{\arabic{enumi}.}}
\def\bal#1\eal{\begin{align*}#1\end{align*}}
\tikzset{nv/.style={circle, color=red, fill=red, inner sep=0.5mm}}
\tikzset{rv/.style={circle, minimum size=6.5mm, inner sep=0.5mm}}
\tikzset{fv/.style={rectangle, draw, thick, minimum size=6mm, inner sep=0.5mm}}
\tikzset{lv/.style={circle, color=red, fill=gray!30, draw, thick, minimum size=6.5mm, inner sep=0.5mm}}
\tikzset{sv/.style={circle, double, draw, thick, minimum size=6.5mm, inner sep=0.5mm}}
\tikzset{rve/.style={ellipse, minimum size=6.5mm, inner sep=0.5mm}}
\tikzset{rvs/.style={circle, minimum size=5.5mm, inner sep=0.5mm}}
\tikzset{fvs/.style={rectangle, draw, thick, minimum size=5mm, inner sep=0.5mm}}
\tikzset{lvs/.style={circle, color=red, fill=gray!30, draw, thick, minimum size=5.5mm, inner sep=0.5mm}}
\tikzset{svs/.style={circle, double, draw, thick, minimum size=5.5mm, inner sep=0.5mm}}
\tikzset{rves/.style={ellipse, minimum size=5.5mm, inner sep=0.5mm}}
\tikzset{deg/.style={->, very thick, color=blue}}
\tikzset{sdeg/.style={*->, very thick, color=blue}}
\tikzset{degl/.style={->, very thick, color=red}}
\tikzset{beg/.style={<->, very thick, color=red}}
\tikzset{cdeg/.style={{Circle[length=+2pt 2.5,width=+2pt 2.5, fill=none]}->, very thick, color=blue}}
\tikzset{cceg/.style={{Circle[length=+2pt 2.5,width=+2pt 2.5, fill=none]}-{Circle[length=+2pt 2.5,width=+2pt 2.5, fill=none]}, very thick}}
\tikzset{uceg/.style={{Circle[length=+2pt 2.5,width=+2pt 2.5, fill=none]}-, very thick}}
\tikzset{ueg/.style={very thick}}
\let\oldappendix\appendix
\renewcommand{\appendix}{%
  \oldappendix
  \crefalias{section}{appendix}%
  \crefalias{subsection}{appendix}%
  \crefalias{subsubsection}{appendix}%
}
\crefname{appendix}{Appendix}{Appendices}
\Crefname{appendix}{Appendix}{Appendices}
\DeclareMathOperator*{\argmin}{argmin}
\def\E{\mathbb{E}}
\def\R{\mathbb{R}}
\def\P{\mathbb{P}}
\def\Q{\mathbb{Q}}
\def\cL{\mathcal{L}}
\def\cN{\mathcal{N}}
\def\cX{\mathcal{X}}
\DeclareMathOperator{\ed}{ED}
\DeclareMathOperator{\es}{ES}
\DeclareMathOperator{\mmd}{MMD}
\newcommand{\YIZX}{{Y\hspace{-.5pt}|\hspace{-.75pt}Z \hspace{-.75pt} X}}
\newcommand{\ZX}{{\hspace{-.5pt} Z \hspace{-1pt} X}}
\newcommand{\Xpaz}{X_0}
\newcommand{\Xchz}{X_1}
\newcommand{\ZIXp}{Z\hspace{-.5pt}|\hspace{-.5pt}\Xpaz\hspace{-1pt}}
\newcommand{\XpZXc}{\Xpaz\hspace{-.5pt}Z\hspace{-1pt}\Xchz\hspace{-1.5pt}}
\newcommand{\XpZ}{\Xpaz\hspace{-.5pt}Z}
\newcommand{\Yx}{Y\hspace{-.75pt}(x)}
\theoremstyle{plain}
\newtheorem{theorem}{Theorem}[section]
\newtheorem{proposition}{Proposition}[section]
\newtheorem{lemma}{Lemma}[section]
\newtheorem{corollary}{Corollary}[section]
\newtheorem{assumption}{Assumption}[section]
\theoremstyle{remark}
\newtheorem{remark}{Remark}[section]
\definecolor{peach}{RGB}{255,235,225}
\definecolor{lightblue}{RGB}{224,243,249}
\definecolor{lightgreen}{RGB}{230,255,230}
\begin{document}

\maketitle

\begin{abstract}
Machine learning has expanded the scope of causal inference through flexible function classes, but evaluating such methods remains difficult because the true causal data-generating mechanism and interventional structure are typically unknown.
We introduce frengression, a generative framework that directly models marginal interventional distributions. The key idea is to represent these distributions via a reparameterised noise variable, yielding a generative model that supports sampling under user-specified interventions. In contrast to approaches that infer causal quantities from observational conditionals, frengression targets the interventional estimand itself and offers a principled route to generative modelling of interventional distributions. Frengression provides a coherent procedure for both estimation and simulation in multivariate and time-varying settings. We establish consistency and characterise extrapolation behaviour of the proposed estimator. Empirical results on clinical trial data illustrate its practical performance.
\end{abstract}

\section{Introduction} \label{sec:introduction}

The use of machine learning tools has given causal inference a new lease of life, enabling complex models to be used with principled causal estimators and guarantees about statistically important quantities \citep{wager2018estimation,chernozhukov2018double, hahn2020bayesian}.  To build trustworthy causal models, however, we also need to understand when these methods may be more or less reliable, or perhaps fail completely. This implies that causal inference needs a set of good benchmarking tools. Unfortunately, real‑world datasets are not ideal for this task, because they cannot give us access to the ground truth other than in a few very special circumstances.  In particular, they rarely provide the counterfactual outcomes we care about, and the distribution we want to evaluate often differs from the one that produced the observations. Well‑designed simulations can address this discrepancy \citep{neal2020realcause,parikh2022validating}; they allow us to choose a ground truth, stress‑test new methods, compare their generalisability and stability, and expose failure modes before deployment.  In trial design, simulations also guide sample size calculations, making patient recruitment more efficient. Valid benchmarks have therefore become central to method development in the AI era. Faithful simulation is no longer optional; it is a core step in building trustworthy causal models \citep{friedrich2024role, pezoulas2024synthetic, gamella2025causal}.

Generating data from a marginal interventional distribution, rather than merely reproducing observational conditional relationships, poses a distinct modelling problem \citep{havercroft2012simulating}. Let $Z\in\R^{d_z}$ denote covariates, $X\in\R^{d_x}$ treatment, and $Y\in\R^{d_y}$ outcome. Inspired by \citet{evans2024parameterizing}, the joint distribution $\P$ of $(Z,X,Y)$ can be decomposed into the distribution of the past $\P_{\ZX}$, the interventional distribution $\P_{\Yx}$, and a remaining component that completes the joint distribution. This decomposition isolates the causal margin $\P_{\Yx}$ as a primary object.

The difficulty is that $\P_{\Yx}$ cannot, in general, be obtained from the structural equation of the outcome alone: while $Y\mid X,Z$ is specified by a structural model, the interventional margin additionally depends on the distribution of $Z$ under intervention. Thus, modelling $\P_{\Yx}$ requires capturing observational structure while enabling controlled generation under interventions.

We introduce \emph{frengression}, a generative approach that models the marginal interventional distribution directly. The central idea is to represent $\P_{\Yx}$ via a reparameterised independent noise variable, yielding a generative model that supports sampling under user-specified interventions.

\subsection{Contribution}
\label{sec:contribution}
Frengression directly targets marginal causal quantities while also generating complex benchmark datasets with user-specified marginal causal distributions. This unified framework guarantees a precisely defined marginal effect while closely reflecting real-world conditions. Consistent with the frugal parameterisation, frengression consists of variation-independent components to isolate and control specific causal properties, such as the strength of confounding.

We highlight some key properties of frengression:
\begin{itemize}
    \item \textbf{Generative model for causal margin.} Frengression is a flexible deep generative model for distributional regression that represents the entire outcome distribution while explicitly specifying the marginal interventional distribution of interest. Directly targeting the marginal quantity of interest allows for better estimation and more precise control. 
    \item \textbf{Broad applicability.} Our framework supports both estimation and simulation for a variety of (complex) settings, including distributional treatment effects, continuous treatments, longitudinal settings, and survival data.
    \item \textbf{Flexibility and robustness.} Frengression factorises the joint distribution according to a frugal, variation-independent parameterisation, enabling each component to be modelled with any suitable method (not limited to neural networks). The model fitting does not require extensive hyperparameter tuning, enhancing overall stability. 
    \item \textbf{Theoretical guarantees.}  We provide a consistency guarantee for models trained with the finite-sample energy score, along with extrapolation guarantees under continuous treatments.
\end{itemize}
\subsection{Related Work}
\label{sec:related_work}
We review existing approaches to generative modelling for causal estimation and simulation and how they motivate our frengression framework.  We summarise this comparison between frengression and other methods in \Cref{tab:methods}.

\begin{table}[ht]
  \centering
  %%% adjust font size here if you like:
  \scriptsize 
  %%% scale to full text width:
  \resizebox{\textwidth}{!}{%
    \begin{tabular}{c  c   c  c c  c c}
      \toprule
      \rowcolor{white}
      \textbf{Method} 
        & \makecell{\textbf{Binary}\\\textbf{Treatment}}
        & \makecell{\textbf{Continuous}\\\textbf{Treatment}}
        & \makecell{\textbf{Longitudinal}\\\textbf{Data}}
        & \makecell{\textbf{Distributional}\\\textbf{Treatment Effects}} &
        %\makecell{\textbf{Extra-}\\\textbf{polation}}
        \textbf{Extrapolation}
        & \makecell{\textbf{Modelling}\\\textbf{Causal Margin}}
        \\
        
      \midrule

      % \rowcolor{peach}
      \makecell{TARNet \\ CFRNet \\ Dragonnet\\ CEVAE}
       
        & \checkmark 
        & \textcolor{red}{$\times$} 
        & \textcolor{red}{$\times$} 
        & \textcolor{red}{$\times$} 
        & \textcolor{red}{$\times$}
        & \textcolor{red}{$\times$}\\
        % \midrule
        \hline
      % \rowcolor{peach}
      \makecell{CausalEGM \\ CausalBGM
       }
        & \checkmark 
        & \checkmark 
        & \textcolor{red}{$\times$} 
        & \textcolor{red}{$\times$}
        & \textcolor{red}{$\times$}
        & \textcolor{red}{$\times$} \\
       \hline
       \makecell{RealCause \\ Credence}
       
        & \checkmark 
        & \textcolor{red}{$\times$} 
        & \textcolor{red}{$\times$} 
        & \textcolor{red}{$\times$} 
        & \textcolor{red}{$\times$}
        & \textcolor{red}{$\times$}\\
        \hline
        % \rowcolor{peach}
        DeepACE
        & \checkmark 
        &  \textcolor{red}{$\times$}
         & only static treatment regime
        & \textcolor{red}{$\times$} 
         & \textcolor{red}{$\times$}
         & \textcolor{red}{$\times$}\\
        % \rowcolor{peach}
        \hline
        Deep LTMLE
        & \checkmark 
        & \textcolor{red}{$\times$} & \checkmark
        & \textcolor{red}{$\times$} 
         & \textcolor{red}{$\times$}
         & \textcolor{red}{$\times$}\\
        \hline
      %\rowcolor{lightblue}
      Frugal Flow
        & \checkmark
        & \checkmark & \textcolor{red}{$\times$} 
        & \textcolor{red}{$\times$} 
         & \textcolor{red}{$\times$}
         & \checkmark\\
    \hline
      %\rowcolor{lightgreen}
      \textbf{Frengression}
        & \textcolor{teal}{\checkmark}  & \textcolor{teal}{\checkmark} 
        & \textcolor{teal}{\checkmark} & \textcolor{teal}{\checkmark} & \textcolor{teal}{\checkmark} & \textcolor{teal}{\checkmark}\\
      \bottomrule
    \end{tabular}%
  }
  \caption{Comparison of methods.}
  \label{tab:methods}
\end{table}

\subsubsection{(Deep) generative models for the causal margin}

\label{sec:generative_causal_margin}
Deep learning models have shown remarkable flexibility for causal estimation in high-dimensional,
complex settings. TARNet and CFRNet learn shared representations balancing treated and control groups \citep{shalit2017estimating}; Dragonnet incorporates propensity score estimation into the outcome network \citep{shi2019adapting}. However, since they enforce covariate balance across treatment arms, these architectures are restricted to binary treatments and static covariates, and do not extend to longitudinal data.

Structural equation models \citep{pearl2009causality} provide a natural generative framework: one builds a generative model directly for the structural equations of each variable. \citet{holovchak2025distributional} applied this idea in the nonparametric instrumental variable setting, estimating the full (conditional) interventional distribution under latent confounding. Other generative approaches to causal effect estimation include CEVAE \citep{louizos2017causal}, which uses variational autoencoders to recover the joint distribution of observed and latent confounders from purely observational data; CausalEGM \citep{liu2024encoding}, which encodes confounders into a low-dimensional latent space via an encoder–generator architecture, accommodating continuous treatments (e.g.~dosage levels); and its extension CausalBGM \citep{liu2025ai}, which adds a Bayesian neural network returning both the mean and variance of individual treatment-effect posteriors under a Gaussian assumption.

All these methods model the \emph{conditional} distribution. Modelling the \emph{marginal} interventional distribution generatively is less direct, since the causal margin is not in general induced by the structural equation of the outcome alone. Concretely, if $Y=f(X,Z,\varepsilon_Y)$ with $\varepsilon_Y$ exogenous, sampling from $\P_{\Yx|Z=z}$ requires only drawing $\varepsilon_Y$ and computing $f(x,z,\varepsilon_Y)$; sampling from the marginal $\P_{\Yx}$ additionally requires simulation from the interventional distribution of $Z(x)$. The generative mechanism inducing the causal margin therefore does not coincide with the original structural form. Frengression instead constructs a generative model for the causal margin directly, via a reparameterised independent noise variable. The variation independence of its components permits flexible specification of the margin while preserving causal validity.

\subsubsection{Modelling longitudinal data}

Causal estimation in longitudinal settings remains comparatively underdeveloped. Building on LTMLE (Longitudinal Targeted Maximum Likelihood Estimation, \citealp{van2012targeted,lendle2017ltmle}) and iterative g-computation, \citet{frauen2023estimating} proposed DeepACE, which uses recurrent neural networks to model patient trajectories and estimate average risk ratios; it handles only static treatment regimes. \citet{shirakawa2024longitudinal} introduced Deep LTMLE, extending LTMLE and DeepACE through transformer architectures to estimate counterfactual mean outcomes under dynamic treatment policies. Both methods estimate marginal causal quantities by empirical averaging of conditional potential outcomes, require careful hyperparameter tuning, and do not support data simulation.

By contrast, frengression extends directly to longitudinal data, accommodating marginal and conditional estimands under both static and dynamic regimes, and supporting data simulation with little tuning.

\subsubsection{Modelling distributional treatment effects}
Most causal estimation methods target the average treatment effect, but distributional treatment effects (DTE)  have become increasingly important for richer insights beyond the mean \citep{abadie2002bootstrap,bitler2017can}. Existing approaches include: semiparametric models \citep{diaz2017efficient, kennedy2023semiparametric}, which require correct model specification; kernel-based methods, which avoid parametric assumptions but are difficult to interpret and scale poorly in high dimensions \citep{muandet2017kernel,park2021conditional}; neural network methods \citep{holovchak2025distributional}, which are flexible but do not admit user-specified causal quantities; and Bayesian methods \citep{Venturini2015quantile,xu2018bayesian}, which are computationally expensive. Each trades off some combination of flexibility, interpretability, and efficiency.

Frengression bridges this gap through its modular, variation-independent parameterisation and distributional regression models, which directly target specific distributional quantities while preserving both transparency and efficiency in estimation and data simulation.

\subsubsection{Generative models for causal data simulation}
Several generative models simulate data for benchmarking causal inference methods. \citet{athey2024using} use Wasserstein generative adversarial networks (GANs) in a Monte Carlo simulation framework. \citet{neal2020realcause} developed RealCause, based on neural autoregressive flows \citep{huang2018neural}. \citet{parikh2022validating} proposed Credence, a variational autoencoder factorising the joint distribution into the covariate marginal, the treatment given covariates, and the outcome given covariates and treatment. The conditional and marginal outcome distributions are specified only indirectly through a post-hoc penalty in the training loss, and the model is restricted to binary treatments.

None of these methods allows direct control of the marginal causal effect. Frugal flows \citep{de2024marginal} address this by combining the frugal parameterisation with normalising flows to model the marginal interventional distribution directly. Their reliance on copula flows \citep{kamthe2021copula} restricts them to univariate outcomes or static regimes, requires hierarchical copulas for multivariate outputs, and is sensitive to hyperparameters.

Compared to density-estimation generative models, which are restricted either to invertible transformations (frugal flows and RealCause, via normalising flows) or to tractable decoder distributions (CausalEGM, CEVAE, and Credence, via variational autoencoders with Gaussian or Bernoulli decoders), frengression is easier to fit and allows direct control of the causal margin while keeping data simulation tractable.

\subsection{Structure}
We introduce the frengression method in \Cref{sec:method}, the architecture in \Cref{sec:model}, and the estimation procedure and identifiability results in \Cref{sec:estimation}. We provide consistency results for  fitting the model in \Cref{sec:consistency}. Frengression exhibits strength with respect to treatment extrapolation, which is explored in \Cref{sec:extrapolation}. Its flexibility for longitudinal data and survival analysis is demonstrated in \Cref{sec:frengression_seq}. We then provide empirical results to show frengression's comprehensive capability in synthetic experiments (\Cref{sec:synthetic_exp}) and real trial data LEADER (\Cref{sec:leader}). We conclude with potential directions in \Cref{sec:discussion}.

\section{Method}
\label{sec:method}

In this section we describe the necessary notation and basic method for frengression.

\subsection{Notations}
\label{sec:notations}
We use a hat to mark the samples generated from a (fitted) frengression model, for instance, $\widehat{Z}$, $\widehat{X}$, or $\widehat{Y}$. The Euclidean norm of a vector $x\in\R^d$ is denoted $\|x\|$. We use $\P_V$ to denote the distribution of a random variable $V$, $p_V$ for the corresponding density or mass function, $\P^n$ for the empirical measure, and $P(A), A\subseteq \Omega$ to denote the probability of an event $A$. Expectations are represented by $\E$, so $\E f  =\int f d\P$. 
% Sample averages are written as $\P_n[f] = \frac{1}{n}\sum_{i=1}^n f (Z_i)$ for $Z_i \sim \P, i=1,\ldots, n$. 
We use $\overline{W}_k$ to denote $(W_0, \ldots, W_{k})$,  the time-varying variable $W$'s values up to the time point $k$. As with the frugal parameterisation (\citealt{evans2024parameterizing}, Remark 1.4), our method fits within any causal inference framework, including \citeauthor{pearl2009causality}'s `do(·)’ operator $P(Y=y\,|\,do(X=x))$. For convenience, in this paper we follow the potential outcomes framework (\citealt{neyman1923applications} and \citealt{rubin1974estimating}), where we write the outcome under a treatment $x \in \mathcal{X}$ as $\Yx$ and its distribution is denoted $\P_{\Yx}$.

In this paper, we consider a general setting depicted in \Cref{fig:general_dag}(a), which describes a simple time-varying confounding scenario. We partition the treatment vector $X = (\Xpaz,\Xchz)$, where $\Xpaz$ denotes the intervention on the treatments causally prior to $Z$, and $\Xchz$ denotes interventions on the treatments causally subsequent to 
$Z$; the respective supports are  $\mathcal{X}_0$ and  $\mathcal{X}_1$. In such a setting, when we write $\Yx$, it is equivalent to $Y\hspace{-.75pt}(x_0,x_1)$. This structure is commonly seen in longitudinal analysis \citep{havercroft2012simulating} and serves as the motivating example for simulating data that satisfies marginal requirements in \citet{evans2024parameterizing}. In the absence of any interventions on the covariates, the causal DAG reduces to the simpler form shown in \Cref{fig:general_dag}(b), which is more typical in static settings.
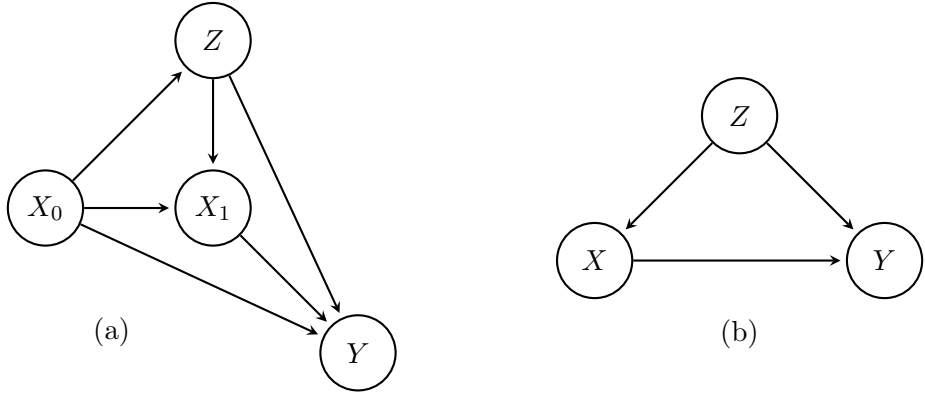
\begin{figure}[ht]
  \centering
  \begin{tikzpicture}[  % master settings
      every node/.style = {circle, draw=black, thick,
                           align=center,
                           inner sep=1pt, 
                           minimum size=1cm}, 
      node distance      = 1.2cm and 1.2cm,       
      shorten >= 2pt,                               % trims arrow tails
      > = stealth                                   % arrow tips
  ]
\begin{scope}
    % ---------------- Nodes ----------------
    % top row

    \node (Z) {$Z$};

    % bottom row
    \node[below =of Z]  (X) {$\Xchz$};
    \node [left=of X](XpaZ) {$\Xpaz$};
    \node[below right=of X] (Y) {$Y$};

    \draw[->, thick] (XpaZ) -- (Z);
    \draw[->, thick] (XpaZ) -- (Y);  \draw[->, thick] (XpaZ) -- (X);

    \draw[->, thick] (Z) -- (X);
    \draw[->, thick] (X) -- (Y);
    \draw[->, thick] (Z) -- (Y);
    \node[draw=none, below left of=X, yshift=-8mm, xshift=-5mm] {(a)};
\end{scope}
  % \end{tikzpicture}
 
% \end{figure}

% \begin{figure}[ht]
  % \centering
  % \begin{tikzpicture}[  % master settings
  %     every node/.style = {circle, draw=black, thick,
  %                          align=center,
  %                          inner sep=1pt, 
  %                          minimum size=1cm}, 
  %     node distance      = 1.2cm and 1.2cm,       
  %     shorten >= 2pt,                               % trims arrow tails
  %     > = stealth                                   % arrow tips
  % ]
\begin{scope}[xshift=7cm, yshift=-1cm]
    \node(Z) {$Z$};
    % bottom row
    \node[below left=of Z]  (X) {$X$};
    \node[below right=of Z] (Y) {$Y$};

    \draw[->, thick] (Z) -- (X);
    \draw[->, thick] (X) -- (Y);
    \draw[->, thick] (Z) -- (Y);
    \node[draw=none, below of=Z, yshift=-17mm] {(b)};
\end{scope}
  \end{tikzpicture}
   \caption{(a) General causal DAG considered in this paper. This time-dependent confounding structure is commonly seen in longitudinal data.  (b) Causal DAG without $\Xpaz$. This structure is more common in time-fixed settings.}
    \label{fig:general_dag}
  % \label{fig:simple_dag}
\end{figure}

\subsection{Review of Key Concepts}
\label{sec:review_of_key_concepts}
 As hinted by its name, frengression employs engression as the generative model. Before we proceed further to more details, we provide a review of concepts related to engression here. 

Parameterised by a neural network, engression aims to estimate the full conditional distribution of a response $Y$ given predictors $W$. This predictive distribution $Y \cmid W=w$ can be taken from a general model class produced by engression and be written as $\mathcal M = \{f(w, \zeta)\}$, where $f: (w,\zeta) \mapsto y$ belongs to a function class and $\zeta$ is a random vector with a pre-specified  distribution, independent of the covariates. A typical choice is that $\zeta$ follows the multivariate standard Gaussian distribution. 
We write $\P_f(y\,|\,w)$ for the distribution of $f(w, \zeta)$, with the randomness coming from $\zeta$, and $\cL(\P,\P_0)$ for the loss function of a distribution $\P$ given a reference $\P_0$. Engression specifically concerns a solution to:
\[
\widetilde{f}=\argmin_{f\in\mathcal{M}} \E_{(Y,W)\sim \P}\,\cL\big(\P_f(y\,|\,W), Y\big).
\]
Although different loss functions can be used in engression (\citealp{shen2024engression}, Appendix D), it shows stable, robust performance using the energy loss $\cL\big({-}\es(\P_f(y\,|\,W), Y)\big)$, where the energy score ($\es$) is defined as 
\begin{equation}
\es (\P,u) = \frac{1}{2}\E_{U,U'\sim \P}\|U-U'\|-\E_{U\sim \P}\|U-u\| \label{eqn:energy_score}
\end{equation}
given a distribution $\P$ and an observation $u$ \citep{gneiting2007strictly}; the $U,U'$ are taken to be independent. The energy score is associated with the energy distance \citep{szekely2013energy}, a distributional divergence that is defined as
\begin{equation}
\label{eqn:energy_distance}
    \ed(\P,\P') = 2\E_{U\sim \P,V\sim \P'}\|U-V\|-\E_{U,U'\sim \P}\|U-U'\|-\E_{V,V'\sim \P'}\|V-V'\|.
\end{equation}
This is a special case of the squared maximum mean discrepancy (MMD)  distance, denoted by $\mmd^2(\P,\P')$, with kernel $k(u,v) = \frac{1}{2}(\|u\|+\|v\|-\|u-v\|)$ \citep{sejdinovic13equivalence}. Using energy loss provides computational simplicity: the estimation of the score or distance is explicitly computed based on sampling. Since the energy distance differs from the energy loss only by a positive scaling plus an additive constant that does not depend on the generator parameters, minimising the energy distance yields the same fitted generator as minimising the energy loss. For compatibility with the relevant literature regarding distributional distance \citep{briol19inference}, we prove the consistency of models trained aiming at minimising energy distance $\ed(\P_f,\P)$ in \Cref{sec:consistency}.

Another speciality of engression is that its model class $\mathcal{M}$ contains the pre-additive noise models (pre-ANMs). Define function classes $\mathcal{F}$ and $\mathcal{G}$. The pre-ANMs can be written as
\begin{equation}
f(w + g(\zeta)) : f \in \mathcal F, g\in\mathcal G,
\end{equation}
where $g(\zeta)$ represents the pre-additive noise; $f$ and $g$ are to be learned. \citet{shen2024engression} provided detailed discussions on the extrapolation properties provided by pre-ANMs, allowing one to  relax the causal effect identification assumptions; details are given in \Cref{sec:extrapolation}.

\subsection{Model and Parametrisation}
\label{sec:model}
%\begin{figure}[hb]
\begin{figure}[!tb]
  \centering
  \resizebox{0.9\textwidth}{!}{%
  \begin{tikzpicture}[
      circle node/.style={circle, draw=black, thick, align=center, minimum width=2.5cm, minimum height=2.5cm},
      est node/.style={circle, draw=black, thick, align=center, minimum width=2.5cm, minimum height=2.5cm},
      sim node/.style={circle, draw=black, thick, align=center, minimum width=2.5cm, minimum height=2.5cm},
      >=stealth
  ]
    \node[regular polygon, regular polygon sides=3, draw, minimum size=5cm, fill=gray!30]  {};
    % Main triangular nodes at explicit coordinates
    \node[circle node, fill=white] (PZX) at (210:2.8) {
      $\P_\ZX$ \\
      $g(\epsilon)$,\\
      $\epsilon\sim\mathcal{N}(0,I_{d_z+d_x})$
    };
    \node[circle node, fill=white] (phi) at (330:2.8) {
      $\phi^\ast$ \\
      $ h(Z,X,\xi)$,\\
      $\xi\sim\mathcal{N}(0,I_{d_y})$
    };
    \node[circle node, fill=white] (PYX) at (90:2.8) {
      $ \P_{\Yx}$ \\
      $f(x, \eta)$,\\
      $~\eta\sim\mathcal{N}(0,I_{d_y})$
    };

    % Joint distribution label near triangle center
    \node at (0,0) {\large $\P_{ZXY}$};

    % % Triangular causal arrows
    % \draw[thick,black] (PZX.north) to[bend left=30] (PYX.south west);
    % \draw[thick,black] (PYX.south east) to[bend left=30] (phi.north);
    % \draw[thick,black] (phi.west) to[bend left=30] (PZX.east);

    % Estimation and Simulation on middle row
    \node[est node] (EST) at (-6.5,1) {\textbf{Observation}\\$\{(z,x,y)\}_i$};
    % Arrow from Estimation to center line
    \draw[thick,black,-{Latex[width=3mm,length=4mm]}] (EST.east) -- ++(4cm,0);

    \node[sim node] (SIM) at (6.5,1) {\textbf{Simulation}\\$\{(\hat z,\hat x,\hat y)\}_i$};
    % Arrow towards Simulation from center line
    \draw[thick,black,{Latex[width=3mm,length=4mm]}-] (SIM.west) -- ++(-4cm,0);
  \end{tikzpicture}}
  \caption{Estimation and simulation architecture of frengression.}
  \label{fig:causal-architecture}
\end{figure}
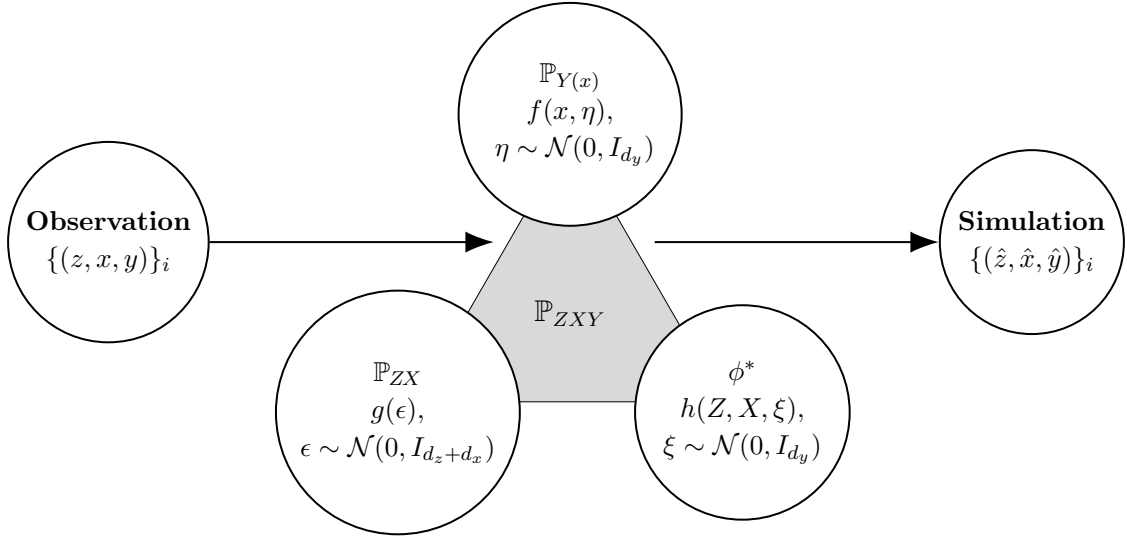

The architecture of frengression is shown in \Cref{fig:causal-architecture}. A frengression model contains three components, each corresponding to one component in the frugal parameterisation:
%We build joint generative models for the joint distribution of $(Z,X,Y)$. 
\begin{itemize}%[leftmargin=*]
	\item For the joint distribution $\P_\ZX$, we build a generative model:  
\begin{equation}\label{eq:model_xz}
	g(\epsilon),
\end{equation}
where $\epsilon\sim\cN(0, I_{d_z+d_x})$ and $g:\R^{d_z+d_x}\to\R^{d_z+d_x}$. 

\item Given a specified form of the (marginal) causal effect $\P_{\Yx}$, we use
\begin{equation}\label{eq:model_causal}
	Y=f(x,\eta)
\end{equation}
as the  generative form, where $\eta\sim\cN(0,I_{d_y})$. For example, if $\P_{\Yx}$ is $\cN(x,1)$, it can be specified in a generative form as $f(x,\eta)=x+\eta$.

\item Association remainder $\phi^\ast_\YIZX$. 
Given the causal margin in \eqref{eq:model_causal}, the remaining part to fully specify the distribution  $\P_\YIZX$ is a conditional distribution of $\eta \cmid Z=z, X=x$. For this, we build a conditional generative model
\begin{equation}\label{eq:model_copula}
	\widetilde\eta = h(z,x,\xi),
\end{equation}
where $\xi\sim\cN(0, I_{d_y})$. Similar to the copula measure chosen by the original frugal parameterisation, this term captures the association between $Y$ and $(Z,X)$ in a way that is compatible with the specified causal margin.

% \item To break the dependency between $X$ and $Z$, especially when there exists $\Xpaz$ that are parents of $Z$ (as shown in \Cref{fig:general_dag}), we introduce another auxiliary function $e(Z|\Xpaz)$.\xinwei{I wouldn't put it as a model element; This is just an intermediate step for estimation of the models.}
\end{itemize}

Frengression parameterises the joint (observed) distribution via generative models, preserving the same properties as the original frugal parametrisation. First, the three components jointly specify the whole distribution $\P_{ZXY}$. To see this, note that model \eqref{eq:model_xz} specifies the distribution for $(Z,X)$,  while models \eqref{eq:model_causal} and \eqref{eq:model_copula} together specify a distribution of $Y\cmid Z,X$. Furthermore, the three components can be chosen to be variation independent. The first component is about the distribution of $(Z,X)$ whereas in the other two components, $X$ and $Z$ are either given or conditioned on; thus, $g$ is variation independent of $(f,h)$. Note that the marginal distribution of $\eta$ is always fixed as the standard Gaussian and is thus not a model component. For the variation independence between $f$ and $h$, note that model \eqref{eq:model_causal} specifies the causal margin $\P_{\Yx}$ only through the function $f$ with a fixed value of $x$ and $\eta$ drawn from its (fixed) marginal distribution, whereas model \eqref{eq:model_copula} is about the conditional distribution of $\eta\cmid Z,X$ with a constraint of a normal margin.
Thus, these two components are variation independent.  More details are given in \Cref{remark:interventional_conditional}. 
% note that the model \eqref{eq:model_causal} for the marginal distribution $\P_{\Yx}$ depends only on a fixed value of $x$ and $\eta$ drawn from its marginal distribution, whereas model \eqref{eq:model_copula} is about the conditional distribution of $\eta| Z,X$. 

We write a frengression model as $\varphi = 
(g,f,h)$. While we specify the distributions of $\epsilon$, $\eta$ and $\xi$ in this paper, their distributions can be non-Gaussian, provided that they are sampled independently of the observed data. Note that even though we draw $\eta$ from a standard normal distribution, because it is subsequently passed through the non-linear function $f(\cdot)$, the interventional distribution induced by frengression need not be Gaussian.

\begin{remark}[Interventional distribution $\P_{\Yx}$ versus conditional distribution $\P_{Y|X=x}$]
\label{remark:interventional_conditional}
    Model \eqref{eq:model_causal} with a fixed $x$ and $\eta$ following its marginal distribution induces the interventional distribution of $\P_{\Yx}$, whereas model \eqref{eq:model_causal} in combination with \eqref{eq:model_copula}, that is $f(X, h(Z,X,\xi))$, induces the conditional distribution of $\P_\YIZX$. Note that in the latter case, $X$ and the noise term $h(Z,X,\xi)$ are correlated, whereas in the `marginal generative causal model' \eqref{eq:model_causal}, $x$ is fixed and so is independent from the noise term $\eta$. 
\end{remark}
\begin{remark}[Target estimand]
The target interventional distribution need not be the fully marginal distribution $\P_{\Yx}$; it can be conditional, $\P_{Y(x)\mid W}$, where $W$ is a chosen set of variables that are held fixed rather than intervened upon. In the static setting of \Cref{fig:general_dag}(b), a common example is $W=Z$.
\end{remark}
% \begin{itemize}
%     \item The three components jointly specify the whole distribution $\P_{Z,X,Y}$. To see this, note that model \eqref{eq:model_xz} specifies the distribution for $(Z,X)$,  while models \eqref{eq:model_causal} and \eqref{eq:model_copula} together specify a distribution of $Y|Z,X$.  
%     \item The three components can be chosen to be variation independent. It is clear to see that the first component is about the distribution of $(Z,X)$ whereas in the other two components, $X$ and $Z$ are either given or conditioned on; therefore the first is variation independent of the other two. As for the variation independence of the other two components, note that the model \eqref{eq:model_causal} only involves a fixed value of $x$ and margin of $\eta$, whereas model \eqref{eq:model_copula} is about the conditional of $\eta|Z,X$.
%     \item Thus, there is no redundancy in such parameterisation.
% \end{itemize}

\begin{remark}[Privacy]
\label{remark:privacy}
Our generative framework factorises the simulation process into three variation-independent modules, so each can be specified or replaced on its own. In contexts where the variables $Z$ or $X$ contain sensitive information, the model to generate them ($g(\cdot)$) can be replaced by any differentially private generative mechanism. By the post‑processing theorem of differential privacy \citep{dwork2006calibrating}, any further mapping applied to a differentially private algorithm remains differentially private. The entire simulated dataset 
% $\big(\widehat{Z},\widehat{X}, \widehat{Y}\big)$ 
also preserves that privacy guarantee, making our approach naturally extendable to privacy-sensitive settings.
\end{remark}

\subsection{Estimation}
\label{sec:estimation}
As in engression, parameters of each component in frengression are typically estimated by minimising the energy loss, for its simplicity and convenience. However, other loss functions can be used, such as KL divergence.
% \[
% \mathcal{L}(g) := \E\Big[\big\|(Z,X) - g(\epsilon)\big\| - \frac12\big\|g(\epsilon) - g(\epsilon')\big\|\Big]
% \] and 
% \[
% \mathcal{L}(f,h) = \E\big[\|Y - f(X,\widetilde\eta)\| - \frac12\| f(X,\widetilde\eta) -  f(X,\widetilde\eta')\|\big] + \E\big[\|\eta-\widetilde\eta\|-\frac12\|\widetilde\eta - \widetilde\eta''\|\big].
% \] 

On the population level, the objective function for model \eqref{eq:model_xz}  is
\begin{equation}
\label{eqn:population_g}
    \cL_\ZX(g) : =\cL(\P_\ZX,  \P_g) =\E\Big[\big\|(Z,X) - g(\epsilon)\big\| - \frac12\big\|g(\epsilon) - g(\epsilon')\big\|\Big].
\end{equation}
The objective for models \eqref{eq:model_causal} and \eqref{eq:model_copula} jointly is
\begin{equation}
\label{eqn:population_fh}
	\cL_\YIZX(f,h):= \E\Big[\big\|Y - f(X,\widetilde\eta)\big\| - \frac12\big\| f(X,\widetilde\eta) -  f(X,\widetilde\eta')\big\|\Big] + \E\Big[\big\|\eta-\bar\eta\big\|-\frac12\big\|\bar\eta - \bar\eta'\big\|\Big],
\end{equation}
where $\widetilde\eta=h(Z,X,\xi)$, $\widetilde\eta'=h(Z,X,\xi')$,  $\bar\eta=h(\bar{Z},\bar{X},\xi'')$, and $\bar\eta'=h(\bar{Z}',\bar{X},\xi''')$ with $\eta\sim \cN(0,I_{d_y})$, $\xi,\xi',\xi'',\xi'''\overset{\rm i.i.d.}{\sim}\cN(0, I_{d_y})$, $(Z,X)\sim\P_\ZX$, $\bar{X}=(\bar X_0,\bar X_1)\sim \P_X$, and, conditional on $\bar X_0$, $\bar{Z},\bar{Z}'\overset{\rm i.i.d.}{\sim}\P_{Z\mid X_0=\bar X_0}$. This is chosen because these two models should be estimated such that:
\begin{itemize}
\item the object $f(X,\widetilde\eta)$, when $\widetilde\eta$ is drawn from its conditional distribution given $(Z,X)$, matches the observed conditional distribution $\P_\YIZX$, which is ensured by the first term in \eqref{eqn:population_fh};
\item the second term in \eqref{eqn:population_fh} calibrates $h$ so that the noise it feeds into $f$ follows the reference distribution $\cN(0,I_{d_y})$; the precise statement, conditional on $X_0$ and for $\P_X$-almost every $x$, is given in \Cref{prop:population_estimates}.
% \item the induced conditional distribution of $Y|Z,X$ matches that in $\P$, which is ensured by the term $\E\Big[\big\|Y - f(X,\widetilde\eta)\big\| - \frac12\big\| f(X,\widetilde\eta) -  f(X,\widetilde\eta')\big\|\Big]$ in (\ref{eqn:population_fh});
% \item the margin of $\widetilde\eta$ matches that of $\eta$, which is pre-specified. In \eqref{eq:model_causal}, is it set as standard normal distribution. This is handled by the term $\E\Big[\big\|\eta-\widetilde\eta\big\|-\frac12\big\|\widetilde\eta - \widetilde\eta''\big\|\Big]$ in \eqref{eqn:population_fh}.
\end{itemize}

As for sampling from $\P_{Z|\Xpaz}$, if $X_0$ does not exist (i.e.~the confounder $Z$ is not affected by any treatment as in \Cref{fig:general_dag}(b)), we do it independently from the marginal distribution $\P_Z$; otherwise, we need to learn an auxiliary model $e^\ast(x_0,\zeta)$ that matches the distribution of $\P_{Z|\Xpaz}$ via the standard engression approach
\begin{equation*}
    e^\ast\in\argmin_e \E\Big[\|Z - e(\Xpaz,\zeta)\| - \frac12\|e(\Xpaz,\zeta) - e (\Xpaz,\zeta')\|\Big],
\end{equation*}
where $\zeta,\zeta'\overset{\rm i.i.d.}{\sim}\cN(0,I_{d_z})$.

We define the population version of frengression, denoted by $\varphi^\ast=(g^\ast,f^\ast,h^\ast)$, as the solution to minimising the objective functions:
\begin{align*}
    g^\ast &= \argmin_{g}\cL_\ZX(g)\\
    (f^\ast,h^\ast) &= \argmin_{f,h}\cL_\YIZX(f,h).
\end{align*}

To show the population guarantees that frengression learns each component correctly and recovers the true data generating process of $(Z,X,Y)$, we impose the following assumptions:

\begin{assumption}[Correct specification]
\label{ass:correct_model}
We write the function classes $\mathcal{G} =\{g(\epsilon)\}$, $\mathcal{F}=\{f(x,\eta)\}$, and $\mathcal{H}=\{h(z,x,\xi)\}$. There exist $g'\in \mathcal{G}$, $f'\in\mathcal{F}$, and $h'\in \mathcal{H}$ such that
\begin{enumerate}[label=(\roman*)]
    \item $g'(\epsilon)\sim \P_\ZX$ with $\epsilon\sim \cN(0,I_{d_x+d_z})$;
    \item $f'\big(x,h'(z,x,\xi)\big)\sim \P_{Y|Z=z,X=x}$ for $\P_\ZX$-almost every $(z,x)$, with $\xi\sim\cN(0,I_{d_y})$;
    \item for $\P_X$-almost every $x=(x_0,x_1)$, $h'(Z_x,x,\xi)\sim \cN(0,I_{d_y})$ when $Z_x\sim \P_{Z\mid X_0=x_0}$ and $\xi\sim\cN(0,I_{d_y})$.
\end{enumerate}
\end{assumption}

\begin{assumption}[Identification assumptions regarding DAG in \Cref{fig:general_dag}(a)]\label{ass:identification} 
\,
\begin{enumerate}[label=(\roman*)]
    % \item \textbf{Causal Markov:} The joint distribution $\P_{\Xpaz,Z,\Xchz,Y}$ satisfies the Markov property relative to \Cref{fig:general_dag}:
    % \begin{align*}
    %     p_{\Xpaz, Z, \Xchz, Y}(x_0, z, x_1,y) &= p_{\Xpaz}(x_0) \\
    %     &~~~\cdot p_{Z|\Xpaz}(z | x_0)  \\
    %     &~~~\cdot p_{\Xchz|Z,\Xpaz}( x_1|z,x_0)\\
    %     &~~~\cdot p_{Y|\Xchz, Z,\Xpaz}(y | x_1,z,x_0).
    % \end{align*}
    % 
  \item \textbf{Positivity:} The joint support of $(\Xpaz,Z,\Xchz)$ is full, that is,
  % \[
  % p_{\Xpaz, Z, \Xchz}(x_0, z, x_1)>0 
  % \]
% each factor in 
% \begin{align*}
%     p_{\Xpaz, Z, \Xchz}(x_0, z, x_1) &= p_{\Xpaz}(x_0) \nonumber\\
%     &~~~\cdot p_{Z|\Xpaz}(z | x_0) \nonumber \\
%     &~~~\cdot p_{\Xchz|Z,\Xpaz}( x_1|z,x_0)
% \end{align*}
\begin{align*}
    p_{\XpZXc}(x_0, z, x_1) > 0
\end{align*}
% is  strictly positive 
for almost all $(x_0,z,x_1)\in \mathcal{X}_0 \times \mathcal{Z}\times \mathcal{X}_1$. 
  \label{ass:positivity}
      \item \textbf{Unconfoundedness:} 
      % \begin{itemize}
      %     \item         The potential outcome $\Yx$  is independent of the observed treatment given covariates and past treatment $\Xpaz$:
      %       \[
      %       \Yx  \indep  \Xchz \bigm| Z=z,\Xpaz=x_0,
      %         \quad \forall\,x\in\mathcal X.
      %       \]
      %   \item  There is no unmeasured confounding between $\Xpaz$ and $Z$. 
      %   In other words, the potential time-varying confounder $Z$ under $\Xpaz=x_0$ is independent of the actual $\Xpaz$:
      %    \[
      %      Z(x_0)
      %      \indep 
      %      \Xpaz.
      %    \]
      % \end{itemize}
      The potential outcome $\Yx$  is independent of the observed treatment given past covariates and treatment:
      \[
            \Yx  \indep  \Xchz \bigm| Z=z,\Xpaz=x_0, \quad\text{and}\quad
             \Yx\indep  X_0,
              \quad \forall\,x\in\mathcal X.
            \]
  \item \textbf{Consistency:}
  % \begin{itemize}
  %     \item If $X = x$ then the observed outcome equals the potential outcome at that level:
  %   $$
  %     X = x \Longrightarrow Y = \Yx, 
  %     \quad \forall\,x\in\mathcal X.
  %   $$
  %   \item   Similarly for $Z$ and $\Xpaz$:
  %   \[
  %     \Xpaz = x_0 \Longrightarrow Z = Z(x_0), 
  %     \quad \forall\,x_0\in\mathcal{X}_0.
  %   \]
  % \end{itemize}
    If $X = x$ then the observed outcome equals the potential outcome at that level:
    $$
      X = x \quad \mbox{implies} \quad Y = \Yx, 
      \quad \forall\,x\in\mathcal X.
    $$
\end{enumerate}
\end{assumption}

When the conditions in \Cref{ass:identification} hold, the potential outcome distribution $\P_{\Yx}$ is identifiable from the observed joint distribution of $(Z,X,Y)$ in \Cref{fig:general_dag}(a):
\begin{lemma}[Identifiability]
    When \Cref{ass:identification} is satisfied, we have:
    \[
    p_{\Yx}(y) = \int_{\mathcal{Z}} p_{Y|\Xpaz Z \Xchz}(y\cmid x_0,z,x_1) \, p_{Z|\Xpaz}(z\cmid x_0) \, \mathrm{d} z.
    \]
\label{lem:identifiability}
\end{lemma}
The proof can be found in \Cref{sec:proof_identifiability}. In \Cref{sec:extrapolation} we show that a pre-additive noise structure extrapolates the median of the causal margin beyond the observed treatment support, relaxing positivity for this functional.

\Cref{ass:correct_model} asserts the existence of functions $(g',f',h')$, while \Cref{ass:identification} guarantees unique potential outcome distributions. With these assumptions, we introduce the following proposition, which provides the population guarantee that frengression correctly learns the true distributions when the model is correctly specified.

\begin{proposition}[Correctness]
\label{prop:population_estimates}
When Assumptions \ref{ass:correct_model} and \ref{ass:identification} hold, the frengression solution $\varphi^\ast=(g^\ast,f^\ast,h^\ast)$ satisfies, for $\P_\ZX$-almost every $(z,x)$ and $\P_X$-almost every $x$,
    \begin{align*}
        g^\ast(\epsilon)&\sim \P_\ZX,\\
        f^\ast(x, h^\ast(z,x,\xi)) &\sim \P_{Y|Z=z,X=x},\\
        f^\ast(x,\eta) &\sim \P_{\Yx},
    \end{align*}
where $\epsilon\sim \cN(0,I_{d_z+d_x})$, $\eta \sim \cN(0,I_{d_y})$, and $\xi\sim\cN(0,I_{d_y})$.
\end{proposition}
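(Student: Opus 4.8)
The plan is to handle the two population problems—the one defining $g^\ast$ and the joint one defining $(f^\ast,h^\ast)$—separately, using in each case that the population energy loss equals, up to an affine transformation not depending on the generator, the energy distance $\ed$ to the target law. Since $\ed$ is a metric on distributions (it is the squared MMD with the characteristic kernel $k(u,v)=\tfrac12(\|u\|+\|v\|-\|u-v\|)$; \citep{sejdinovic13equivalence,szekely2013energy}), equivalently the energy score is strictly proper \citep{gneiting2007strictly}, each loss is minimized, and uniquely so at the level of distributions, exactly when the generated law matches its reference. For $g^\ast$ this is immediate: $\mathcal{L}_\ZX(g)=\tfrac12\ed(\P_g,\P_\ZX)+c$ with $c$ independent of $g$, so it is minimized iff $\P_g=\P_\ZX$; \Cref{ass:correct_model} provides $g'\in\mathcal{G}$ attaining this, hence any minimizer obeys $\ed(\P_{g^\ast},\P_\ZX)=0$, i.e.\ $g^\ast(\epsilon)\sim\P_\ZX$.

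For $(f^\ast,h^\ast)$ I would decompose $\mathcal{L}_\YIZX$ into its two energy-loss terms and analyze each by conditioning. Writing $R_{z,x}$ for the law of $f(x,h(z,x,\xi))$, the tower property gives the first term as $\E_{(Z,X)\sim\P_\ZX}\E_{Y\mid Z,X}[-\es(R_{Z,X},Y)]$, which by strict propriety is minimized iff $R_{z,x}=\P_{Y|Z=z,X=x}$ for $\P_\ZX$-almost every $(z,x)$. For the second term, let $Q_x$ denote the interventional noise law of $h(Z,x,\xi)$ with $Z\sim\P_{\ZIXp}$; conditioning on $\bar X=x\sim\P_X$ (here $\bar\eta,\bar\eta'$ share $\bar X$ but use independent $\bar Z,\bar Z'$ and noises, so they are i.i.d.\ $Q_x$ given $\bar X$) rewrites the term as $\E_{x\sim\P_X}\E_{\eta\sim\cN(0,I_{d_y})}[-\es(Q_x,\eta)]$, minimized iff $Q_x=\cN(0,I_{d_y})$ for $\P_X$-almost every $x$. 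If \Cref{ass:correct_model}, through the frugal variation-independent structure, supplies a single $(f',h')$ attaining both minima at once, then the infimum of $\mathcal{L}_\YIZX$ equals the sum of the two individual infima, so every minimizer $(f^\ast,h^\ast)$ attains each term's minimum: $f^\ast(x,h^\ast(z,x,\xi))\sim\P_{Y|Z=z,X=x}$ (the third claim) and $h^\ast(Z,x,\xi)\sim\cN(0,I_{d_y})$ interventionally.

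The second (causal-margin) claim then follows by combining these with \Cref{prop:identifiability}. Fixing $x$ and mixing the conditional identity over $Z\sim\P_{\ZIXp}$, the law of $f^\ast(x,h^\ast(Z,x,\xi))$ is $\int_{\mathcal Z}\P_{Y|Z=z,X=x}\,p_{Z|\Xpaz}(z\cmid x_0)\,\mathrm{d}z$, which equals $\P_{\Yx}$ by the identifiability formula. On the other hand $h^\ast(Z,x,\xi)\eqD\eta\sim\cN(0,I_{d_y})$ by the interventional-margin conclusion, and $f^\ast(x,\cdot)$ is a fixed measurable map, so $f^\ast(x,\eta)\eqD f^\ast(x,h^\ast(Z,x,\xi))\sim\P_{\Yx}$, giving $f^\ast(x,\eta)\sim\P_{\Yx}$.

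I expect the crux to be justifying that the two energy-loss terms can be minimized \emph{simultaneously}—that correct specification really does furnish a dependency term $h'$ which both reproduces every conditional $\P_{Y|Z=z,X=x}$ and leaves the interventional noise margin standard Gaussian. This is precisely where variation independence of $f$ and $h$ does the work, since matching conditionals alone does not force the noise margin when $f'(x,\cdot)$ is non-injective (the two push-forwards $f'(x,\widetilde\eta(x))$ and $f'(x,\eta)$ can agree in law while $\widetilde\eta(x)\ne\eta$ in law). Secondary, more routine care is needed to upgrade the $\P_\ZX$- and $\P_X$-almost-everywhere statements to the ``for all $x,z$'' asserted in the claim—leaning on positivity from \Cref{ass:identification} and, where required, continuity of the generators—and to record that population sampling from $\P_{\ZIXp}$ is exact, either as $\P_Z$ or via the auxiliary engression model $e^\ast$, by the same energy-distance argument used for $g^\ast$.
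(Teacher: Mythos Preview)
Your proposal is correct and follows essentially the same route as the paper: strict propriety of the energy score handles $g^\ast$ and each of the two summands in $\mathcal{L}_\YIZX$ separately, correct specification (\Cref{ass:correct_model}) supplies a feasible $(f',h')$ attaining both infima so any joint minimizer attains them individually, and the causal-margin claim is then derived by combining the conditional identity with the interventional Gaussianity of $h^\ast(Z,x,\xi)$ via the identification formula of \Cref{prop:identifiability}. Your write-up is in fact more explicit than the paper's on the simultaneous-minimization point and on the almost-everywhere versus ``for all $x,z$'' upgrade, both of which the paper's proof passes over without comment.
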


The proof of \Cref{prop:population_estimates} is provided in \Cref{sec:proof_population_estimates}.
% Then when the specified causal effect $p^\ast(y\,|\,x)$ is cognate to the ordinary conditional $p(y\,|\,x)$ of $P$, we have 
% \begin{equation*}
% 	f^\ast(x, h(x,z,\xi)) \sim (Y|X=x,Z=z), \quad \forall x,z.
% \end{equation*}
\subsubsection{Finite-sample Frengression}
\label{sec:finite_frengression}

$\varphi^\ast$ solves the population objectives in (\ref{eqn:population_g}) and (\ref{eqn:population_fh}). In practice, we only have access to finite samples $\{(Z_i, X_i, Y_i)\}_{i=1}^n \sim \P_{ZXY}(z,x,y)$; the empirical frengression estimator $\hat\varphi=(\hat g,\hat f,\hat h)$ is defined by
\begin{align*}
    \hat{g} &= \argmin_g\widehat{\cL}_\ZX(g)\\
    (\hat{f},\hat{h}) &= \argmin_{f,h}\widehat{\cL}_\YIZX(f,h),
\end{align*} 
where the empirical loss function estimates are
\small
\begin{align}
   \widehat{\cL}_\ZX(g) &:= \frac{1}{n}\sum_{i=1}^n\bigg[\frac1m\sum_{j=1}^m\big\|(X_i,Z_i) - g(\epsilon_{i,j})\big\| - \frac{1}{2m(m-1)}\sum_{j=1}^m\sum_{\substack{j'=1 \\ j'\neq j}}^m \big\|g(\epsilon_{i,j}) - g(\epsilon_{i,j'})\big\|\bigg],
    \label{eq:empirical_g}
\end{align}
\begin{equation}\label{eq:empirical_fh}
\begin{split}
  \widehat{\cL}_\YIZX(f,h)&:=\frac{1}{n}\sum_{i=1}^n\bigg[\frac{1}{m}\sum_{j=1}^m\big\|Y_i - f(X_i,\widetilde\eta_{i,j
    })\big\|  - \frac{1}{2m(m-1)}\sum_{j=1}^m\sum_{\substack{k=1\\k\neq j}}^m\big\| f(X_i,\widetilde\eta_{i,j}) - f(X_i,\widetilde\eta_{i,k})\big\|\bigg] \\[.5ex]
  &\qquad\qquad\mathbin{+}\frac{1}{n}\sum_{i=1}^{n} \bigg[\frac{1}{m}\sum_{j=1}^{m}\big\|\eta_i-\bar\eta_{i,j}\big\|-\frac{1}{2m(m-1)}\sum_{j=1}^m\sum_{\substack{k=1\\k\neq j}}^m\big\|\bar \eta_{i,j}- \bar\eta_{i,k}\big\|\bigg],
\end{split}
\end{equation}
\normalsize
respectively. Here, for $i=1,\ldots,n$ and $l=1,\ldots,m$, we simulate $\epsilon_{i,l} \sim \cN(0, I_{d_z+d_x})$, $\eta_{i}\sim \cN(0,I_{d_y})$ and $\xi_{i,l} \sim \cN(0, I_{d_y})$, and set $\widetilde\eta_{i,l} = h(Z_i,X_i,\xi_{i,l})$ and $\bar{\eta}_{i,l}=h(Z'_{i,l}, X_i,\xi_{i,l})$. The draws $Z'_{i,l}$ follow the same conditional distribution as $\bar Z$ in \eqref{eqn:population_fh}: if $X_0$ is absent (the common time-fixed setting of \Cref{fig:general_dag}(b)), no auxiliary model is needed and $Z'_{i,l}$ is taken from a random permutation of the observed $Z$'s, which estimates $\P_Z$; only when $X_0$ is present are the draws taken from the auxiliary model, $Z'_{i,l}=e(X_{0i},\zeta_{i,l})$, estimating $\P_{Z\mid X_0=X_{0i}}$.
Only in the latter case, the auxiliary model $e$ for $\P_{Z\mid X_0}$ is fitted by minimising
\small
\begin{align}
     \widehat{\cL}_{Z|X_0}(e) &:= \frac{1}{n}\sum_{i=1}^n\bigg[\frac1m\sum_{j=1}^m\big\|Z_i - e(X_{0i}, \zeta_{i,j})\big\| - \frac{1}{2m(m-1)}\sum_{j=1}^m\sum_{\substack{j'=1\\j'\neq j}}^m \big\|e(X_{0i}, \zeta_{i,j}) - e(X_{0i}, \zeta_{i,j'})\big\|\bigg],\label{eq:empirical_e}
\end{align}
\normalsize 
where $\zeta_{i,l}\sim\mathcal{N}(0,I_{d_z})$, for $i=1,\ldots,n$, $l=1,\ldots,m$.

The empirical losses \eqref{eq:empirical_g} and \eqref{eq:empirical_fh} estimate the corresponding population objectives; the second term of \eqref{eq:empirical_fh} is unbiased when the auxiliary draws are taken from the true conditional distribution $\P_{Z\mid X_0}$, which \eqref{eq:empirical_e} estimates when $X_0$ is present. In \Cref{sec:consistency}, we present theoretical guarantees that, by minimising the finite-sample empirical losses, the fitted generative distribution converges to the distribution within the model class that is closest to the target in energy distance.
%Although we denote the loss as if computed from a fixed set of $m$ samples, in practice we resample the noise variables $\epsilon$, $\eta$ and $\xi$ at every training step, following the standard procedure for generative models (see \cite{briol19inference}).

In practice, we parameterise $g$, $f$ and $h$ with neural networks. The resulting empirical objectives are therefore almost everywhere differentiable with respect to every model parameter, allowing us to optimise them with stochastic gradient descent to obtain the fitted frengression model.

\subsection{Sampling}
\label{sec:sampling_static}
Given the generative nature of our model, sampling from a fitted frengression is straightforward. We provide the sampling procedures from the population version $\varphi^\ast$  for four common simulation scenarios below:
\begin{enumerate}
    \item Simulate $(\widehat{Z},\widehat{X})$.\\
          Sample $\epsilon \sim \mathcal{N}\!\big(0, I_{d_z+d_x}\big)$ and set
          % \[
              $(\widehat{Z},\widehat{X}) = g^\ast(\epsilon)$.
          % \]

    \item Simulate $\widehat{Y}(x)$.\\
          For a treatment value $x$,  
          draw $\eta \sim \mathcal{N}\!\big(0, I_{d_y}\big)$ and compute
          % \[
              $\widehat{Y} (x)= f^\ast\big(x, \eta\big)$.
          % \]

    \item Simulate $(\widehat{Z},\widehat{X},\widehat{Y})$ from the joint distribution of $(Z,X,Y)$.
          \begin{enumerate}[label=(\roman*)]
              \item Generate $(\widehat{Z},\widehat{X})$ as in step 1. %\\
              \item Draw $\xi \sim \mathcal{N}\!\big(0, I_{d_y}\big)$ and obtain
                    % \[
                        $\widetilde{\eta} = h^\ast\big(\widehat{Z},\widehat{X},\xi\big)$.
                    % \]
              \item Produce the response via
                    % \[
                        $\widehat{Y} = f^\ast\big(\widehat{X}, \widetilde{\eta}\big)$.
                    % \]
          \end{enumerate}
    \item Simulate observed $(\widehat{Z},\widehat{X})$ together with a potential outcome $\widehat{Y}(x')$ \citep[as in single world intervention graphs,][]{richardson2013single}:
    \begin{enumerate}[label=(\roman*)]
        \item Generate $(\widehat{Z},\widehat{X})$ and obtain $\widetilde{\eta} = h^\ast\big(\widehat{Z},\widehat{X},\xi\big)$.
        \item Produce the response via $f^\ast(x',\widetilde\eta)$ with a specified $x'$.

    \end{enumerate}
\end{enumerate}

The $(g^\ast,f^\ast,h^\ast)$ can be replaced by model $(\hat{g},\hat{f},\hat{h})$ fitted using finite samples. Taken together, these sampling schemes allow us to generate synthetic observations from any component or the full joint distribution of a model fitted on real data.

% with a specified causal effect $f^\ast$:

\subsection{Sampling-based Inference}
Frengression allows sampling-based inference. For instance, a typical quantity of interest is the average potential outcome $\E \Yx$ under a specific intervention $x$. A fitted frengression model contains $\hat{f}(x,\eta)$, and this estimates the marginal outcome distribution. Thus, for any given intervention value $x$, we sample $\eta_i\sim \cN(0,I_{d_y})$ for $i=1,\ldots,m$. Then we obtain a sample $\{Y_i=\hat{f}(x,\eta_i),i=1,\ldots,m\}$ from the estimated distribution $\hat{\P}_{\Yx}$. The expected potential outcome  can thus be estimated using the empirical mean $\frac{1}{m}\sum_{i=1}^m \hat{f}(x,\eta_i)$. Besides the interventional mean, any other quantity can also be estimated; for instance, the $\alpha$-quantile %(denoted $Q_\alpha$) 
of $\P_{\Yx}$ is consistently estimated by the $\alpha$-quantile of this sample.
% In practice, it can also be estimated using
% \begin{align*}
%       \mathcal{\hat{L}}(g) &:= \frac{1}{nm}\sum_{i=1}^n\sum_{j=1}^m\frac12\|(X_i,Z_i) - g(\epsilon_{i,j})\|\\
%   &\quad + \frac{1}{nm}\sum_{i=1}^n\sum_{j'=1}^m\frac12\|(X_i,Z_i) - g(\epsilon_{i,j'})\|\\
%   &\quad - \frac{1}{m^2}\sum_{j,j'=1}^m \frac12\|g(\epsilon_{i,j}) - g(\epsilon_{i,j'})\| .
% \end{align*}

% \begin{align}
%         \mathcal{\hat{L}}(f,h)&:=\frac{1}{nm}\sum_{i=1}^n\sum_{j=1}^m\big[\frac12\|Y_i - f(X_i,\widetilde\eta_{i,j
%     })\| + \frac12\|Y_i - f(X_i,\widetilde\eta'_{i,j
%     })\| - \frac12\| f(X_i,\widetilde\eta_{i,j}) - f(X_i,\widetilde\eta'_{i,j})\|\big] \\
%     &\qquad\qquad+ \E\big[\|\eta-\widetilde\eta\|-\frac12\|\widetilde\eta - \widetilde\eta''\|\big]
% \end{align}

\section{Model Consistency}
\label{sec:consistency}

In this section we first provide a general consistency result for generative modelling with the empirical energy distance as the loss function, and then adapt it to the components of frengression. The generative model $\textsl{g}_\theta$ (distinct from the component generator $g$ of frengression), parameterised by $\theta\in\Theta$, takes a random noise input $\nu$, independent of the data and drawn from a fixed reference distribution; we assume it is a standard Gaussian, though any other distribution under which the assumptions below hold may be used instead. The model returns the sample $\textsl{g}_\theta(\nu)$, and we write $\P_\theta$ for its distribution, the distribution it generates. The aim is to learn an unknown data-generating distribution $\Q$, and since frengression is parameterised by a neural network, $\theta$ collects the network weights.

In this paper, we train the generative model $\textsl{g}_\theta$ by minimising the energy distance between $\P_\theta$ and $\Q$. Following the definition in \eqref{eqn:energy_distance}, we write the population energy distance as 
\begin{equation}
    \ed(\P_\theta,\Q) = 2\E_{X\sim \P_\theta,Y\sim \Q}\|X-Y\|-\E_{X,X'\sim \P_\theta}\|X-X'\|-\E_{Y,Y'\sim \Q}\|Y-Y'\|.
\end{equation} 

We make the following assumption:

\begin{assumption}[Existence of $\theta^\ast$]
\label{ass:existence}
We assume there exists a minimiser $$\theta^\ast = \arg\min_{\theta\in\Theta}\ed(\P_\theta,\Q).$$
\end{assumption}
$\theta^\ast$ is what we would like to estimate if it exists, as $\P_{\theta^\ast} \in  \{\P_\theta, \theta\in\Theta\}$ is the `closest' distribution to $\Q$ generated by $\textsl{g}_\theta$.

In practice, it is not common to have direct access to $\Q$; instead, we generally have access to empirical observations from $\Q$: $\{y_1,\ldots,y_n\}$, forming the empirical measure $\Q^n = \frac{1}{n}\sum_{i=1}^n\delta_{y_i}$. When $\textsl{g}_\theta$ is analytically accessible, an estimator could be found as
\begin{align*}
    \hat\theta_n = \arg\min_{\theta\in\Theta}\ed(\P_\theta,\Q^n),
\end{align*} 
where
\[
\ed(\P_\theta,\Q^n) = \frac{2}{n}\sum_{j=1}^n \E_{X\sim \P_\theta}\|X-y_j\| - \E_{X, X'\sim \P_\theta}\|X-X'\|-\frac{1}{n(n-1)}\sum_{j=1}^n\sum_{\substack{j'=1\\j\neq j'}}^n\|y_j-y_{j'}\|.
\]
If $\textsl{g}_\theta$ is not directly accessible, but samples can be independently drawn from it, the estimator then becomes
\begin{align*}
    \hat\theta_{m,n} = \arg\min_{\theta\in\Theta}\ed(\P^m_\theta,\Q^n),
\end{align*}
where 
\[
\ed(\P^m_\theta,\Q^n) = \frac{2}{nm}\sum_{i=1}^m\sum_{j=1}^n \|x_i-y_j\| - \frac{1}{m(m-1)}\sum_{i=1}^m\sum_{\substack{i'=1\\i'\neq i}}^m\|x_i-x_{i'}\|-\frac{1}{n(n-1)}\sum_{j=1}^n\sum_{\substack{j'=1\\j\neq j'}}^n\|y_j-y_{j'}\|.
\] 
$\ed(\P^m_\theta,\Q^n)$ aligns with the finite-sample objective functions defined in \Cref{sec:finite_frengression}.

Under the assumption that the minimiser $\theta^\ast = \argmin_{\theta\in\Theta}\mmd(\P_\theta,\Q)$ exists, \citet{briol19inference} proved the consistency of  $\hat\theta_{n}$ and $\hat\theta_{m,n}$, i.e.~$\lim_{n\to\infty}\hat\theta_{n}=\theta^\ast$ and $\lim_{m,n\to\infty}\hat\theta_{m,n}=\theta^\ast$,  when the kernel used in the MMD is bounded. However,  frengression is trained using energy distance, which is a special case of MMD employing an unbounded, Euclidean distance-based kernel. Although it is reasonable to argue that a bounded covariate space may lead to a bounded kernel in practice, our goal is to extend the theoretical results of \citeauthor{briol19inference} to cover the unbounded kernels used in energy distance, under tail conditions on the data. Here we provide the consistency result under the following alternative two assumptions.

\begin{assumption}[Sub-exponential tails]
\label{ass:tail}
There exist constants $C,\kappa>0$ such that, for every probability measure $\P$
appearing in the analysis, a random variable $X\sim \P$ satisfies
$$
\Pr(\|X\|>t) \leq C\exp(-\kappa t), \qquad\forall t>0.
$$
\end{assumption}

We first present the convergence result based on energy distance:
\begin{theorem}
\label{thm:convergence}
    Suppose:
    \begin{enumerate}
        \item[(a)] \Cref{ass:existence} and \Cref{ass:tail} hold;
        \item[(b)] the model is correctly specified, that is, $\P_{\theta^\ast}\overset{d}{=}\Q$.
    \end{enumerate}
    Then
    \begin{align*}
        \ed \big\{\P_{\hat{\theta}_n},\Q\big\} \to 0 \quad\text{in probability,}\qquad\text{and hence}\qquad \P_{\hat{\theta}_n}\Rightarrow\Q \quad\text{in probability.}
    \end{align*}
    If, in addition,
    \begin{enumerate}
        \item[(c)] $\Theta$ is compact, and
        \item[(d)] for each $\nu$, the map $\theta\mapsto \textsl{g}_\theta(\nu)$ is Lipschitz on $\Theta$, that is, $\|\textsl{g}_\theta(\nu)-\textsl{g}_{\theta'}(\nu)\|\le L(\nu)\,\|\theta-\theta'\|$ for all $\theta,\theta'\in\Theta$, with  $L(\nu)$ satisfying $\E[\sqrt{L(\nu)}]<\infty$,
    \end{enumerate}
    then the same conclusions hold for the finite-sample estimator $\hat\theta_{m,n}$ as $m,n\to\infty$.
\end{theorem}

Applied to the integrated objectives \eqref{eq:empirical_g} and \eqref{eq:empirical_fh}, \Cref{thm:convergence} yields distributional consistency for each frengression component: under conditions (a)--(d), the fitted distributions converge weakly to those induced by the population minimisers, which by \Cref{prop:population_estimates} coincide with the true conditional and causal-margin distributions.

\begin{corollary}[Convergence of frengression components]
\label{cor:component_convergence}
Let $(\hat g, \hat f, \hat h)$ be the frengression estimators obtained by
minimising the empirical objectives \eqref{eq:empirical_g} and
\eqref{eq:empirical_fh}. Suppose Assumptions~\ref{ass:correct_model} and \ref{ass:identification} hold, together with conditions (a)--(d) of \Cref{thm:convergence}.
Then, as $m,n\to\infty$, for $\P_{ZX}$-almost every $(z,x)$ and $\P_X$-almost every $x$,
\begin{align*}
\hat g(\epsilon)               &\;\Rightarrow\; \P_{ZX}, \\
\hat f\big(x,\hat h(z,x,\xi)\big) &\;\Rightarrow\; \P_{Y\mid Z=z,X=x}, \\
\hat f(x,\eta)                 &\;\Rightarrow\; \P_{\Yx},
\end{align*}
where $\epsilon\sim\mathcal N(0,I_{d_z+d_x})$, $\eta\sim\mathcal N(0,I_{d_y})$
and $\xi\sim\mathcal N(0,I_{d_y})$ are drawn independently of the data.
\end{corollary}

The proof can be found in \Cref{sec:proof_consistency}.

Because the model is parameterised by a neural network, the minimiser $\theta^\ast$ need not be unique: distinct weights can induce the same $\P_\theta$. We therefore state consistency at the level of distributions, as in \Cref{thm:convergence}, rather than for the parameter $\theta$ itself. Correct specification enters only to identify the limit as $\Q$; without it, the same argument gives
$$
\ed(\P_{\hat\theta_n},\Q)-\inf_{\theta\in\Theta}\ed(\P_\theta,\Q)\to0\quad\text{in probability},
$$
so a fitted frengression attains the smallest energy distance to $\Q$ achievable within the model class. The proof is provided in \Cref{sec:proof_consistency}.

\section{Extrapolation}
\label{sec:extrapolation}
% \xinwei{should we merge this section and the previous section into one?}
For simplicity, in this section, we consider the case with univariate treatment $X$ and outcome $Y$ but allow for multiple covariates $Z$.
\begin{assumption}[Pre-ANM]\label{ass:preanm}
    The causal margin can be parametrised via a pre-additive noise model 
    \begin{equation*}
        Y = \widetilde{f}(X+\widetilde\eta),
    \end{equation*}
    where $\widetilde{f}$ is any strictly monotone and continuously differentiable function and $\widetilde\eta$ has a continuous, positive density on its support and is independent of $X$. Without loss of generality, assume $\widetilde\eta$ has a median 0 and is supported on $[-\widetilde\eta_0,\widetilde\eta_0]$ for some $\widetilde\eta_0>0$; we let $\widetilde\eta_0=\infty$ when $\widetilde\eta$ has an unbounded support. 
\end{assumption}
This assumption holds, for example, when the structural equation of $Y$ is given by $Y=\widetilde{f}(X+\widetilde{h}(Z,\widetilde\xi))$, where $\widetilde{h}$ is any bivariate function taking both $Z$ and $\widetilde\xi$ as arguments, the noise $\widetilde\xi\sim\mathcal{N}(0,1)$ can be correlated with $X$, and when the treatment does not causally affect the covariates. 

We study the extrapolation performance of frengression for median estimation of the causal margin. Let $\widetilde{m}(x)$ denote the true median of $\Yx$ for any $x$. When Assumption~\ref{ass:preanm} holds, the frengression model class can be smaller for Assumption~\ref{ass:correct_model} to hold. Specifically, we reparametrise the model for the causal margin as $f(X+f_1(\eta))$, where $f$ and $f_1$ are both strictly monotone and continuously differentiable functions, with the normalisation $f_1(0)=0$, without loss of generality \citep{shen2024engression}. Denote the median of the population frengression solution as $m^\ast(x)$, which is defined as the median of $f^\ast(x,\eta)$.

The following proposition indicates that frengression can recover the true median of causal margin for $x$ outside of the support. The proof can be found in \Cref{sec:proof_extrap}.
\begin{proposition}[Extrapolation]\label{prop:extrap}
    Let the support of $X$ be $\cX=[x_{\rm min}, x_{\rm max}]$. Under Assumption~\ref{ass:preanm}, we have 
    \begin{equation*}
        m^\ast(x) = \widetilde{m}(x)
    \end{equation*}
    for all $x\in[x_{\rm min}-\widetilde\eta_0, x_{\rm max}+\widetilde\eta_0]$.
\end{proposition}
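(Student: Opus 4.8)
The plan is to exploit the additive structure of the pre-ANM together with the population correctness guarantee (\Cref{prop:population_estimates}) to pin down the learned function $f^\ast$ on an interval strictly larger than $\cX$. Assume without loss of generality that $\widetilde{f}$, $f^\ast$, and $f_1^\ast$ are strictly increasing (the decreasing case follows by a sign flip). First I would record the true target: since $\widetilde\eta$ is independent of $X$ with median $0$ and $\widetilde{f}$ is monotone, the median of $\Yx=\widetilde{f}(x+\widetilde\eta)$ is $\widetilde{m}(x)=\widetilde{f}\big(x+\mathrm{median}(\widetilde\eta)\big)=\widetilde{f}(x)$. Likewise, writing the frengression causal margin in reparametrized form $f^\ast(x,\eta)=f^\ast\big(x+f_1^\ast(\eta)\big)$ with $\eta\sim\cN(0,1)$, monotonicity gives $m^\ast(x)=f^\ast\big(x+f_1^\ast(0)\big)$; set $c:=f_1^\ast(0)$.

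Next I would turn the distributional matching into a quantile identity. By \Cref{prop:population_estimates}, for every $x\in\cX=[x_{\min},x_{\max}]$ we have $f^\ast\big(x+f_1^\ast(\eta)\big)\eqD\widetilde{f}(x+\widetilde\eta)$. Since both sides are increasing pushforwards of the respective noise variables, applying the (strictly increasing, hence invertible) quantile maps yields, for each level $q\in(0,1)$,
\[
f^\ast\big(x+f_1^\ast(Q_\eta(q))\big)=\widetilde{f}\big(x+Q_{\widetilde\eta}(q)\big),\qquad x\in[x_{\min},x_{\max}],
\]
where $Q_\eta$ and $Q_{\widetilde\eta}$ denote the quantile functions of $\eta$ and $\widetilde\eta$.

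The crux is disentangling the two unknown monotone functions from this single family of identities. Fixing two levels $q_1,q_2$ and writing $a_i=Q_\eta(q_i)$, $b_i=Q_{\widetilde\eta}(q_i)$, the substitution $u=x+b_i$ rewrites the identity as $f^\ast\big(u+(f_1^\ast(a_i)-b_i)\big)=\widetilde{f}(u)$ on the overlapping range of $u$; injectivity of $f^\ast$ then forces $f_1^\ast(a_1)-b_1=f_1^\ast(a_2)-b_2$, so the shift $f_1^\ast(Q_\eta(q))-Q_{\widetilde\eta}(q)$ is a constant, necessarily equal to $c=f_1^\ast(0)$ on evaluating at $q=1/2$ (where both quantiles vanish). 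Hence $f^\ast(u+c)=\widetilde{f}(u)$ holds on the union of the intervals $[x_{\min}+b,\,x_{\max}+b]$ as $b=Q_{\widetilde\eta}(q)$ ranges over $(-\widetilde\eta_0,\widetilde\eta_0)$, that is, on $(x_{\min}-\widetilde\eta_0,\,x_{\max}+\widetilde\eta_0)$.

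Finally, I would read off the conclusion: for any $x$ in that enlarged interval, $m^\ast(x)=f^\ast(x+c)=\widetilde{f}(x)=\widetilde{m}(x)$, with the endpoints recovered by continuity and monotonicity of both sides. The main obstacle I anticipate is precisely the disentangling step: a single value of $x$ identifies $f^\ast$ at only one argument per quantile level, so it is essential to use the \emph{continuum} of noise quantiles to let $f_1^\ast(\eta)$ sweep the argument of $f^\ast$ across the whole range $\pm\widetilde\eta_0$ -- this is the mechanism by which matching only on $\cX$ determines $f^\ast$, and thus the median, on the strictly larger extrapolation interval. Some care is also needed to justify the quantile-inversion step and the bounded range of $f_1^\ast$ (a strictly monotone map carrying the full-support Gaussian onto the bounded set $(-\widetilde\eta_0+c,\,\widetilde\eta_0+c)$), but these are routine consequences of strict monotonicity.
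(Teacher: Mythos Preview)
Your proof is correct and tracks the paper's argument closely: both invoke \Cref{prop:population_estimates} to obtain a distributional identity on $\cX$, pass to a pointwise quantile identity, disentangle the two monotone pieces, and then read off the median on the enlarged interval. The only substantive difference is in the disentangling step. The paper writes $\widetilde\eta\overset{d}{=}\widetilde{f}_1(\widetilde\xi)$ with $\widetilde\xi\sim\cN(0,1)$, differentiates $\widetilde{f}(x+\widetilde{f}_1(\alpha))=f^\ast(x+f^\ast_1(\alpha))$ in $x$ and in $\alpha$, and takes the ratio to get $\widetilde{f}_1'={f^\ast_1}'$, hence $\widetilde{f}_1=f^\ast_1+c$. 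Your route instead fixes two quantile levels, translates by $u=x+b_i$, and uses injectivity of $f^\ast$ to force the shift $f_1^\ast(Q_\eta(q))-Q_{\widetilde\eta}(q)$ to be constant. Your argument is more elementary in that it does not need differentiability of $\widetilde{f}$, $f^\ast$, or the noise quantile functions (which the paper tacitly uses but does not assume); the derivative computation, on the other hand, avoids any bookkeeping about domains.

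One small gap to close: the injectivity step only applies where the translated intervals $[x_{\min}+b_1,x_{\max}+b_1]$ and $[x_{\min}+b_2,x_{\max}+b_2]$ overlap, i.e.\ when $|b_1-b_2|<x_{\max}-x_{\min}$. If $\widetilde\eta_0$ exceeds the width of $\cX$ this can fail for distant quantile levels. The fix is immediate: the map $q\mapsto f_1^\ast(Q_\eta(q))-Q_{\widetilde\eta}(q)$ is continuous and, by your argument, locally constant on the connected interval $(0,1)$, hence globally constant; alternatively chain through nearby quantile levels. With that patch your proof is complete.
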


% \xinwei{below is old.}

% Consider structural equation
% \[
% Y = f(X+\alpha Z+\tau)
% \]
% where $f$ is strictly monotone and $\tau$ follows an (unknown) continuous distribution with median 0. 

% \begin{proposition}
% If two models $Y=f(X+\alpha Z+\tau)$ and $Y=\widetilde{f}(X+\widetilde\alpha Z+\widetilde\tau)$ induce the same conditional distribution of $Y|Z,X$, then we have $\alpha=\widetilde\alpha$, $\tau\sim \widetilde\tau$, and
% \begin{equation*}
%     f(\widetilde{x}) = \widetilde{f}(\widetilde{x})
% \end{equation*}
% for all $\widetilde{x}\in\{x+\alpha z+\tau:x\in\mathcal{X},z\in\mathcal{Z},\tau\in\mathcal{T}\}$, where $\mathcal{X}$, $\mathcal{Z}$ and $\mathcal{T}$ are the supports for $X$, $Z$, $\tau$, respectively.
% \end{proposition}
The experimental results provided in \Cref{fig:adrf_estimate} show that frengression maintains accurate  estimates even when extrapolating beyond the observed values of treatment $X$.

\section{Frengression on Sequential Data}
\label{sec:frengression_seq}

% \begin{figure}[ht]
\begin{figure}[!tb]
  \centering
\begin{tikzpicture}[
  global/.style={circle, draw=orange!80, %fill=orange!20, 
  thick, minimum size=10mm, inner sep=0.5mm},
  covariate/.style={circle, draw=green!50, %fill=green!20, 
  thick, minimum size=10mm, inner sep=0.5mm},
  treatment/.style={circle, draw=cyan!80, %fill=cyan!20, 
  thick, minimum size=10mm, inner sep=0.5mm},
  outcome/.style={circle, draw=black, fill=white, thick, minimum size=10mm, inner sep=0.5mm},
  edgeC/.style={-Stealth, orange!80, thick},
  edgeZ/.style={-Stealth, green!50, thick},
  edgeX/.style={-Stealth, cyan!80, thick},
  edgeY/.style={-Stealth, black, thick},
  node distance=1.2cm and 2.4cm, 
]

% Nodes
\node[global]           (C)    {$C$};

\node[covariate]           (Z0)   [right=of C,    yshift=+7mm] {$\overline{Z}_{t-1}$};
\node[covariate]           (Z1)   [right=of Z0]             {$Z_t$};

\node[treatment]         (X0)   [below=of Z0]             {$\overline{X}_{t-1}$};
\node[treatment]         (X1)   [below=of Z1]             {$X_t$};

\node[outcome]         (Y0)   [below=of X0, xshift = +16mm]             {$\overline{Y}_{t-1}$};
\node[outcome]         (Y1)   [below=of X1, xshift = +16mm]             {$Y_t$};

% Edges from C to everything
% \foreach \target in {Z0,Z1,Zk,X0,X1,Xk,Y0,Y1,Yk} {
%   \draw[edgeC] (C) -- (\target);
% }
% \foreach \t in {X0,X1,Xk} {
%   \draw[edgeC] (C) -- (\t);
% }

\draw[edgeC] (C) -- (X0);
\draw[edgeC] (C) -- (X1);

\draw[edgeC, bend right=40] (C) to (Y0);
\draw[edgeC, bend right=55] (C) to (Y1);

\draw[edgeC, bend left=35] (C) to (Z0);
\draw[edgeC, bend left=40] (C) to (Z1);

% Z_i -> X_i
\foreach \i in {0,1} {
  \draw[edgeZ] (Z\i) -- (X\i);
}
\draw[edgeZ] (Z0) -- (X1);

% X_i -> Y_i
\foreach \i in {0,1} {
  \draw[edgeX] (X\i) -- (Y\i);
}
\draw[edgeX] (X0) -- (Y1);

% Z_i -> Z_i+1
\draw[edgeZ] (Z0) -- (Z1);
            %(Z1) -- (dotsZ);

% temporal link X_{i-1} -> Z_i
\draw[edgeX] (X0) -- (Z1);
            % (X1) -- (dotsZ);

% X_i -> X_i+1
\draw[edgeX] (X0) -- (X1);

% Z_i -> Y_i
\foreach \i in {0,1} {
  \draw[edgeZ] (Z\i) -- (Y\i);
}
\draw[edgeZ] (Z0) -- (Y1);

% Y_i -> Z_i

% \draw[edgeY] (Y0) -- (Z1);
% \draw[edgeY] (Y0) -- (X1);

% % Y_i -> X_i

% \draw[edgeY, dashed] (Y0) -- (X1);
\end{tikzpicture}
\caption{Example diagram of longitudinal data. For simplicity we present one time point $t$ as an example, and we show that $(Z_t, X_t, Y_t)$ are connected with all previous time steps. As with the static case, there are no dimensional constraints on the variables.}
\label{fig:longitudinal_demonstration}
\end{figure}
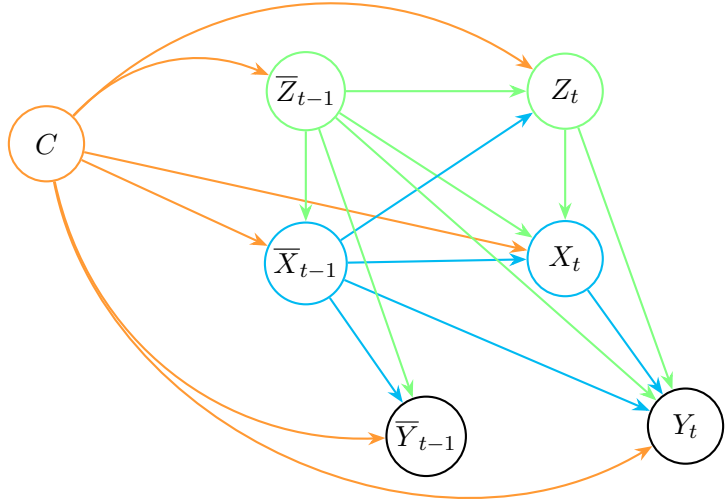

The discussion in Sections \ref{sec:method} and \ref{sec:consistency} introduced the basic frengression machinery in the static case. Many scientific questions, however, involve time-varying decision points, such as longitudinal studies and survival analysis, where treatment decisions may change over time in response to evolving patient histories. 
The frugal parameterisation is particularly helpful in longitudinal analysis, as the time-varying covariates are marginalised out. Leveraging this idea, especially the recursive/nested variant of frugal parameterisation introduced in Section 6 of \citet{evans2024parameterizing}, frengression extends naturally to the longitudinal context.

An illustrative causal diagram is given in \Cref{fig:longitudinal_demonstration}.  
Let $C\in\mathbb{R}^{d_c}$ denote the static, pre-treatment baseline covariates; for the ease of notation, we use $Z$, $X$, $Y$ to denote the vector: 
$Z=(Z_0,\ldots,Z_{T-1})$ with $Z_t\in\mathbb{R}^{d_z}$ the time-varying covariates;  
$X=(X_0,\ldots,X_{T-1})$, with $X_t\in\mathbb{R}^{d_x}$ the sequence of interventions; and  
$Y=(Y_0,\ldots,Y_{T-1})$, $Y_t\in\R^{d_y}$ the observed outcomes.  In the special case of survival-valued outcomes, $Y_t\in\{0,1\}$ with $Y_t=1$ indicating an event occurrence; simulation stops after such an event.  
Other than survival variables, all can assume continuous or discrete values. 

\subsection{Model}
We illustrate the following sections with an example where the interventional distribution of interest is $\P_{\Yx|C}$. 

For data as described above, we write the sequential frengression as $\varphi_s$. We first build a model to generate the baseline covariates $C$ using the unconditional engression $g_c(\varsigma)$, where $\varsigma\sim \cN(0, I_{d_c})$. Then we build the sequential frengression $(\varphi_0,\ldots, \varphi_{T-1})$, where $\varphi_t = (g_t,f_t, h_t)$ for $t=0,\ldots,T-1$. For each $t \in \{0,\ldots,T-1\}$:
\begin{itemize}
    \item When $t=0$, build $g_0(c, \epsilon)$ for the joint distribution $\P_{Z_0 X_0|C}$; otherwise $g_t(c,\overline{z}_{t-1}, \overline{x}_{t-1}, \epsilon)$ for the joint distribution $\P_{Z_t X_t|C \overline{Z}_{t-1} \overline{X}_{t-1}}$. In both cases, $\epsilon \sim \cN(0, I_{d_z+d_x})$.
    \item $Y=f_t(c,\overline{x}_t,\eta)$, $\eta\sim \cN(0,I_{d_y})$ for the interventional distribution $
    \P_{Y_t(\overline{x}_t)|C}$ of treatment values up to time $t$ given baseline covariates $C$.
    \item $\widetilde\eta = h_t(c,\overline{z}_{t},\overline{x}_{t},\xi)$ where $\xi \sim \cN(0,I_{d_y})$; this term is used for modelling the association.
\end{itemize}
The sequential frengression consists of all the components above: $\varphi_s=(g_c(\varsigma),\varphi_0,\varphi_1,\ldots,\varphi_{T-1})$. For convenience, we also denote $g=(g_0, \ldots, g_{T-1})$, $f=(f_0, \ldots, f_{T-1})$ and $h=(h_0, \ldots, h_{T-1})$, then $\varphi_s$ can also be expressed as $\varphi_s=(g_c,g,f,h)$.

For longitudinal settings we introduce two variants of frengression, \textsl{FrengressionSeq} and \textsl{FrengressionSurv}.  
Both inherit the modelling architecture described above; the distinction is that \textsl{FrengressionSurv} is tailored to survival data, where the effective sample size shrinks over time as events occur.  
Accordingly, each $\varphi_t$ is trained on the subpopulation that is still at risk at time $t$.
This renormalisation does not affect the generative procedure, because at simulation time we censor a synthetic trajectory as soon as an event occurs and remove it from all subsequent steps, ensuring that the simulated data follow the same at-risk pattern as the training cohort.  
These sequential variants extend frengression to right-censored settings while preserving its generative flexibility.

\subsection{Estimation}

The population objective for $g_c$ is the analogue of \eqref{eqn:population_g} with $(Z_i,X_i)$ replaced by $C_i$. Estimation objectives of $g, f, h$ are the same as in \eqref{eqn:population_g} and \eqref{eqn:population_fh} respectively, only now the outcome variable contains $T$ time steps of dimension $d_y\times T$, and here we specify causal estimand of interest to be conditional on baseline covariates, which can be customised according to the specific question of interest. More formally, the population solution to $\varphi_s$  can be expressed as:
\begin{align*}
    g^\ast_c&:=\argmin_{g_c}\E\Big[\big\|C - g_c(\epsilon_c)\big\| - \frac12\big\|g_c(\epsilon_c) - g_c(\epsilon_c')\big\|\Big]\\
   g^\ast&:=(g_0^\ast,\ldots,g_{T-1}^\ast)=\argmin_{g}\E\Big[\big\|(Z,X) - g(C,\epsilon)\big\| - \frac12\big\|g(C,\epsilon) - g(C,\epsilon')\big\|\Big]\\
   	(f^\ast,h^\ast)&:= (f_0^\ast,\ldots, f_{T-1}^\ast, h_0^\ast,\ldots, h_{T-1}^\ast)\\
    &=\argmin_{f,h}\E\Big[\big\|Y - f(C,X,\widetilde\eta)\big\| - \frac12\big\| f(C,X,\widetilde\eta) -  f(C,X,\widetilde\eta')\big\|\Big] 
    + \E\Big[\big\|\eta-\bar\eta\big\|-\frac12\big\|\bar\eta - \bar\eta'\big\|\Big]
\end{align*}
Here $g(C,\epsilon)$ denotes the sequential generator obtained by composing $g_0,\ldots,g_{T-1}$ recursively over time, where $\epsilon_c,\epsilon_c' \overset{\rm i.i.d.}{\sim} \cN(0,I_{d_c})$; $\epsilon,\epsilon'\overset{i.i.d}{\sim}\cN(0,I_{(d_z+d_x)\times T})$; $\widetilde\eta=h(C,Z,X,\xi)$, $\widetilde\eta'=h(C,Z,X,\xi')$,  $\bar\eta=h(C,\bar{Z},\bar{X},\xi'')$, and $\bar\eta'=h(C,\bar{Z}',\bar{X},\xi''')$ with $\eta\sim \cN(0,I_{d_y\times T})$, $\xi,\xi',\xi'',\xi'''\overset{\rm i.i.d.}{\sim}\cN(0, I_{d_y\times T})$, $(Z,X)\sim\P_{ZX|C}$, $\bar{X}\sim \P_X$ and $\bar{Z},\bar{Z}'\overset{\rm i.i.d.}{\sim}\P_{Z|C\overline{X}_{T-2}}$. Similar to that in the static setting, we learn auxiliary models $e:=(e_0,e_1,\ldots e_{T-1})$ to match $\P_{Z|C\overline{X}_{T-2}}$ via standard engression models at each time step $t$:
\begin{align*}
    e_0^\ast &:= \argmin_{e_0} \E\Big[\|Z_0 - e_0(C,\zeta_0)\| - \frac12\|e_0(C,\zeta_0) - e_0 (C,\zeta_0')\|\Big] \\
    \text{and} \quad e_t^\ast &:= \argmin_{e_t} \E\Big[\|Z_t - e_t(C,\overline{X}_{t-1}, \zeta_t)\| - \frac12\|e_t(C,\overline{X}_{t-1},\zeta_t) - e_t (C,\overline{X}_{t-1},\zeta_t')\|\Big],
\end{align*}
where $\zeta_0,\zeta_0', \zeta_t,\zeta_t' \overset{\rm i.i.d.}{\sim} \mathcal{N}(0,I_{d_{z}})$, for $t\in\{1,\ldots,T-1\}$.

Because treatments vary over time, we impose the standard \emph{sequential conditional exchangeability} assumption to ensure
identification.  For detailed discussions, see \citet{miguel2023causal} and
\citet[Remark 1.5; Sections 6–7]{evans2024parameterizing}. With this assumption in place, the population guarantees established for static frengression carry over to the sequential model at every time step.

% \begin{corollary}[Model correctness,  consistency, and extrapolation]
% \label{cor:frengressionseq_properties}
% If at a single time point the static frengression model maintains the following properties  
% \begin{itemize}
%     \item correctness (\Cref{prop:population_estimates}),  
%     \item frengression's model consistency (\Cref{thm:consistency}), 
%     \item  extrapolation on continuous treatments (\Cref{prop:extrap})
% \end{itemize}
% then the joint sequential model  
% $\varphi_s=(g_c,\varphi_0,\ldots,\varphi_{T-1})$  
% inherits all three properties.
% \end{corollary}
\begin{corollary}[Sequential model preserves correctness]
\label{cor:frengressionseq_correctness}
If at each time point the static frengression model satisfies the correctness property (\Cref{prop:population_estimates}), then the joint sequential model also satisfies correctness.
\end{corollary}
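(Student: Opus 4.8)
The plan is to prove joint correctness by telescoping the per-time-point guarantees of \Cref{prop:population_estimates} along the time order, using the recursive (nested) frugal factorization of Section~6 of \citet{evans2024parameterizing}. Joint ``correctness'' means three things for $\varphi_s$: the covariate/treatment components recover the joint observational law $\P_{CZX}$; the association components recover the observational conditional $\P_{Y\mid CZX}$; and the causal-margin components recover the target interventional law $\P_{\Yx\mid C}$. I would treat the two observational claims first and the interventional claim last, since the latter is the genuine obstacle.

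First I would write down the chain-rule factorization of the observational law induced by the DAG of \Cref{fig:longitudinal_demonstration}, in the topological order $C \prec (Z_0,X_0) \prec Y_0 \prec \cdots \prec (Z_{T-1},X_{T-1}) \prec Y_{T-1}$:
\[
p_{CZXY} = p_C \prod_{t=0}^{T-1} p_{Z_t X_t \mid C\,\overline{Z}_{t-1}\,\overline{X}_{t-1}}\; p_{Y_t \mid C\,\overline{Z}_t\,\overline{X}_t}.
\]
Each factor is exactly the object one static frengression component targets: $g_c$ for $p_C$, $g_t$ for the $(Z_t,X_t)$ factor, and $(f_t,h_t)$ for the $Y_t$ factor via $f_t\big(c,\overline{x}_t,h_t(c,\overline{z}_t,\overline{x}_t,\xi)\big)$. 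I would then run an induction on $t$: assuming the sampler reproduces the true law of the prefix $(\widehat C,\overline{\widehat Z}_{t-1},\overline{\widehat X}_{t-1},\overline{\widehat Y}_{t-1})$, I apply the elementary composition-of-kernels fact---if a prefix has its true marginal and a conditional generator matches the true conditional given every realization of that prefix, then the extended tuple has the true joint law. Per-time correctness (\Cref{prop:population_estimates} applied to $\varphi_t$ with the past held fixed) supplies this matching for both the $(Z_t,X_t)$ and the $Y_t$ steps, so the induction closes and yields $\P_{CZX}$ and $\P_{Y\mid CZX}$; the base case is the unconditional engression guarantee $g_c(\varsigma)\sim\P_C$.

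The interventional claim is the hard part, and I expect it to require the most care. Unlike the factors above, $\P_{\Yx\mid C}$ is not a piece of the observational law; under the sequential conditional exchangeability assumption it is identified by the g-formula, which substitutes the intervention $\overline{x}$ for the observed treatments and integrates the outcome factors against the \emph{interventional} covariate process $\prod_t \P_{Z_t\mid C\,\overline{Z}_{t-1}\,\overline{x}_{t-1}}$. Two points need attention: (i) the sampler must draw the covariate trajectory from the interventional kernels rather than the observational ones, which is precisely the role of the auxiliary models $e_t$ whose correctness is part of the per-time guarantee; and (ii) the target is the \emph{joint} law of $(Y_0(x),\dots,Y_{T-1}(x))$, so marginal correctness of each causal-margin component $f_t$ is not enough on its own---the cross-time dependence, arising from the single shared covariate trajectory fed into every outcome step, must be tracked. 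I would close the argument by applying the same composition-of-kernels induction to the g-formula factorization: per-time correctness certifies that each interventional factor $f_t(c,\overline{x}_t,\cdot)$ matches $\P_{Y_t(\overline{x}_t)\mid C}$ and each $e_t$ matches the interventional covariate kernel, and the recursive frugal structure of \citet{evans2024parameterizing} guarantees that composing these factors in time order reconstructs exactly the g-formula integrand, hence $\P_{\Yx\mid C}$. The delicate bookkeeping---aligning the frengression sampling order with the g-formula integrand, and checking that variation independence is preserved at each step so the per-time components do not interfere---is where the proof will demand the most attention.
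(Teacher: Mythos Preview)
Your induction-on-$t$ skeleton is exactly the paper's proof, which is a three-line argument: the base case $t=0$ reduces to the static result, and the inductive step applies \Cref{prop:population_estimates} to $\varphi_t$ given the history. Your elaboration of the observational claims via the chain-rule factorization and the composition-of-kernels argument is a correct fleshing-out of what the paper leaves entirely implicit.

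Your treatment of the interventional part, however, over-reads the statement. The sequential model's causal-margin component $f_t(c,\overline{x}_t,\eta)$ targets only the \emph{marginal} $\P_{Y_t(\overline{x}_t)\mid C}$ at each $t$; the interventional sampler (Section~5.3, scenario~3) draws independent $\eta_t$'s with no covariate trajectory involved, so it does not attempt to recover the full joint law of $(Y_0(x),\dots,Y_{T-1}(x))$ or any cross-time dependence beyond what $(c,\overline{x})$ carries. ``Correctness'' in the sequential corollary is per-time-point, and that is exactly what a single application of \Cref{prop:population_estimates} to $\varphi_t$ delivers. The auxiliary models $e_t$ enter only at training time (supplying $\bar Z\sim\P_{Z\mid C\overline{X}_{t-1}}$ in the penalty term of the objective) so that the per-$t$ guarantee holds; they are not a sampling-time ingredient for the causal margin. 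Your concern~(ii) about tracking a shared interventional covariate trajectory is therefore unnecessary here---you are trying to prove a stronger claim than the paper makes or its brief proof supports.
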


\begin{corollary}[Sequential model preserves consistency]
\label{cor:frengressionseq_consistency}
If at each time point the static frengression model enjoys distributional consistency (\Cref{thm:convergence}), then the joint sequential model is likewise consistent.
\end{corollary}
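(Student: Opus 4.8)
The plan is to exploit the fact that the sequential frengression $\varphi_s=(g_c,\varphi_0,\ldots,\varphi_{T-1})$, together with the auxiliary models $e_0,\ldots,e_{T-1}$, is nothing more than a \emph{finite} collection of static frengression components, each fitted by its own energy-distance objective. First I would observe that the estimation problem decouples across time: the population objectives for $g_c$, for each $g_t$, for each pair $(f_t,h_t)$, and for each $e_t$ are separate energy-distance minimizations, and crucially each is trained by conditioning on the \emph{observed} history $(C,\overline{Z}_{t-1},\overline{X}_{t-1})$ rather than on simulated quantities. Hence no error is propagated between components during fitting, and each subproblem is structurally identical to the static energy-distance estimation problem analyzed in \Cref{thm:consistency}, merely with an enlarged but fixed-dimensional (since $T$ is finite) vector of conditioning predictors, so that the finite-sample objective of each component coincides with the $\ed(\P^m_\theta,\Q^n)$ form to which \Cref{thm:consistency} applies.

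Second, I would verify that Assumptions~\ref{ass:existence}--\ref{ass:finite_moment} transfer to each component. Uniqueness of the minimizer is assumed component-wise; the sub-exponential tail and finite second-moment conditions hold for each response block $(Z_t,X_t)$, for $Y_t$, and for the reparametrized noise $\eta$, because these are finite-dimensional marginals of a distribution already assumed to satisfy those conditions, and because enlarging the conditioning set does not enlarge the response support. With these in hand, \Cref{thm:consistency} applies verbatim to each component, yielding almost-sure convergence $\hat\theta^{(c)}_n\to\theta^{(c)\ast}$ and $\hat\theta^{(t)}_{m,n}\to\theta^{(t)\ast}$ for every time step $t$ and every block $g_t,f_t,h_t,e_t$.

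Third, I would pass from component-wise consistency to joint consistency via a finite-intersection argument: the event that \emph{all} of the finitely many parameter sequences converge to their population values is an intersection of finitely many probability-one events, hence itself has probability one. This establishes that the stacked parameter vector of $\varphi_s$ converges almost surely to $\theta_s^\ast$, which is the definition of consistency for the sequential model.

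The main obstacle will be lifting parameter-level convergence to convergence of the \emph{induced joint and interventional distribution} generated by the composed model, since at simulation time the output of $\varphi_t$ is fed as input to later components, so approximation errors could in principle compound across the $T$ steps. I would handle this by a finite induction on $t$ combined with the continuous mapping theorem: assuming each learned conditional map is continuous in its parameters, convergence of the history-generating components implies convergence in distribution of the simulated history $(\widehat{C},\overline{\widehat{Z}}_{t-1},\overline{\widehat{X}}_{t-1})$, and composing with the consistent component at step $t$ preserves distributional convergence. Because $T$ is finite, only finitely many such compositions occur, so the compounded error vanishes in the limit and the full sequential generator converges to its population counterpart, exactly paralleling the correctness argument of \Cref{cor:frengressionseq_correctness}.
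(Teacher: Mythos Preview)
Your proposal is correct and matches the paper's approach in spirit: both reduce the sequential claim to the static \Cref{thm:consistency} applied component by component. The paper's proof is much terser---it simply says ``induction on $t$: the base case is the static result, and the inductive step applies the static result to $\varphi_t$''---whereas you unpack why this works (the components are fitted on \emph{observed} histories so the training problems decouple, and joint almost-sure convergence follows from the finite intersection of probability-one events). Your fourth step, lifting parameter consistency to convergence of the induced generative distribution via the continuous mapping theorem, goes beyond what the corollary as stated (and the paper's proof) actually requires: \Cref{thm:consistency} is a parameter-convergence statement, and the corollary inherits that scope, so the distributional-composition argument is a worthwhile observation but not needed to close the proof.
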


\begin{corollary}[Sequential model preserves extrapolation]
\label{cor:frengressionseq_extrapolation}
If at each time point the static frengression model allows extrapolation on continuous treatments (\Cref{prop:extrap}), then the joint sequential model also admits extrapolation on continuous treatments at every time point.
\end{corollary}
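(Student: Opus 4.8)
The plan is to exploit the \emph{modularity} of the sequential construction: just as with \Cref{cor:frengressionseq_correctness} and \Cref{cor:frengressionseq_consistency}, the sequential frengression $\varphi_s = (g_c, g, f, h)$ factorizes into per-time-point components $\varphi_t = (g_t, f_t, h_t)$, and the entire causal margin at time $t$---namely the interventional distribution $\P_{Y_t(\overline{x}_t)|C}$---is carried by the single component $f_t(c, \overline{x}_t, \eta)$. Since each $f_t$ is itself a static frengression causal-margin model, the strategy is to apply \Cref{prop:extrap} to each $f_t$ in turn and then collect the conclusions across $t$, rather than to prove anything genuinely new about the joint object.

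First I would fix an arbitrary time point $t$ and condition on the baseline $C = c$ and the past treatment path $\overline{x}_{t-1}$ taking values in their observed supports. Invoking the sequential analogue of the pre-ANM \Cref{ass:preanm}, the causal margin $\P_{Y_t(\overline{x}_t)|C=c}$, regarded as a function of the \emph{current} treatment $x_t$, admits the representation $Y_t = \widetilde{f}_t(x_t + \widetilde\eta_t)$, where $\widetilde{f}_t$ is strictly monotone and $\widetilde\eta_t$---which may depend on $c$ and $\overline{x}_{t-1}$ through the earlier covariates---is independent of $x_t$ with median $0$ and support contained in $[-\widetilde\eta_0, \widetilde\eta_0]$. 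This is exactly the hypothesis under which \Cref{prop:extrap} holds, with the current treatment $x_t$ playing the role of the univariate $X$ and the conditioning pair $(c, \overline{x}_{t-1})$ absorbed into the covariate-dependent noise term. Consequently the population median $m_t^\ast(x_t)$ of $f_t^\ast(c, \overline{x}_t, \eta)$ coincides with the true conditional median $\widetilde{m}_t(x_t)$ of $\P_{Y_t(\overline{x}_t)|C=c}$ for all $x_t$ in the extended interval $[x_{t,\rm min} - \widetilde\eta_0,\, x_{t,\rm max} + \widetilde\eta_0]$.

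Because $t$ was arbitrary and the components $\{f_t\}_{t=0}^{T-1}$ are variation independent and estimated separately, the median-matching guarantee holds simultaneously at every time point, which is precisely the assertion that the joint sequential model admits extrapolation on continuous treatments at each time point. The main obstacle is the conditioning step: one must verify that holding $(c, \overline{x}_{t-1})$ fixed while varying only the current treatment $x_t$ genuinely preserves the univariate, strictly-monotone pre-ANM structure required by \Cref{prop:extrap}, so that all dependence on past treatments and covariates can be folded into $\widetilde\eta_t$ without interfering with the monotonicity argument that drives the extrapolation. A secondary subtlety is that the extended support must be understood per current-treatment coordinate, since \Cref{prop:extrap} is stated for a univariate $X$; extrapolating jointly in several treatment coordinates at once would require a multivariate refinement of the pre-ANM argument, which is outside the scope of the present corollary.
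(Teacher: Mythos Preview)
Your proposal is correct and follows essentially the same approach as the paper: both rely on the modularity of $\varphi_s$ into per-time-point components $\varphi_t$ and apply the static extrapolation result (\Cref{prop:extrap}) to each $f_t$ separately. The paper's proof is a terse joint argument for \Crefrange{cor:frengressionseq_correctness}{cor:frengressionseq_extrapolation} phrased as induction on $t$, whereas you fix an arbitrary $t$ and unpack the conditioning on $(c,\overline{x}_{t-1})$ explicitly; for this particular corollary the induction adds nothing because the extrapolation guarantee at time $t$ does not depend on the guarantee at earlier times, so your direct formulation is equivalent and arguably cleaner.
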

\begin{proof}[Proof of \Crefrange{cor:frengressionseq_correctness}{cor:frengressionseq_extrapolation}]
The proof proceeds by the induction on $t$.
For the base case $t=0$, the claims reduce to the static frengression result.
Now suppose the statement is true for all time steps up to $t-1$. Applying the static result to $\varphi_t$ establishes the properties for time $t$.
% 
% By induction, the properties hold for every  $t=0,\ldots,T-1$, completing the proof.
\end{proof}

\subsection{Sampling}
The sampling works similarly to the static frengression with sequential chaining. Similar to \Cref{sec:sampling_static}, we provide sampling from $\varphi_s^\ast$ for the following scenarios.
\begin{enumerate}
    \item Simulate $\widehat{C}$: Draw $\varsigma\sim \cN(0,I_{d_c})$ and apply the transformation $g_c^\ast(\varsigma)$.
    
    \item Simulate $(\widehat{Z},\widehat{X})\in \R^{T\times(d_z+d_x)}$: Specify $c$ (can be simulated from $g_c^\ast(\varsigma)$) and sample $\epsilon_0\sim\cN(0,I_{d_z+d_x})$; apply $g_0^\ast(\cdot)$ on $(c,\epsilon_0)$ to get $(\widehat{Z}_0, \widehat{X}_0)$. Then, for $t=1,\ldots, T-1$: simulate $\epsilon_t\sim\cN(0,I_{d_z+d_x})$, then $(\widehat{Z}_t, \widehat{X}_t) = g_t^\ast(c,\widehat{\overline{Z}}_{t-1},\widehat{\overline{X}}_{t-1},\epsilon_t)$. Putting them together we obtain $(\widehat{Z},\widehat{X})\in \R^{T\times(d_z+d_x)}$.
    
    \item Simulate $\widehat{Y}(\overline{x}_{T-1})$: specify $c$ (which can be simulated from $g_c^\ast$). For $t=0,\ldots, T-1$, sample $\eta_t \sim \cN(0,I_{d_y})$, then apply the transformation $f_t^\ast$ with sampled $\eta_t$, specified $c$, and specified $\overline{x}_t$ to get $\widehat{Y}_t=f_t^\ast(c,\overline{x}_t,\eta_t)$. Putting them together, we obtain $\widehat{Y} = (\widehat Y_0,\ldots,\widehat Y_{T-1})$.

    \item  Simulate from $\varphi_s$ to obtain $(\widehat{C},\widehat{\overline{Z}}_t,\widehat{\overline{X}}_t,\widehat{\overline{Y}}_t) \in \R^{d_c + T\times(d_z+d_x+d_y)}$: follow the instructions above to obtain $\widehat C$, $\widehat{Z}$, $\widehat{X}$. For $t=0,\ldots, T-1$, sample $\xi_t\sim\cN(0,I_{d_y})$, and apply the transformation $h_t^\ast $ on sampled  ($\widehat C,\widehat{\overline{Z}}_{t}, \widehat{\overline{X}}_t ,\xi_t$) to get $\widetilde\eta_t =h_t^\ast(\widehat C, \widehat{\overline{Z}}_{t}, \widehat{\overline{X}}_t, \xi_t)$. We obtain $\widehat Y_t$ via $\widehat Y_t=f_t^\ast(\widehat C,\widehat{\overline{X}}_t,\widetilde\eta_t)$.
\end{enumerate}

\begin{proposition}[Simulate data from joint distribution]
\label{prop:longitudinal_joint}
We take sequential conditional exchangeability and that \Cref{ass:correct_model} holds. The simulation procedure stated above generates a draw $\P_\varphi$ of $(C,Z,X,Y)$ whose distribution equals the observed joint distribution of $(C,Z,X,Y)$.
\end{proposition}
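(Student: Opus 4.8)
The plan is to prove the claim by induction on the time index $t$, showing that the joint law of the simulated prefix $(\widehat{C},\widehat{\overline{Z}}_t,\widehat{\overline{X}}_t,\widehat{\overline{Y}}_t)$ coincides with the corresponding observed joint distribution at every $t$, so that the case $t=T-1$ yields the full claim. The two ingredients I would rely on are (i) the nested/recursive frugal factorization of the observed joint distribution following Section~6 of \citet{evans2024parameterizing}, which under sequential conditional exchangeability decomposes $\P_{CZXY}$ into the variation-independent pieces targeted by $g_c$, $g_t$, $f_t$ and $h_t$; and (ii) \Cref{cor:frengressionseq_correctness}, which guarantees that each population component $(g_t^\ast,f_t^\ast,h_t^\ast)$ exactly reproduces its target conditional or interventional factor. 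Because the energy distance is a strictly proper divergence and \Cref{ass:correct_model} places the truth in the model class, achieving the population minimum means exact equality, so each component match is an identity of distributions rather than an approximation.

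First I would write the target factorization by the chain rule and the DAG in \Cref{fig:longitudinal_demonstration}:
\[
p(c,z,x,y) = p(c)\prod_{t=0}^{T-1} p_{Z_t X_t|C\overline{Z}_{t-1}\overline{X}_{t-1}}(z_t,x_t\mid c,\overline{z}_{t-1},\overline{x}_{t-1})\; p_{Y_t|C\overline{Z}_t\overline{X}_t}(y_t\mid c,\overline{z}_t,\overline{x}_t),
\]
using that $Y_t$ has no incoming edges from the past outcomes, so conditioning on $(C,\overline{Z}_t,\overline{X}_t)$ is sufficient. I would then pair each factor with the generative step that produces it: $g_c^\ast$ with $p(c)$, $g_t^\ast$ with the covariate–treatment factor, and the composition $f_t^\ast(\overline{x}_t,h_t^\ast(c,\overline{z}_t,\overline{x}_t,\xi))$ with the outcome factor $p_{Y_t|C\overline{Z}_t\overline{X}_t}$; this composition is precisely the conditional object identified in \Cref{remark:interventional_conditional} and guaranteed by \Cref{ass:correct_model}.

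The inductive step is then a conditioning argument. Assuming the simulated prefix $(\widehat{C},\widehat{\overline{Z}}_{t-1},\widehat{\overline{X}}_{t-1})$ has the correct joint law, feeding it into $g_t^\ast$ draws $(\widehat{Z}_t,\widehat{X}_t)$ from exactly $\P_{Z_t X_t|C\overline{Z}_{t-1}\overline{X}_{t-1}}$ evaluated at the simulated history, so the extended prefix $(\widehat{C},\widehat{\overline{Z}}_t,\widehat{\overline{X}}_t)$ matches $\P_{C\overline{Z}_t\overline{X}_t}$; iterating to $t=T-1$ gives $(\widehat{C},\widehat{Z},\widehat{X})\sim\P_{CZX}$. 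For the outcomes I would condition on the already-simulated $(\widehat{C},\widehat{Z},\widehat{X})$ and invoke correctness of $(f_t^\ast,h_t^\ast)$ to conclude that $\widehat{Y}_t=f_t^\ast(\widehat{\overline{X}}_t,h_t^\ast(\widehat{C},\widehat{\overline{Z}}_t,\widehat{\overline{X}}_t,\xi_t))\sim\P_{Y_t|C\overline{Z}_t\overline{X}_t}$, and multiplying these conditionals against $\P_{CZX}$ reconstructs the full joint factorization displayed above.

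I expect the main obstacle to be rigorously justifying the factorization and correct-targeting in the first step, rather than the composition bookkeeping of the induction: one must verify that, under sequential conditional exchangeability, the nested frugal parameterization genuinely isolates the interventional margin $\P_{Y_t(\overline{x}_t)|C}$ as the object estimated by $f_t$ while $h_t$ carries only the residual association, and that these pieces remain variation independent across time so that the inductively constructed product is a valid joint density. The delicate interaction is between the time-varying confounding (treatment affecting future covariates) and the marginalization implicit in the causal-margin component; I would handle this by appealing directly to the recursive construction in \citet[Sections 6--7]{evans2024parameterizing}, which establishes exactly this decomposition, so that the present proposition reduces to transporting that decomposition through the exact component matches supplied by \Cref{cor:frengressionseq_correctness}.
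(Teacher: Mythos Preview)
Your proposal is correct and follows essentially the same inductive strategy as the paper's proof: both argue by induction on $t$, invoke the correctness of the population components (via \Cref{prop:population_estimates}/\Cref{cor:frengressionseq_correctness}) to match each conditional factor, and assemble the joint via the chain rule. The only cosmetic difference is that you first run the induction on $(\widehat{C},\widehat{\overline{Z}}_t,\widehat{\overline{X}}_t)$ and append the outcomes afterwards (licensed by the absence of $Y\to(Z,X)$ edges in \Cref{fig:longitudinal_demonstration}), whereas the paper carries $\widehat{\overline{Y}}_{t-1}$ in the inductive hypothesis throughout; both orderings yield the same conclusion.
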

\begin{proof}
We proceed by induction on the time index $t$. 
For $t=0$, first sample $\widehat C\sim\P_{g_c}$.
Conditional on $\widehat C$, draw
$(\widehat Z_0,\widehat X_0)\sim \P_{Z_0X_0\mid C=\widehat C}$ by applying $g_0^\ast(\cdot)$ on $(\widehat C, \epsilon_0)$, where $\epsilon_0\sim\cN(0,I_{d_z+d_x})$. Draw $\xi_0\sim \cN(0, I_{d_y})$, then apply $h_0^\ast$ on sampled $\widehat C, \widehat Z_0,\widehat X_0, \xi_0$ to get $\widetilde{\eta}_0$. Obtain
$\widehat Y_0$ by applying $f_0^\ast(\cdot)$ on $(\widehat C, \widehat{X}_0, \widetilde{\eta}_0)$ . Guaranteed by \Cref{prop:population_estimates}, $(\widehat{Z}_0, \widehat{X}_0) \sim \P_{Z_0X_0|C=\widehat{C}}$, $\widehat Y_0\sim \P_{Y_0\mid C=\widehat C,Z_0=\widehat Z_0,X_0=\widehat X_0}$ . This takes care of the base case.

% By the definition of conditional probability,
% $\big(\widehat C,\widehat Z_0,\widehat X_0,\widehat Y_0\big)$ has the same joint distribution as
% $\P_{C Z_0 X_0 Y_0}$.

Suppose that for $t\geq 1$, $\big(\widehat C,\widehat{\overline Z}_{t-1},\widehat{\overline X}_{t-1},
        \widehat{\overline Y}_{t-1}\big)
  \sim
  \big(C,\overline Z_{t-1},\overline X_{t-1},\overline Y_{t-1}\big).$
We now extend to time $t$ by drawing 
$(\widehat Z_t,\widehat X_t)$ from the distribution of $(Z_t,X_t\mid C=\widehat C,
  \overline Z_{t-1}=\widehat{\overline Z}_{t-1},
  \overline X_{t-1}=\widehat{\overline X}_{t-1})$
and then
$\widehat Y_t$ is sampled from $Y_t\mid C=\widehat C,
  \overline Z_t=\widehat{\overline Z}_t,
  \overline X_t=\widehat{\overline X}_t$,
  using the procedure stated in the above simulation settings.
Applying the chain rule yields
$\big(\widehat C,\widehat{\overline Z}_t,\widehat{\overline X}_t,
        \widehat{\overline Y}_t\big)
  \sim
  \big(C,\overline Z_t,\overline X_t,\overline Y_t\big),$
which completes the inductive step.
% 
% By induction, for every $t=0,\dots,T-1$, we obtain
% $(\widehat C,\widehat Z,\widehat X,\widehat Y)\sim\P_\varphi$, proving the claim.
\end{proof}

\section{Numerical Examples on Supported Settings}
\label{sec:synthetic_exp}
In this section, we evaluate frengression on synthetic benchmarks under both static and time‑varying treatment regimes. Our evaluation focuses on two key aspects:
\begin{enumerate}
    \item \emph{Estimation and inference.} We evaluate how precisely frengression recovers the underlying data-generating process, especially the marginal interventional quantities. To establish ground truth, we either use existing datasets or simulate new ones with known interventional distributions, employing two packages built on frugal parameterisation: the \texttt{causl} package \citep{causl} for static treatment settings, and the \texttt{survivl} package \citep{survivl} for longitudinal and survival data. Simulating data this way gives us exact control over the marginal quantity targets.

    \item \emph{Simulation performance.} We assess frengression’s ability to sample new data that both preserves the original data’s properties and matches the specified marginal interventional distributions. By comparing the generated samples with the original targets, we verify that our model can capture and replicate complex distributional patterns.
\end{enumerate}

In the experiments that follow, we compare frengression against leading, specialised methods in each domain. While these competitors excel in their specific settings, none combine the estimation and simulation on static treatments, longitudinal dynamics, survival outcomes, and continuous treatments in a single framework. This breadth underscores frengression’s unique, comprehensive applicability. The code for reproducibility can be found at \href{https://github.com/xwshen51/frengression}{https://github.com/xwshen51/frengression}.

\subsection{Binary Intervention}

Estimating a treatment effect under a binary intervention via frengression does not require propensity score estimation or adjustment; instead, it directly models the marginal interventional quantity of interest. This contrasts with methods that average outcome‐regression predictions over the covariate distribution, whose performance can suffer when certain covariate combinations are rare or poorly represented. By focusing on the marginal causal parameter, frengression avoids the instability caused by extreme propensity score values. We assess its robustness under weak overlap scenarios, especially in finite samples, to demonstrate its ability to recover causal effects even when classical overlap conditions are violated. We first investigate how frengression handles weak overlap using synthetic data.

We simulate \(p = 10\) baseline covariates \(Z = (I, C)\), where the instrumental variables \(I \in \mathbb{R}^{5}\) and the confounders \(C \in \mathbb{R}^{5}\) are generated independently as standard normals. Treatment assignment follows $X\mid Z\sim\mathrm{Bernoulli}\big(\pi(Z)\big)$ with $\pi(Z)=\operatorname{expit}(\beta_I^\top I+\mathbf{1}^\top C)$, $\beta_I \in \R^5$ and the outcome is $\Yx\sim\mathcal{N}\big(2x,\,1\big)$, with Gaussian copulae between each confounder $C_j$, $j=1,\ldots,5$, and $Y$ with correlation $\rho = 2\operatorname{expit}(1)-1\approx 0.46$. By increasing the magnitude of the instrumental coefficients $\beta_I$, we push the propensity scores towards 0 or 1, creating a weak‐overlap scenario.

For each iteration, we simulate $N=5000$ observations and repeat the experiment for $K=30$ iterations.  We report bias, mean absolute error (MAE), and root mean squared error (RMSE) of the ATE estimates. We compare frengression against the augmented inverse probability weighted (AIPW) estimator (with logistic regression to estimate propensity score and Random Forests as the outcome regression model), Dragonnet, and CausalEGM. TARNet, CFRNet and CEVAE demonstrated sensitivity to hyperparameters and performed poorly, hence we exclude these results. Furthermore, these methods were primarily proposed for conditional treatment effect quantities, rather than marginal causal effects.

 % as $\tfrac{1}{K}\sum_{k=1}^K(\hat\theta_k-\theta)$
 % as $\tfrac{1}{K}\sum_{k=1}^K|\hat\theta_k-\theta|$
 %  as $\sqrt{\tfrac{1}{K}\sum_{k=1}^K(\hat\theta_k-\theta)^2}$
  
Results in \Cref{tab:synthetic_binary_results} show that the AIPW estimator’s variance increases as overlap worsens, while frengression remains stable. Even without per-dataset hyperparameter tuning, frengression matches or outperforms Dragonnet and CausalEGM in bias, MAE, and RMSE across all regimes. We present the experiment details in \Cref{sec:binary_exp_details}.

\begin{table}[ht]
\centering
\begin{tabular}{rrrrrr}
\toprule
 &  & Frengression & AIPW & CausalEGM & Dragonnet \\
$\beta_I$ & metric &  &  &  &  \\
\midrule
\multirow[t]{3}{*}{0.0} & RMSE & $\mathbf{0.056}$ & 0.290 & 0.151 & 0.070 \\
 & Bias & $\mathbf{-0.000}$ & 0.225 & $-0.117$ & 0.031 \\
 & MAE & $\mathbf{0.048}$ & 0.253 & 0.130 & 0.056 \\
\cline{1-6}
\multirow[t]{3}{*}{0.5} & RMSE & \textbf{0.076} & 0.290 & 0.149 & 0.114 \\
 & Bias & $\mathbf{-0.019}$ & 0.258 & $-0.124$ & 0.042 \\
 & MAE & $\mathbf{0.059}$ & 0.258 & 0.132 & 0.100 \\
\cline{1-6}
\multirow[t]{3}{*}{1.0} & RMSE & 0.071 & 0.849 & 0.193 & \textbf{0.055} \\
 & Bias & $\mathbf{-0.015}$ & 0.480 & $-0.173$ & \textbf{0.015} \\
 & MAE & 0.056 & 0.480 & 0.173 &\textbf{ 0.039} \\
\cline{1-6}
\multirow[t]{3}{*}{1.5} & RMSE & $\mathbf{0.086}$ & 0.690 & 0.262 & 0.155 \\
 & Bias & $\mathbf{-0.028}$ & 0.530 & $-0.245$ & 0.080 \\
 & MAE & $\mathbf{0.069}$ & 0.543 & 0.245 & 0.117 \\
\cline{1-6}
\multirow[t]{3}{*}{2.0} & RMSE & \textbf{0.084} & 0.524 & 0.287 & 0.113 \\
 & Bias & $\mathbf{-0.026}$ & 0.437 & $-0.274$ & 0.028 \\
 & MAE & \textbf{0.071} & 0.466 & 0.274 & 0.095 \\
\cline{1-6}
\bottomrule
\end{tabular}
\caption{Estimation performance across different degrees of overlap.}
\label{tab:synthetic_binary_results}
\end{table}

\subsection{Continuous Treatment}
In the context of continuous treatment, obtaining the average dose-response function (ADRF), defined as $\mu(x) = \E_Z\big[\E(\Yx\,|\, Z)\big]$, is typically of interest. To evaluate the models' performance, we use RMSE
% : $\sqrt{\frac{1}{n}\sum_{i=1}^n\big(\widehat \mu (x_i) - \mu (x_i)\big)^2}$
and mean absolute percentage error (MAPE).
% :  $\frac{1}{n}\sum_{i=1}^n \bigg|\frac{\widehat \mu(x_i)- \mu(x_i)}{\mu(x_i)}\bigg|$ 

We follow \citet{liu2024encoding} in the synthetic experiments for continuous treatment, using the same data: two synthetic datasets, \citet{hirano04continuous} and \citet{sun2015causal}; and one semi-synthetic dataset, Twins, containing weights, mortality and other covariates of all 71,345 pairs of twins born in the
USA between 1989--1991. For the two synthetic datasets, we trained the models on $1000$ samples. The detailed information can be found in the supplementary appendix of \citet{liu2024encoding}. We present the key information in \Cref{tab:continuous_dataset}.

% \begin{table}[ht]
\begin{table}[!tb]
\centering
\resizebox{\textwidth}{!}{%
\begin{tabular}{c||c|c|c}
\toprule
               & \textbf{Twins} 
               & \textbf{Hirano \& Imbens} 
               & \textbf{Sun} \\
\midrule
$p$            & 50  
               & 200  
               & 200 \\
\hline
$Z$            & \makecell[c]{Real data derived from \\ all births in the USA \\between 1989--1991.  }
               & $(Z_1,\ldots, Z_{p}) \overset{\rm i.i.d.}{\sim} \operatorname{Exp}(1)$ 
               & $(Z_1,\ldots, Z_{p}) \overset{\rm i.i.d.}{\sim} \cN(0,1)$  \\
\hline
$\Yx\cmid Z$            & \makecell[c]{
                   $\Yx=-\dfrac{2}{1+e^{-3x}} + z^\top\gamma + \epsilon$,\\
                   $\gamma_i\sim\mathcal{N}(0,0.025^2),\ \epsilon\sim\mathcal{N}(0,0.25^2)$
                }
               & $ \mathcal{N}\big(x+(Z_0+Z_2)e^{-x(Z_0+Z_2)},\,1\big)$
               & $\mathcal{N}\big(x+Z_1+\cos(Z_2)+Z_5^2+Z_6-\tfrac12,\,1\big)$ \\
\hline
$X\cmid Z$            & weights (observed)
               & $\exp(Z_0+Z_1)$
               & $\mathcal{N}\big(-2\sin(2Z_1)+Z_2^2+Z_3+\cos(Z_4)-\tfrac56,\,1\big)$ \\
\hline
$\mu(x)$        & $\displaystyle -\frac{2}{1+e^{-3x}} + \E\big[Z^\top\gamma\big]$
               & $\displaystyle x + \frac{2}{(1+x)^3}$
               & $\displaystyle x + 0.5 + e^{-0.5}$ \\
\bottomrule
\end{tabular}
}
\caption{Key information of the three datasets in continuous treatments, including the dimension of covariates $p$, the data generation process of $Z$, $X$, $Y$ and the ground truth ADRF $\mu(x)$.}
\label{tab:continuous_dataset}
\end{table}

% \begin{figure}[ht]
\begin{figure}[!tb]
    \centering
    \includegraphics[width=1\linewidth]{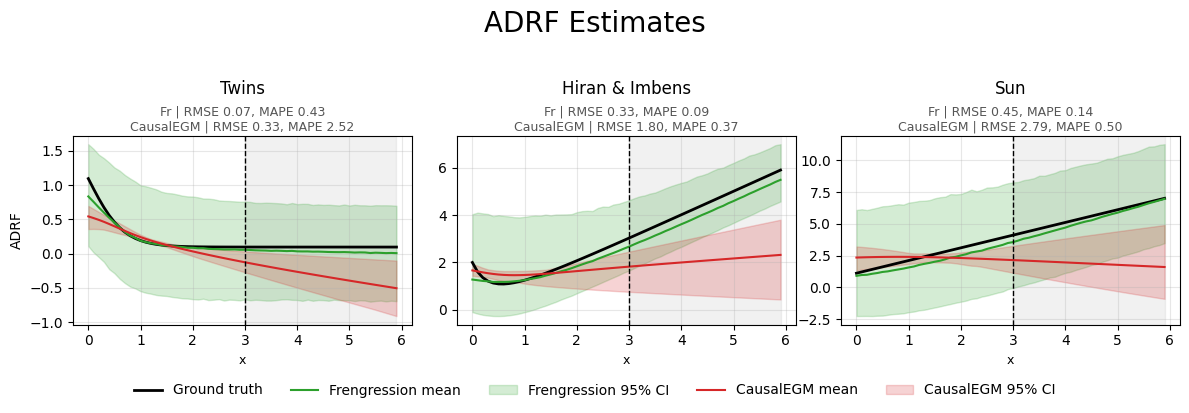}
    \caption{ADRF estimation, with RMSE and MAPE, frengression (Fr) and CausalEGM, across 30 simulations. We sample 1000 datapoints for training for each dataset. The grey area is not seen during training. We train both models for 1000 epochs in all simulations.}
    \label{fig:adrf_estimate}
\end{figure}

\Cref{fig:adrf_estimate} compares ADRF estimates from frengression and CausalEGM on the three benchmark datasets over $K=30$ simulations. We trained both models for $1000$ epochs in all experiments. For CausalEGM specifically, we used the best model configuration provided by their package. 

For each simulation $k$ and treatment level $x$, we compute
% $\hat{\mu}^k(x) = \frac{1}{1000}\sum_{j=1}^{1000} f^\ast(x, \widetilde\eta_{j}),
% $ i.e.~
the average of 1000 draws from the learned frengression model as the estimated treatment effect on level $x$.  We then aggregate across simulations to obtain the ensemble estimate $\hat{\mu}(x) = \frac{1}{K}\sum_{k=1}^K \hat{\mu}^k(x)$, plotted as a solid green line for frengression and a red line for CausalEGM. 
% The same logic is applied on CausalEGM, where the estimated treatment effect on each level at simulation $x_i$ is the output of the model.

To illustrate uncertainty, we add shaded bands corresponding to the 2.5\%--97.5\% interval. 
% \begin{itemize}
  % \item \textbf{Frengression:} 
  For the $k$th simulation of frengression, let $\hat{q}^k_{\alpha}(x)$ be the empirical $\alpha$ quantiles of the 1000 samples from $\hat \P^{*,k}_{\Yx}$.  The band at $x$ is then
$$
  \bigg[\frac{1}{K}\sum_{k=1}^K \hat{q}^k_{2.5\%}(x),
        \frac{1}{K}\sum_{k=1}^K \hat{q}^k_{97.5\%}(x)\bigg].
$$
%
  % \item \textbf{CausalEGM:} 
  Since CausalEGM produces point estimates only, we take the 2.5\% and 97.5\% sample quantiles of $\{\hat\mu^k(x)\}_{k=1}^K$ to form its shaded band.
% \end{itemize}

We assessed each model’s ability to recover the dose-response function on a treatment range unseen during training by withholding treatments in the interval $(3,6]$ from the fitting process. After training, we estimated the ADRF over the full treatment range, including the held-out region $x\in (3,6]$, and compared the estimated ADRF with the ground truth. As shown in \Cref{fig:adrf_estimate}, frengression consistently outperformed CausalEGM across all three benchmark datasets, showing superior accuracy both within the training range and in the extrapolation interval, and did so without any additional hyperparameter tuning. The estimated 2.5\% and 97.5\% quantile bands accurately bound the true ADRF across all treatment levels. 

\subsection{Longitudinal Setting \& Survival Data}
\label{sec:longitudinal}
We demonstrate the modelling on longitudinal data. To the best of our knowledge, only a few deep learning frameworks have been proposed regarding inference and simulation on marginal causal effects. The recent work 
of \citet{shirakawa2024longitudinal}, which combines the Transformer architecture and LTMLE, handles only  estimation of causal effects on binary treatment in the longitudinal context.

We designed four synthetic settings to showcase the flexibility of frengression. Settings 1--3 are designed for survival analysis, and setting 4 is longitudinal. In settings 1--3, we denote $\widetilde Y_t $ as the residual lifetime after surviving to time $t$. If $\widetilde Y_t<1$, then $Y_t=1$, otherwise $Y_t=0$. The details of each setting are given below in \Cref{sec:sequential_exp_details}. We provide the summary of key information as below:
    
\begin{itemize} 
    \item Setting 1: $T=10$. $C$ is drawn from a Bernoulli distribution. We set the treatment $X_t$ to be binary and dependent on $(Z_t, C)$; the time-varying covariate $Z_t$ is dependent on previous treatment $X_{t-1}$ and $C$; the outcome $\widetilde{Y}(\overline{x}_t)\cmid C$ follows an exponential distribution.  
    \item Setting 2: $T=10$. $C$ is drawn from an exponential distribution with rate 1 % $C\sim \operatorname{Exp}(1)$, 
    and $Z_0$ from a standard normal; $X_0 \mid Z_0 \sim \operatorname{Bernoulli}\big(\operatorname{expit}(-0.5+ 0.5Z_0)\big)$; the treatment $X_t$ is still binary, and is dependent on both $Z_t$  and the previous treatment $X_{t-1}$; the time-varying covariate $Z_t$ depends on $X_{t-1}$, $C$ and $Z_{t-1}$. The Spearman correlation between $Y_t$ and $Z_{t-1}$ is $0.2$, and that for $Y_t$ and $Z_{t}$ given $Z_{t-1}$ is $0.3$. Compared to setting 1, setting 2 has more complicated dependency.
    \item Setting 3: $T=10$. $C$ is drawn from a Bernoulli distribution.  We set $X_t$ to be continuous from a normal distribution, dependent on $Z_{t-1}$ and $C$. We use  a Gaussian copula with correlation $0.4$ between $Z_t$ and $Y_t$.  In this setting the treatment is continuous.
    \item Setting 4: $T=5$. $C$ is drawn from a standard normal distribution. We set $X_t$ to be continuous from normal distribution, and the outcome $Y_t (\overline{x}_t)\,|\, C \sim \mathcal{N}(2x_t + x_{t-1}+0.5x_{t-2}+ C,\,1)$. This models the longitudinal scenario: at time step $t$ we are able to observe a continuous outcome rather than just the indicator of event occurrence.
\end{itemize}

% \begin{figure}[ht]
\begin{figure}[!tb]
    \centering
    \includegraphics[width=1\linewidth]{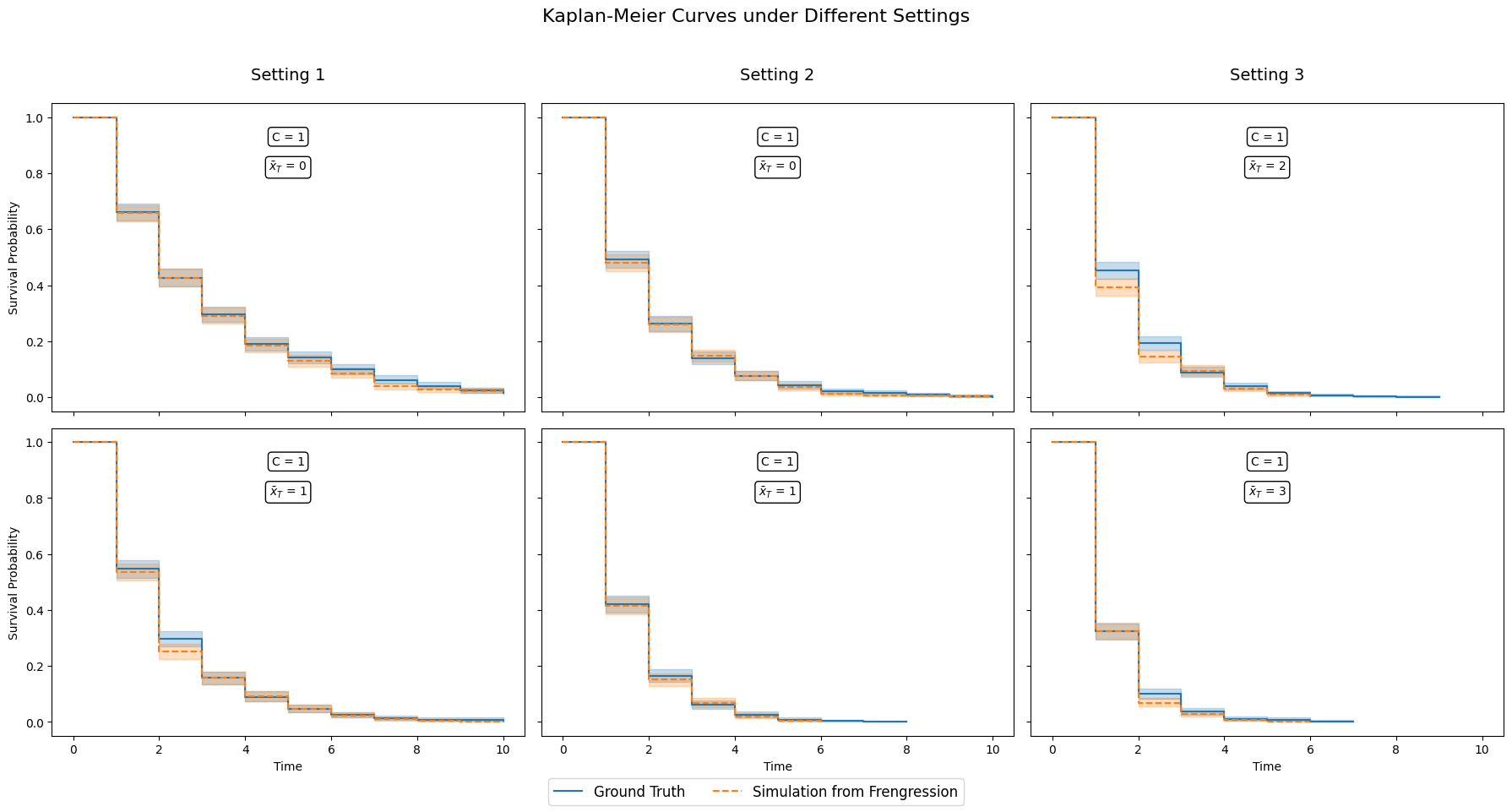}
    \caption{Kaplan-Meier curves of data generated from true data generating process and simulated from fitted frengression.}
    \label{fig:km_synthetic}
\end{figure}

In each setting, we simulate $n=6000$ samples for training. We then sample $1000$ datapoints from each fitted model. In \Cref{fig:km_synthetic}, we compare Kaplan-Meier curves fitted on the true data‐generating process with those simulated by frengression with intervention $x_t$ fixed the same across time. Note that frengression is trained on time‐varying covariates and interventions as described above. This supports our claim on the flexibility and comprehensiveness of the frengression model.

To create a performance baseline, we also fit Deep LTMLE using the \texttt{dltmle} package  \citep{dltmle} to setting 1.
Our causal estimand is the risk difference between always-treated and always-placebo regimens over the entire follow-up horizon
$
\theta = \mathbb{E} \,Y_{T-1}(\textbf{1}) - \mathbb{E} \,Y_{T-1}(\textbf{0}),
$
where, as before, $Y_t=1$ if the event has occurred by time $t$ and $0$ otherwise. If we set $T=5$, the true value of $\theta$ is $0.11$. Over 10 simulation runs, Deep LTMLE’s mean estimate was 0.13 (SD 0.03), and frengression’s was 0.09 (SD 0.04). These results are similar, and frengression achieves them without a separate targeting step or an explicit influence-function calculation, thus it can be applied to a broader class of causal estimands. We would also like to highlight the computational efficiency of frengression compared to \texttt{dltmle}. Hyperparameter tuning with \texttt{dltmle}'s default \texttt{Optuna} routine took about 30 minutes for 10 trials, plus 4 minutes to run 100 training epochs, whereas frengression completed 2000 epochs in 1.4 minutes on the same hardware.
% \begin{figure}[ht]
\begin{figure}[!tb]
    \centering
    \includegraphics[width=.6\linewidth]{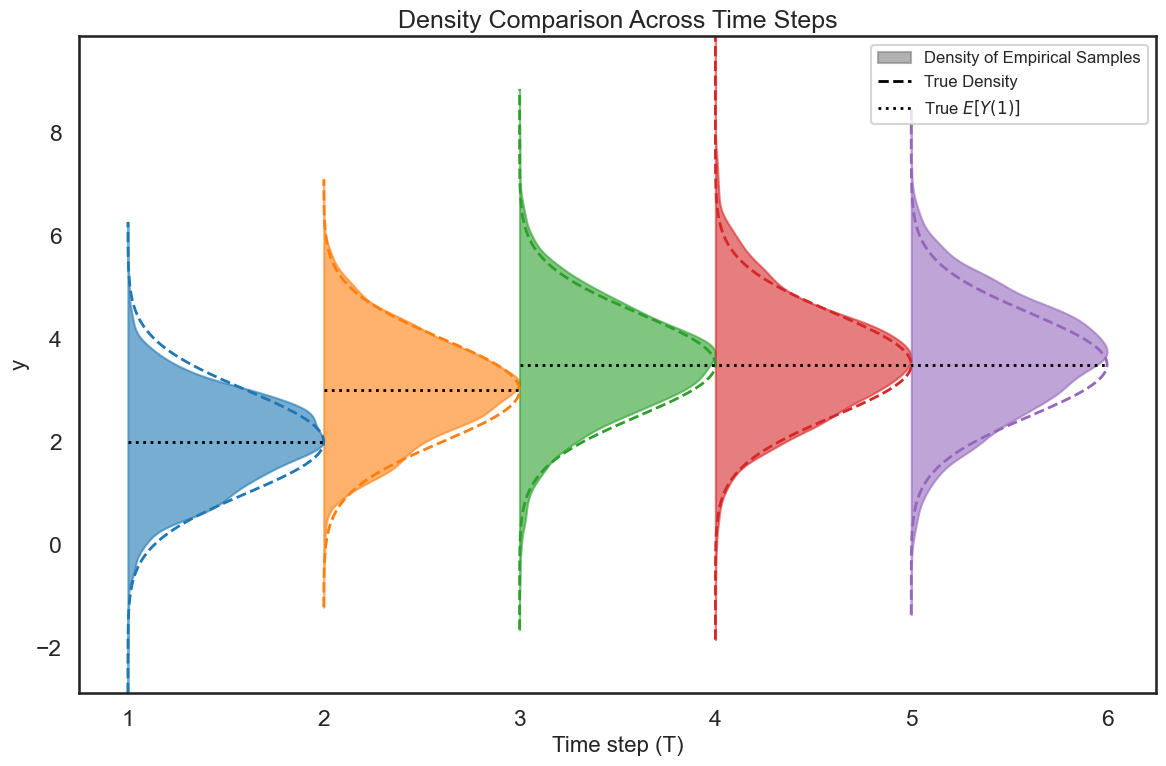}
    \caption{Density comparison of data simulated from frengression vs. true density, Setting 4.}
    \label{fig:synthetic_long}
\end{figure}

Figure \ref{fig:synthetic_long} shows the results for Setting 4, where we plot the density of 5000 points simulated from  the interventional distribution $\hat{\P}_{\Yx|C}$ estimated by frengression and set $x=\mathbf{1}$, $C=0$. As expected, the frengression fit closely matches the true interventional distribution, although the performance worsens slightly at the last time step $T=5$ due to growing complexity.

\subsection{Distributional Regression}

In practical applications, it is often necessary to simulate data under a pre-specified interventional distribution while preserving the empirical dependencies observed in the original sample. As a distributional‐regression framework, frengression can estimate the target marginal interventional distribution of the observed data; it also allows one to generate synthetic datasets that retain the observed correlation structure between the covariates and treatments, but target an arbitrarily specified marginal causal distribution.
%$Y'(x)\sim f'(x,\eta)$. 
% \begin{figure}[ht]
\begin{figure}[!tb]
    \centering
    \includegraphics[width=1\linewidth]{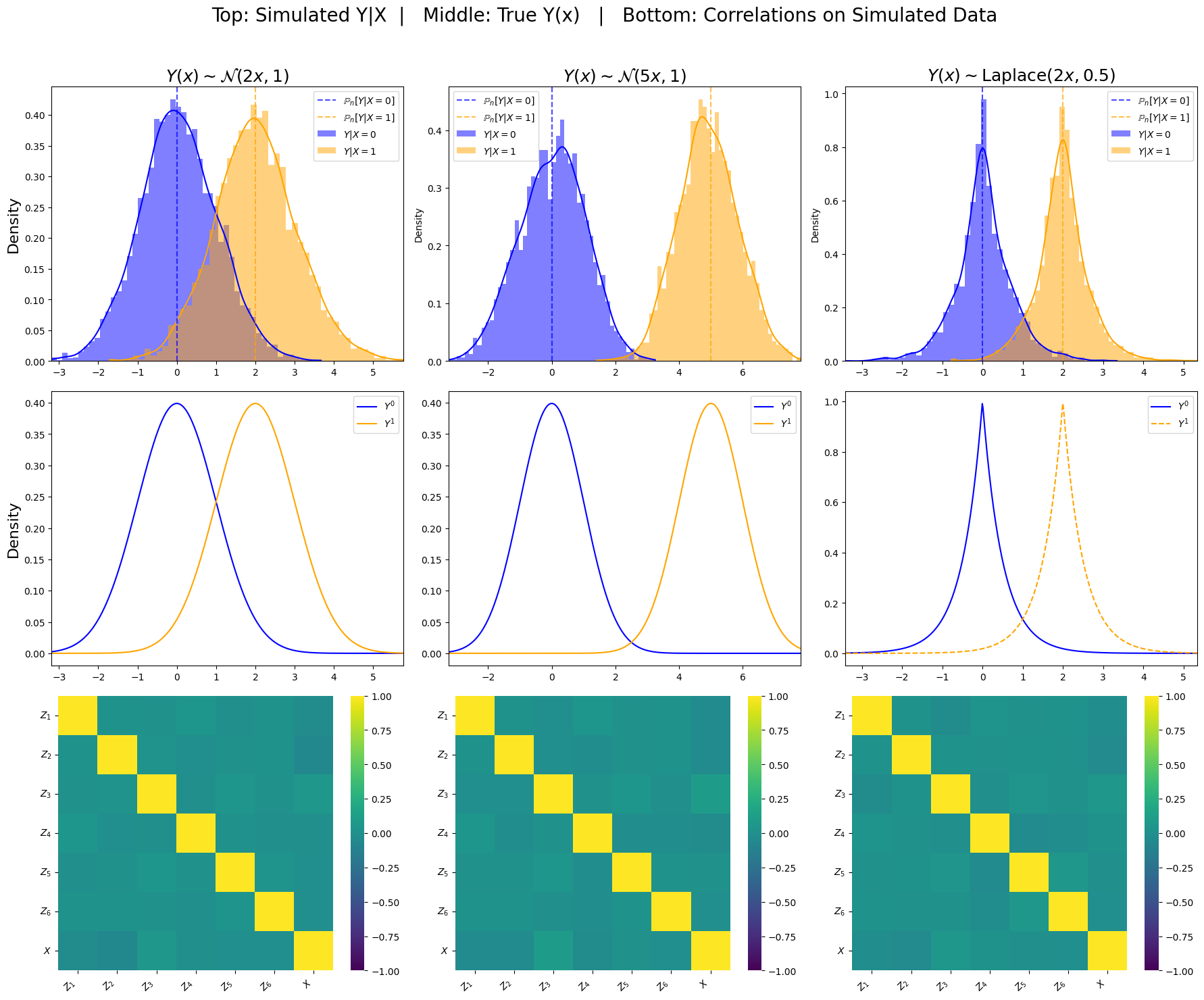}
    \caption{Marginal densities of simulations from frengression (top row), target (middle row) and correlation of covariates and treatment on simulations from frengression (bottom row). }
    \label{fig:distributional}
\end{figure}

In \Cref{fig:distributional}, we illustrate our method in the RCT (no confounding variables) context by comparing the empirical histogram from frengression’s simulations with the ground truth, and  showcase its ability to simulate data with user‐specified marginal causal distributions while preserving the joint dependency structure of the covariates and treatment.

First, we generate $n=5000$ samples from the true data‐generating process: the six covariates $Z\in\R^6$ are independently drawn from $\mathcal{N}(0,1)$, the treatment $X\sim \operatorname{Bernoulli}(0.5)$ as commonly seen in an RCT, and the intervened outcome $\Yx \sim N(2x,1)$, with a Gaussian copula between $(Z_5,Z_6)$ and $Y$ with Pearson correlation $\rho = 2\operatorname{expit}(1)-1\approx 0.46$. Fitting frengression to these samples recovers the true interventional distribution and the correlation between $X$ and $Z$ (left column).

Next, we replace the fitted $\Yx$ marginal with two new targets: $\mathcal{N}(5x,1)$ and $\operatorname{Laplace}(2x,0.5)$, and draw $n=5000$ simulated triples $(\widehat Z,\widehat X,\widehat Y)$ from the modified model (middle and right columns).  In \Cref{fig:distributional}, the top row shows the estimated marginal densities of $Y\mid X=0$ and $Y\mid X=1$ from the simulated data. Because there is no confounding, these curves should match the true interventional marginals.  The middle row displays the true target densities (the Gaussians or Laplace) for each intervention; we see that the simulation from the frengression fits align with these densities. The bottom row confirms that the original joint distribution of $(Z,X)$ remains unchanged under the modified interventional model. The results validate that in all cases, the simulated marginals match the newly specified objectives while preserving the baseline dependencies between $X$ and $Z$.

\section{Application on LEADER}
\label{sec:leader}
We illustrate frengression on the LEADER trial \citep{leader} to showcase its ability to simulate realistic, time-to-event data with complex covariate patterns. LEADER is a randomised, double‑blind, placebo‑controlled cardiovascular outcomes trial that enrolled 9340 patients with type 2 diabetes at high cardiovascular risk, randomising patients into liraglutide (up to 1.8 mg daily) or placebo, both on top of standard care, resulting in similarly sized treatment arms.  Patients were recruited from August 2010 through April and followed for $3.5$--$5$ years, with a median follow-up of 3.8 years. The primary endpoint is time to first major adverse cardiovascular event (MACE), including cardiovascular death, non-fatal myocardial infarction and non-fatal stroke. The original analysis found that MACE occurred in significantly fewer patients in the liraglutide group than in the placebo group \citep{marso2016liraglutide}.

To build our generative model, we select baseline covariates known to predict MACE. The baseline covariates include four binary indicators: gender, smoker status, carotid stenosis $>50\%$  on angiography, and diabetic nephropathy at screening. We have seven continuous variables at baseline: age, duration of diabetes (months), high-density lipoprotein (HDL) cholesterol, low-density lipoprotein (LDL) cholesterol, total cholesterol,  triglycerides, and serum creatinine. We also incorporated three key time‑varying biomarkers to capture longitudinal risk dynamics: HbA1c, body‑mass index (BMI), and estimated glomerular filtration rate (eGFR). The description of the variables can be found in \Cref{tab:leader_description}.

We partition the trial timeline into 6‑month intervals over a maximum of 60 months, yielding 11 discrete timepoints (i.e.~$t=0,1,2,\ldots 10$). We define each subject's event time as $k\in (0,\infty)$, where $k\in (0,10]$ indicates the  time MACE occurred in units of 6-month intervals, and $k>10$ indicates no event within the trial. With this notation, we define the event indicators $Y_t$, $t \in \{0,1,\ldots 10\}$ as 
$$
  Y_{t} =
  \begin{cases}
    1, & k \leq t,\\
    0, & \text{otherwise.}
    % -1, & k\leq t-1, \, t\geq 2.
  \end{cases}
$$
% In this setting, we always have $Y_0=0$. $Y_1$ can only take the value $0$, if the event did not occur within the first 6 months, otherwise $1$. For $t\geq 2$, $Y_t=0$ indicates MACE has not yet occurred by the end of interval $t$; $Y_t=1$ means MACE occurs inside the current interval $(t-1, t]$ and $Y_t=-1$ meaning that MACE occurs within interval $(0,t-1]$.
Here, we have $Y_t=1$ if and only if the event occurred by time $t$, otherwise $Y_t=0$.

Note that in the time-varying setting, covariates were recorded at different intervals: HbA1c was measured every 6 months, whereas BMI and eGFR were collected only at months 6, 12, 18, 30, 48, and 60 from the start of the trial. To retain as much information as possible, we impute missing BMI and eGFR values at intermediate timepoints using the most recent available measurement (last observation carried forward).

We exclude units missing any baseline covariates, leaving 9161 subjects. We take the logarithm of triglycerides and serum creatinine. We then normalise all continuous covariates by subtracting their mean values and dividing them  by their standard deviations. After fitting frengression to the processed dataset, we simulate 9161 synthetic records from the model for 5000 times.

% \begin{figure}[ht]
\begin{figure}[!tb]
    \centering
    \includegraphics[width=1\linewidth]{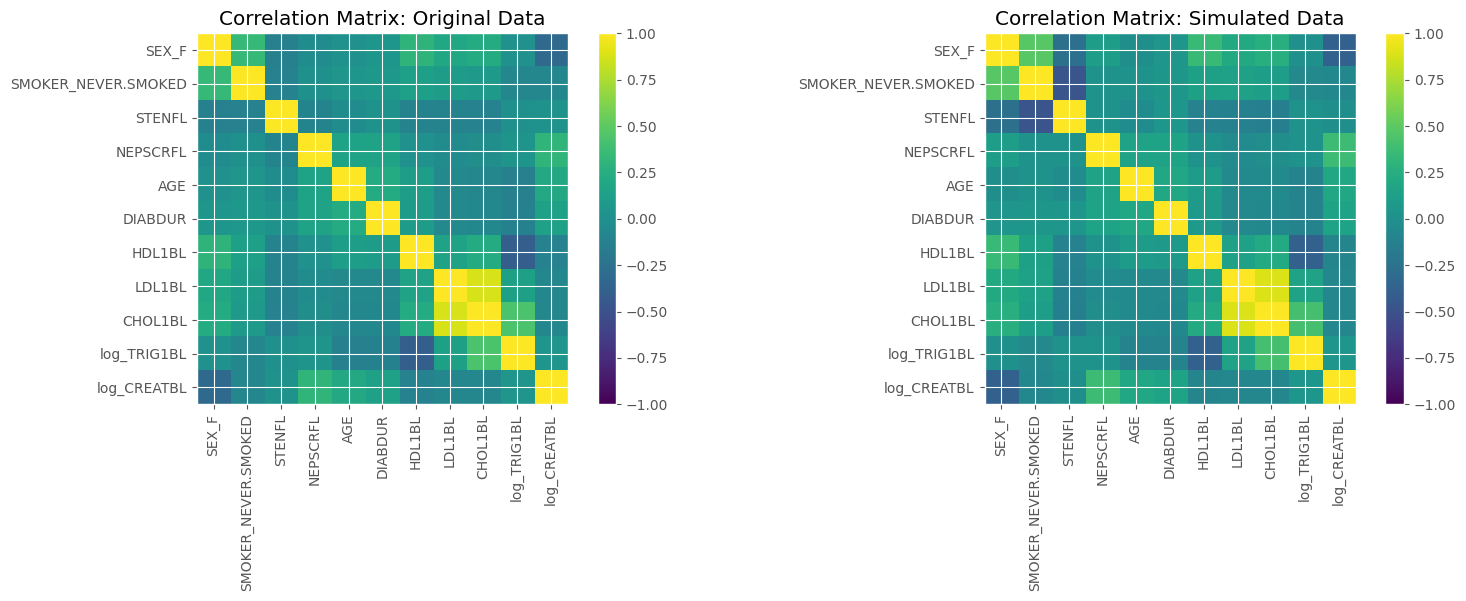}
    \caption{Correlation matrix of baseline covariates, real data vs. data simulated from fitted frengression. The heatmap illustrates the similarity of simulated data with real data regarding correlation coefficient.}
    \label{fig:corr_bsl}
\end{figure}

% \begin{figure}[ht]
\begin{figure}[!tb]
    \centering
    \includegraphics[width=0.8\linewidth]{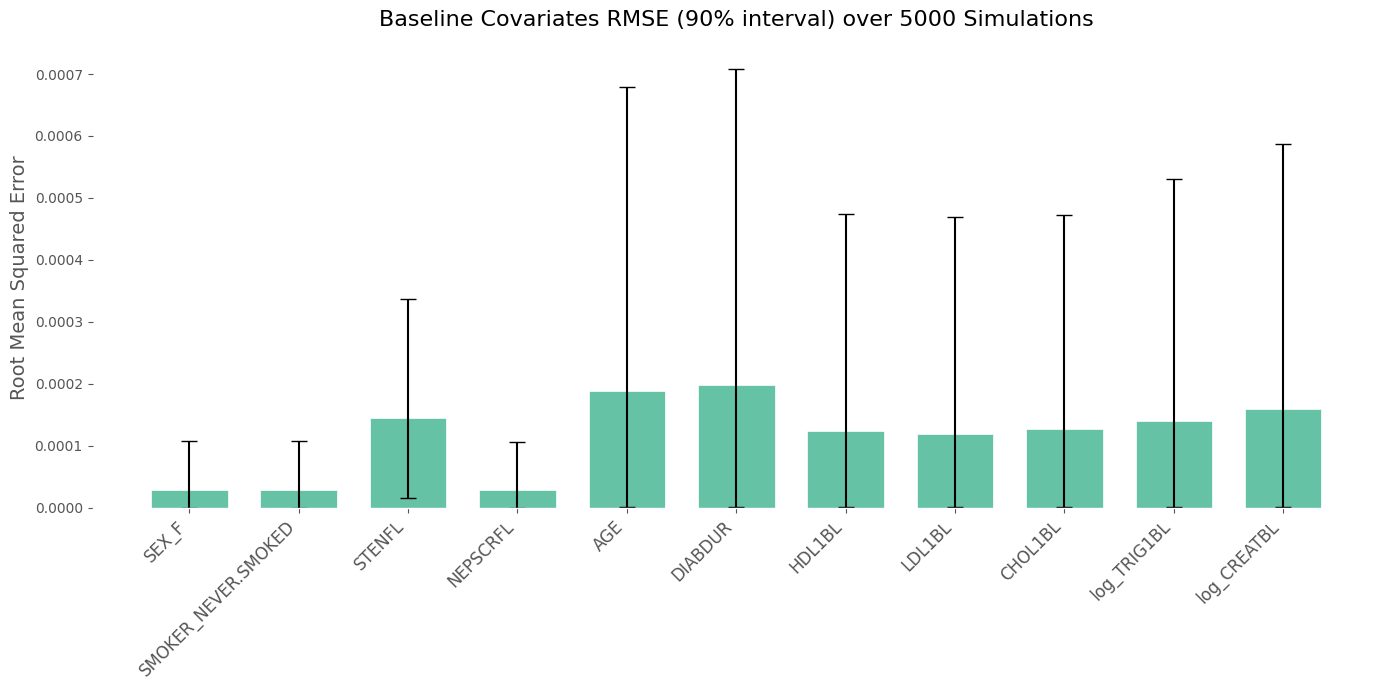}
    \caption{RMSE of each baseline covariate (continuous variables are normalised), with the maximum RMSE less than $0.001$ over 5000 simulations.}
    \label{fig:rmse_bsl}
\end{figure}

% \begin{figure}[ht]
\begin{figure}[!tb]
    \centering
    \includegraphics[width=0.7\linewidth]{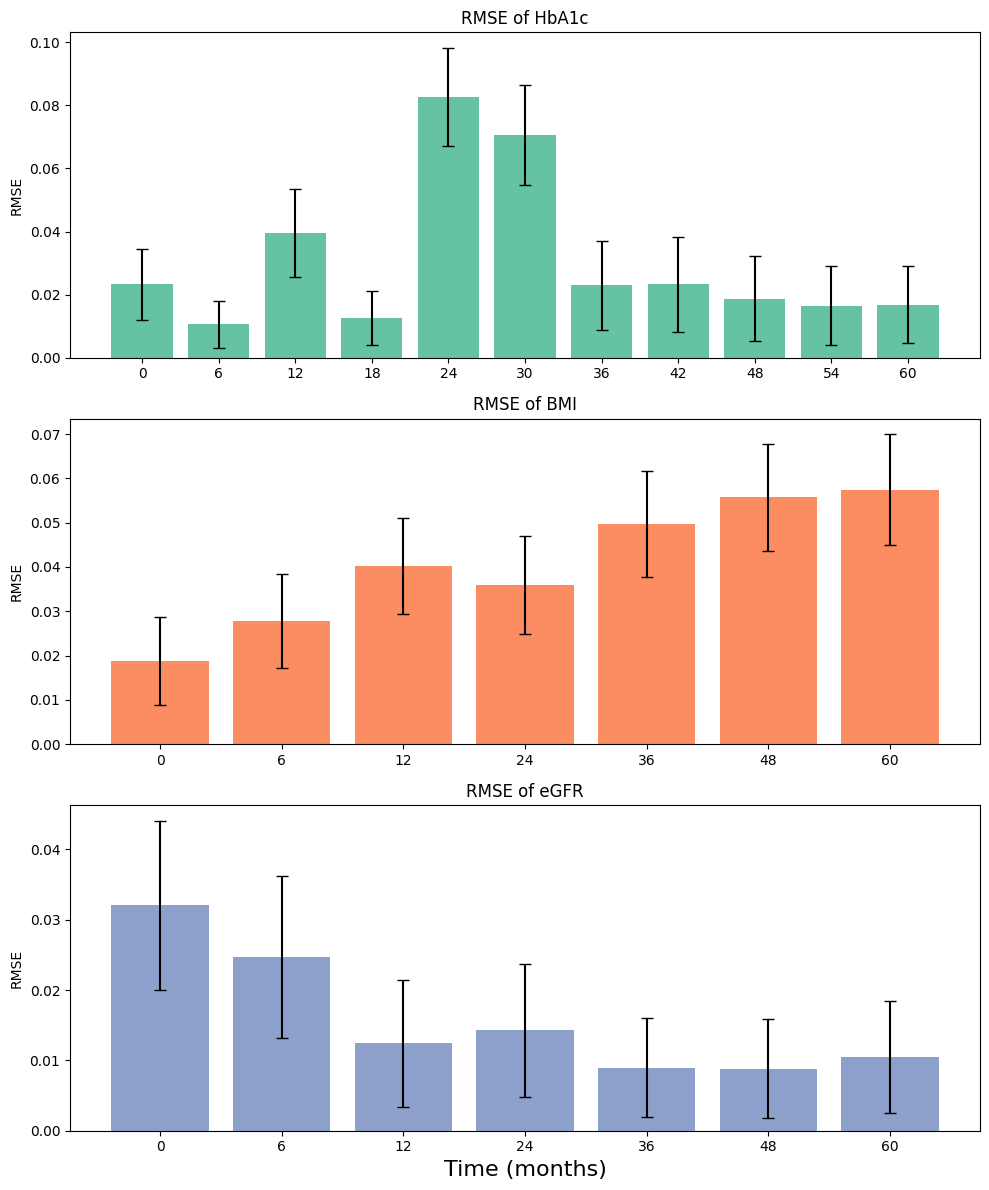}
    \caption{RMSE of each time varying covariate (normalised) across different time-points over 5000 simulations. All of them are of small scales (maximum $0.1$).}
    \label{fig:time-varying-rmse}
\end{figure}

\Crefrange{fig:corr_bsl}{fig:time-varying-rmse} visualise frengression’s empirical performance on data simulation. For each variable at each time point, we compute the RMSE between the mean of the simulated samples and the true mean. Overall, frengression accurately reproduces the linear correlations and achieves low RMSE across both baseline and time‑varying covariates. However, as shown in \Cref{fig:corr_bsl}, the simulated correlations of the STENFL variable (carotid $>50\%$ stenosis on angiography) with SEX\_F (gender) and SMOKER\_NEVER.SMOKED (indicating smoking status) deviate from that of the true data, reflecting the greater challenge of modelling binary variables using deep generative models compared to continuous ones.

To quantify how well the simulated data match the originals, we train logistic classifiers to distinguish real observations from simulated ones. On baseline covariates, the mean AUC is $0.50$ (range $0.47$--$0.52$), indicating classification that is nearly random. For the joint datasets (including baseline covariates, time‑varying covariates and MACE occurrence), the mean AUC is $0.55$ (range $0.52$--$0.57$), still reflecting substantial similarity between simulated and real distributions.

We also sample event occurrences from the fitted interventional distribution $\hat{\P}_{\Yx| C}$. Over 5000 simulations, the mean MACE occurrence rate in the placebo arm is $14.7\%$ with $90\%$ interval $[13.8\%, 15.6\%]$ (5th and 95th quantiles of the 5000 estimates), and in the liraglutide arm it is $13.0\%$ with $90\%$ interval $[12.2\%,13.9\%]$. These values match, to the reported precision, the $14.7\%$ in the placebo arm and $13.0\%$ in the liraglutide arm from real data, confirming that frengression faithfully captures both covariate structures and event frequencies.

We applied DLTMLE to the LEADER data for comparison; implementation details and hyperparameters are given in \Cref{sec:LEADER_experiment_details}. DLTMLE yields an estimated event rate of $15.4\%$ ($95\%$ CI: $[14.0\%, 16.5\%]$) under placebo and $12.9\%$ ($95\%$ CI: $[11.9\%, 14.0\%]$) under treatment. These deviate more from the observed trial rates than the corresponding frengression estimates. As DLTMLE is designed for estimation rather than generation, it cannot be evaluated in terms of simulation fidelity (e.g., RMSE or correlation structure in Figures \ref{fig:corr_bsl}--\ref{fig:time-varying-rmse}).

Estimation of $\E[Y(a)]$ and faithful data generation are complementary objectives. DLTMLE addresses the former. Frengression, in contrast, provides a unified approach: it produces consistent estimation while simultaneously learning a generative model capable of producing synthetic data.

\section{Discussion}
\label{sec:discussion}

The rapid growth of AI makes reliable causal inference more important than ever. Contributing to this, we introduce a deep generative approach that generates synthetic data with a specified marginal interventional distribution, while maintaining a realistic level of complexity. Such a data generation tool is helpful in different applications, including pre-experiment study design for randomised controlled trials, selecting an inference procedure, and stress testing AI methods \citep{giuffre2023harnessing,pezoulas2024synthetic}.

There are several promising extensions. As noted in \Cref{remark:privacy}, the variation independent decomposition in frengression can be directly adapted for differentially private simulation. Although differential privacy remains the gold standard for protecting individual records, its strict requirements often come at the expense of simulation accuracy in practice.  This means that suitable measures require a better understanding of the needs of end users \citep{yoon2020anonymization, kenny2021use,mccartan2023making}.

Methodologically, although frengression can generate data of subject-level trajectories under time-varying treatment, or any quantity conditional on a subset of covariates, its robustness and efficiency for individual treatment effects, such as those needed for dynamic treatment regimes, remain open questions. Even so, we find that centring on the target causal distribution explicitly brings clarity and consistency in both estimation and simulation.

At present, frengression’s extrapolation works only with continuous treatments; how best to extrapolate when treatments are discrete is to be explored. In addition, we provided consistency results for the generative model. Deriving convergence rates would help us build valid confidence intervals for the estimates that frengression produces.

\section*{Acknowledgements}
Linying Yang is supported by the EPSRC Centre for Doctoral Training in Modern Statistics and Statistical Machine Learning (EP/S023151/1) and Novartis.  Robin J.~Evans is supported by GSK.  We acknowledge funding administered by the Danish National Research Foundation in support of the Pioneer Centre for SMARTbiomed. 

% For the purpose of open access, the authors have applied a CC-BY public copyright
% licence to any author accepted manuscript (AAM) arising from this submission.

\eject

\appendix

\section{Auxiliary results}

This section collects auxiliary results that support the main theorems.

\subsection{Distribution consistency from energy distance convergence}

\begin{theorem}[Weak convergence from convergent energy distance]
\label{thm:convergence_by_convergent_ed}
Let $\Q$ be a probability measure on $\R^d$, and let $\P^n$ be a random
probability measure on $\R^d$. Suppose that $\P^n$ and $\Q$ satisfy
\Cref{ass:tail}, and that
\[
    \ed(\P^n,\Q)\to 0
    \qquad\text{in probability}.
\]
Then, for every bounded continuous function $f$,
\[
    \int f\,d\P^n \to \int f\,d\Q
    \qquad\text{in probability}.
\]
Equivalently, $\P^n \Rightarrow \Q$ in probability.
\end{theorem}

\begin{proof}
Fix a bounded continuous function $f$. By the subsequence principle for
convergence in probability, it suffices to show that every subsequence of
$\left\{\int f\,d\P^n\right\}$ contains a further subsequence that converges
almost surely to $\int f\,d\Q$.

Let $\{\P^{n_k}\}$ be an arbitrary subsequence. Since
\[
    \ed(\P^n,\Q)\to 0
    \qquad\text{in probability},
\]
there exists a further subsequence, not relabelled, such that
\[
    \ed(\P^{n_k},\Q)\to 0
    \qquad\text{almost surely}.
\]
Fix an outcome in the corresponding probability-one set. Then
$\{\P^{n_k}\}$ is a deterministic sequence of probability measures satisfying
$\ed(\P^{n_k},\Q)\to 0$.

By \Cref{ass:tail}, for any $\varepsilon>0$ there exists $R>0$ such that
$\P^{n_k}(\|X\|>R)<\varepsilon$ uniformly in $k$, hence $\{\P^{n_k}\}$ is tight. Hence, by Prokhorov's
theorem \citep[Theorem~5.1]{billingsley2013convergence}, every subsequence of
$\{\P^{n_k}\}$ contains a further subsequence that converges weakly to some
probability measure. Let $\{\P^{n_{k_j}}\}$ be such a subsequence, and write
\[
    \P^{n_{k_j}} \Rightarrow \pi
\]
for some probability measure $\pi$ on $\R^d$.

Let $\varphi_P(t)=\int e^{i\langle t,x\rangle}\,dP(x)$ denote the
characteristic function of a probability measure $P$. Since
$\P^{n_{k_j}} \Rightarrow \pi$, we have
\[
    \varphi_{\P^{n_{k_j}}}(t)\to \varphi_\pi(t)
    \qquad\text{for every } t\in\R^d.
\]
Moreover, \Cref{ass:tail} implies finite first moments, so the
characteristic-function representation of the energy distance applies
\citep[Proposition~1]{szekely2013energy}:
\[
    \ed(\P^{n_{k_j}},\Q)
    =
    \frac{1}{c_d}
    \int_{\R^d}
    \frac{\left|\varphi_{\P^{n_{k_j}}}(t)-\varphi_\Q(t)\right|^2}{\|t\|^{d+1}}\,dt.
\]
Since $\ed(\P^{n_{k_j}},\Q)\to 0$, the right-hand side converges to zero.
Therefore, after passing to a further subsequence if necessary,
\[
    \varphi_{\P^{n_{k_j}}}(t)\to \varphi_\Q(t)
    \qquad\text{for almost every } t\in\R^d.
\]
Combining this with the pointwise convergence to $\varphi_\pi$ yields
\[
    \varphi_\pi(t)=\varphi_\Q(t)
    \qquad\text{for almost every } t\in\R^d.
\]
Since characteristic functions are continuous, this implies
$\varphi_\pi=\varphi_\Q$ everywhere, and hence $\pi=\Q$.

Thus every subsequence of $\{\P^{n_k}\}$ has a further subsequence converging
weakly to $\Q$. Hence the deterministic sequence $\{\P^{n_k}\}$ converges
weakly to $\Q$. In particular,
\[
    \int f\,d\P^{n_k}\to \int f\,d\Q
    \qquad\text{almost surely}.
\]
Since the original subsequence was arbitrary, the subsequence principle for
real-valued random variables implies
\[
    \int f\,d\P^n \to \int f\,d\Q
    \qquad\text{in probability}.
\]
Because $f$ was arbitrary, this proves $\P^n \Rightarrow \Q$ in probability.
\end{proof}

\section{Main proofs}
\subsection{Proof of \Cref{lem:identifiability}}
\label{sec:proof_identifiability}

\begin{proof}
When the conditions in \Cref{ass:identification} hold, we have 
\begin{align}
p_{\Yx}(y) & = p_{\Yx|X_0}(y\cmid x_0) \label{eqn:identify_1}\\
&= p_{Y\hspace{-.75pt}(X_0,x_1)|X_0}(y\cmid x_0)\label{eqn:identify_2}\\
&=\int_{\mathcal{Z}} p_{Y\hspace{-.75pt}(\Xpaz,x_1)|\XpZ}(y\cmid x_0,z) \, p_{\ZIXp}(z\cmid x_0)\,{\rm d} z \label{eqn:identify_3}\\
&=\int_{\mathcal{Z}} p_{Y\hspace{-.75pt}(\Xpaz,x_1)|\XpZXc}(y\cmid x_0,z,x_1) \, p_{\ZIXp}(z\cmid x_0)\,{\rm d} z \label{eqn:identify_4}\\
&=\int_{\mathcal{Z}} p_{Y|\XpZXc}(y\cmid x_0,z,x_1)\, p_{\ZIXp}(z\cmid x_0)\,{\rm d} z, \label{eqn:identify_5}
\end{align}
where \eqref{eqn:identify_1} and \eqref{eqn:identify_4} are from the unconfoundedness assumption, while \eqref{eqn:identify_2} and \eqref{eqn:identify_5} are based on the consistency assumption. Equation \eqref{eqn:identify_3} is from the law of total probability; \eqref{eqn:identify_1}, \eqref{eqn:identify_3} and \eqref{eqn:identify_4} require the positivity assumption.
\end{proof}

\subsection{Proof of \Cref{prop:population_estimates}}
\label{sec:proof_population_estimates}
\begin{proof}
Since the energy score is a strictly proper scoring rule~\citep{gneiting2007strictly}, the objective function in \eqref{eqn:population_g} is minimised if and only if $g(\epsilon)\sim \P_\ZX$. By Assumption~\ref{ass:correct_model}, such a model exists, we hence have $g^\ast(\epsilon)\sim \P_\ZX$. According to Proposition 1 of \citet{shen2024engression} and \Cref{ass:correct_model}, we have $f^\ast(x,h^\ast(z,x,\xi))\sim\P_{Y|Z=z,X=x}$ for $\P_\ZX$-almost every $(z,x)$. Because $\bar\eta$ and $\bar\eta'$ are formed from the same $\bar X$, the second term of \eqref{eqn:population_fh} is an integrated conditional energy score over $\bar X\sim\P_X$, so by strict propriety it is minimised if
\begin{equation}\label{eq:pf1}
    h^\ast(Z_x,x,\xi)\sim \cN(0,I_{d_y})\text{ when }Z_x\sim \P_{Z\mid X_0=x_0},
\end{equation}
for $\P_X$-almost every $x=(x_0,x_1)$. By \Cref{ass:correct_model}, a single $(f',h')$ satisfies both the conditional fit and this reference-noise condition, so the two terms of \eqref{eqn:population_fh} are minimised by a common solution.
% $h^\ast(Z,x,\xi)\sim \cN(0,d_Y)$ when $Z\sim \P_{Z(x)}$.

By positivity (\Cref{ass:positivity}), the conditional fit established above for $\P_\ZX$-almost every $(z,x)$ holds, for $\P_X$-almost every $x$, at $\P_{Z\mid X_0=x_0}$-almost every $z$. The identification formula in \Cref{lem:identifiability},
\[
    p_{\Yx}(y) = \int_{\mathcal{Z}} p_{Y|\XpZXc}(y\cmid x_0,z,x_1) \, p_{\ZIXp}(z\cmid x_0) \, \mathrm{d} z.
    \]
suggests that to sample from $\P_{\Yx}$ for $\P_X$-almost every $x=(x_0,x_1)$, we can
\begin{enumerate}[label=(\roman*)]
    \item first sample $Z$ from $\P_{Z|X_0=x_0}$, and then 
    \item sample $Y$ from $\P_{Y|Z,X=x}$ by setting $\eta^\ast:=h^\ast(Z,x,\xi)$ and applying $f^\ast(x,\eta^\ast)$.
\end{enumerate}
According to \eqref{eq:pf1}, we know $\eta^\ast$ follows the standard Gaussian for $\P_X$-almost every $x$. Thus, we can simplify the above sampling procedure to directly sampling $\eta^\ast\sim\cN(0,I_{d_y})$ and applying $f^\ast(x,\eta^\ast)$. This yields the last desired result that
% \[
$f^\ast(x,\eta)\sim \P_{\Yx}$ for $\P_X$-almost every $x$.
% \]
\end{proof}

\subsection{Convergence results}
\label{sec:proof_consistency}

The energy distance $\ed(\P,\Q)=\|\mu_\P-\mu_\Q\|_{\mathcal H}^2$ uses the mean embedding $\mu_\P$ in the RKHS $\mathcal H$ of the energy kernel $k(u,v)=\tfrac12(\|u\|+\|v\|-\|u-v\|)$. On empirical inputs it has two estimators: the plug-in V-statistic $\ed_V\ge 0$, whose square root obeys the triangle inequality, and the unbiased U-statistic $\ed_U$, the training loss, which can be negative. We write $\ed$ throughout, meaning $\ed_V$ on empirical inputs, and $\ed_U$ only for the training loss and the concentration bounds (\Cref{lemma:finite_ed_second,lemma:concentration_inequality}); the two differ by lower-order self-terms, absorbed below.

We first note a consequence of \Cref{ass:tail}.

\begin{lemma}
[Finite second moment]
\label{lem:finite_moment}
Suppose \Cref{ass:tail} holds for random variable $X\sim \P_\theta$, $\forall\theta \in \Theta$ and $X\sim \Q$. Then, here exists a constant $M>0$ such that
$$\mathbb{E}_{\mathbb{P}_\theta}\|X\|^2 \le M^2
\quad\text{and}\quad
\mathbb{E}_{\mathbb{Q}}\|X\|^2\le M^2.
$$
\end{lemma}

\begin{proof}[Proof of \Cref{lem:finite_moment}]
    For $X\sim \P$ with $\P$ a probability measure satisfying \Cref{ass:tail}, we have
    \begin{align*}
        \E\|X\|^2 =\int_0^\infty  2u \Pr(\|X\|>u)\,\mathrm{d}u\le 2C\int_0^\infty u\operatorname{exp}(-\kappa u)\mathrm{d}u = \frac{2C}{\kappa^2}.
    \end{align*}
The claim follows with $M^2 =  \frac{2C}{\kappa^2}$.
\end{proof}

With \Cref{ass:tail} satisfied, we also have:
\begin{lemma}[Concentration inequality with energy distance]
\label{lemma:concentration_inequality}
    Let $\P$ be a probability measure on $X\subseteq \mathbb{R}^d$, $\P^n$ be the empirical measure obtained from $n$ independently and
identically distributed samples of $\P$, i.e.~$\P^n=\frac1n\sum_{i=1}^n\delta_{x_i}$, $x_i \overset{\rm i.i.d.}{\sim}\P$.

For every $\delta>0$, with probability at least $1-\delta-q_{R_n}$:
$$
\ed_U(\P,\P^n)< \frac{C_1M}{\sqrt{n(1-q_{R_n})}}+4{R_n}\sqrt{\frac{2}{n}\log\Big(\frac1\delta\Big)},
$$ for some constant $C_1>0$, where $R_n = \frac{3\log n}{\kappa}$, $q_{R_n}:=\Pr(\max_{1\leq i \leq n}\|x_i\|> R_n)$.
We also have:
$$
q_{R_n} \leq  \frac{C}{n^2}.
$$
\end{lemma}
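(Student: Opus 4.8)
The plan is to extend the bounded-kernel MMD concentration argument of \citet{briol19inference} to the unbounded energy-distance kernel $k(u,v)=\frac12(\|u\|+\|v\|-\|u-v\|)$ by a truncation device. I would introduce the event $A_n=\{\max_{1\le i\le n}\|x_i\|\le R_n\}$ with $R_n=3\log n/\alpha$, so that $P(A_n)=1-q_{R_n}$. On $A_n$ every atom of $\P^n$ lies in the ball $B_{R_n}$, where the kernel is bounded ($0\le k(u,v)\le R_n$ for $\|u\|,\|v\|\le R_n$, since $0\le\|u-v\|$ and $\|u-v\|\ge\big|\|u\|-\|v\|\big|$), so a bounded-differences argument becomes available. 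The final bound is then assembled from three pieces: a union bound controlling $q_{R_n}$; a \emph{conditional} McDiarmid inequality controlling the fluctuation of $\ed(\P,\P^n)$ around its conditional mean; and a kernel-mean-embedding computation controlling that conditional mean. Intersecting the two high-probability events yields the stated confidence level $(1-\delta)(1-q_{R_n})$.

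For the tail estimate I would note $A_n^c=\bigcup_i\{\|x_i\|>R_n\}$, so by a union bound and Assumption~\ref{ass:tail}, $q_{R_n}\le n\,P(\|X\|>R_n)\le nC\exp(-\alpha R_n)=nC\,n^{-3}=C/n^2$, which is exactly the second claimed inequality and shows $P(A_n)=1-q_{R_n}\to 1$. Conditional on $A_n$ the samples $x_1,\dots,x_n$ remain independent, each distributed as $\P$ restricted to $B_{R_n}$; independence is all the bounded-differences inequality requires. I would then verify bounded differences for $F(x_1,\dots,x_n)=\ed(\P,\P^n)$ over $B_{R_n}^n$. Writing
\[
\ed(\P,\P^n)=\frac2n\sum_{j}\E_X\|X-x_j\|-\E_{X,X'}\|X-X'\|-\frac1{n^2}\sum_{j,j'}\|x_j-x_{j'}\|
\]
and replacing a single $x_i$ by $x_i'$, the triangle inequality gives $\big|\|X-x_i'\|-\|X-x_i\|\big|\le\|x_i-x_i'\|\le 2R_n$ even though $X\sim\P$ is unbounded; the cross term changes by at most $4R_n/n$ and the double sum by at most $4R_n/n$, so each coordinate has sensitivity $c_i\le 8R_n/n$ and $\sum_i c_i^2\le 64R_n^2/n$. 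McDiarmid's inequality applied conditionally on $A_n$ then gives, with conditional probability at least $1-\delta$, that $\ed(\P,\P^n)\le\E[\ed(\P,\P^n)\mid A_n]+4R_n\sqrt{(2/n)\log(1/\delta)}$, the stated fluctuation term emerging exactly from this choice of constants.

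Finally I would bound the conditional mean via the kernel mean embedding $\mu_\P=\E\phi(X)$, which exists because Assumption~\ref{ass:finite_moment} forces finite first moments. Since $\ed(\P,\P^n)=\|\mu_\P-\mu_{\P^n}\|_{\mathcal H}^2$ with $\E\mu_{\P^n}=\mu_\P$, a variance computation gives $\E[\ed(\P,\P^n)]=\frac1n(\E k(X,X)-\|\mu_\P\|^2)\le\frac1n\E\|X\|\le M/n$, using $k(x,x)=\|x\|$ and $\E\|X\|\le\sqrt{\E\|X\|^2}\le M$. As $\ed\ge 0$, conditioning inflates this by at most $1/P(A_n)$, so $\E[\ed(\P,\P^n)\mid A_n]\le M/(n(1-q_{R_n}))\le C_1M/\sqrt{n(1-q_{R_n})}$ for a suitable $C_1$. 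Combining this with the McDiarmid step on $A_n$ gives the two-term bound at level $(1-\delta)(1-q_{R_n})$. I expect the main obstacle to be the bookkeeping around the conditioning: one must argue that bounded differences genuinely hold under the truncated law (they do, since its support is $B_{R_n}$) while relating the resulting conditional mean back to the untruncated embedding variance, and check that $R_n=3\log n/\alpha$ is large enough that $q_{R_n}\to 0$ yet small enough that $R_n/\sqrt n\to 0$, so that both terms vanish and consistency follows.
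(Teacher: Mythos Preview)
Your proposal is correct and follows the same overall architecture as the paper: define the truncation event $A_n=\{\max_i\|x_i\|\le R_n\}$, bound $q_{R_n}$ by a union bound, apply McDiarmid conditionally on $A_n$ (with the same sensitivity $8R_n/n$, obtained exactly as you do), and then control the conditional mean $\E[\ed(\P,\P^n)\mid A_n]$. The constants and the final $(1-\delta)(1-q_{R_n})$ probability level match.

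The one genuine difference is in how the conditional mean is bounded. The paper does not use the RKHS embedding identity $\E[\ed(\P,\P^n)]=\tfrac1n(\E k(X,X)-\|\mu_\P\|^2)\le M/n$. Instead it first proves a separate lemma (their Lemma~\ref{lemma:finite_ed_second}) establishing $\E[\ed(\P,\P^n)^2]\le C_0M^2/n$ via a U-statistic variance decomposition, and then uses Cauchy--Schwarz,
\[
\E[\ed\mid A_n]=\frac{\E[\ed\cdot\mathbbm{1}_{A_n}]}{P(A_n)}\le\frac{\sqrt{\E[\ed^2]}\sqrt{P(A_n)}}{P(A_n)}\le\frac{C_1M}{\sqrt{n(1-q_{R_n})}}.
\]
Your route is shorter: non-negativity gives $\E[\ed\cdot\mathbbm{1}_{A_n}]\le\E[\ed]\le M/n$, so $\E[\ed\mid A_n]\le M/(n(1-q_{R_n}))$, which you then relax to the stated $C_1M/\sqrt{n(1-q_{R_n})}$. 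This avoids the second-moment lemma entirely and actually gives a sharper $O(1/n)$ rather than $O(1/\sqrt n)$ bound on the mean term. The paper's approach, on the other hand, produces the $\sqrt{n}$ form directly and yields the second-moment control as a by-product, which could be independently useful. Either way the fluctuation term $4R_n\sqrt{(2/n)\log(1/\delta)}$ dominates, so the final statement is the same.
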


To prove \Cref{lemma:concentration_inequality}, we first prove the following:
\begin{lemma}[Finite second moment of $\ed_U(\P,\P^n)$]
\label{lemma:finite_ed_second}
For a probability measure $\P$ that satisfies \Cref{ass:tail} and \Cref{lem:finite_moment}, where $\P^n$ is the empirical distribution, there exists a constant $C_0>0$, such that $\E[\ed_U(\P,\P^n)^2] \leq C_0M^2/n$.
\end{lemma}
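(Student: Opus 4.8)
The plan is to exploit the fact, recalled just after \eqref{eqn:energy_distance}, that the energy distance is a positive multiple of the squared MMD for the distance-induced kernel $k(u,v)=\tfrac12(\|u\|+\|v\|-\|u-v\|)$. Writing $\phi(x)=k(x,\cdot)$ for the canonical feature map into the associated RKHS $\mathcal H$ and $\mu_\P=\E_{\P}\phi(X)$ for the kernel mean embedding, one has $\ed(\P,\P^n)=c\,\|\mu_\P-\mu_{\P^n}\|_{\mathcal H}^2$ for a constant $c>0$, where $\mu_{\P^n}=\tfrac1n\sum_{i=1}^n\phi(x_i)$. Setting $\xi_i:=\phi(x_i)-\mu_\P$, the $\xi_i$ are i.i.d., mean-zero elements of $\mathcal H$, and $\ed(\P,\P^n)=c\,\|S_n\|_{\mathcal H}^2$ with $S_n:=\tfrac1n\sum_i\xi_i$. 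Hence $\ed(\P,\P^n)^2=c^2\|S_n\|_{\mathcal H}^4$, and the whole problem reduces to bounding the fourth moment $\E\|S_n\|_{\mathcal H}^4$.

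The second step records the two moment quantities that drive the bound. Evaluating the kernel on the diagonal gives $k(x,x)=\|x\|$, so $\|\phi(x)\|_{\mathcal H}^2=\|x\|$. Therefore $\E\|\phi(X)\|_{\mathcal H}^2=\E\|X\|\le M$ and $\E\|\phi(X)\|_{\mathcal H}^4=\E\|X\|^2\le M^2$, both finite under \Cref{ass:finite_moment} alone (the sub-exponential tail \Cref{ass:tail} is not needed for this lemma). After centring, these control $\sigma^2:=\E\|\xi\|_{\mathcal H}^2\le M$ and $\mu_4:=\E\|\xi\|_{\mathcal H}^4\le C M^2$; the embedding $\mu_\P$ is itself well-defined because $\E\|\phi(X)\|_{\mathcal H}=\E\|X\|^{1/2}<\infty$, so Bochner integrability holds.

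The third step is the fourth-moment computation for an i.i.d.\ Hilbert-space average. Expanding,
\[
\E\|S_n\|_{\mathcal H}^4=\frac1{n^4}\sum_{i,j,k,l}\E\big[\langle\xi_i,\xi_j\rangle\langle\xi_k,\xi_l\rangle\big],
\]
and by independence together with $\E\xi_i=0$, every term in which some index occurs exactly once vanishes. Only the paired configurations survive: the diagonal pairing $i=j,\,k=l$ contributes $\sigma^4$, the two crossed pairings contribute terms bounded by $\E\langle\xi_i,\xi_j\rangle^2\le\sigma^4$ (Cauchy--Schwarz plus independence), and the fully coincident term contributes $\mu_4$. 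Counting multiplicities yields $\E\|S_n\|_{\mathcal H}^4\le 3\sigma^4/n^2+\mu_4/n^3$. Substituting $\sigma^4\le M^2$ and $\mu_4\le CM^2$ gives $\E\|S_n\|_{\mathcal H}^4\le C'M^2/n^2\le C'M^2/n$, whence $\E[\ed(\P,\P^n)^2]=c^2\E\|S_n\|_{\mathcal H}^4\le C_0 M^2/n$, as claimed.

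The main obstacle is structural rather than computational: one must justify the RKHS representation of the energy distance, which hinges on the distance-induced kernel being genuinely positive definite (so that $\|\cdot\|_{\mathcal H}$ is a norm and $\mmd^2=\|\mu_\P-\mu_{\P^n}\|_{\mathcal H}^2$) and on $\mu_\P$ existing as a Bochner integral; both are secured by the finite-moment bookkeeping above. The remaining care lies in the pairing argument---cleanly identifying which of the $n^4$ index tuples survive and bounding the crossed terms---after which the stated $M^2/n$ rate follows with room to spare. Indeed the argument delivers the sharper $O(M^2/n^2)$, and the weaker stated form is precisely what the subsequent Chebyshev step in \Cref{lemma:concentration_inequality} requires.
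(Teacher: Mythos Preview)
Your approach is correct and genuinely different from the paper's. The paper writes $\ed(\P,\P^n)=D_1+D_2$ with $D_1=\tfrac{2}{n}\sum_i\bigl(u(x_i)-\nu\bigr)$, where $u(x)=\E_{\P}\|X-x\|$ and $\nu=\E\|X-X'\|$, and $D_2=\nu-U_n$ is the centred order-two U-statistic; it then bounds $\Var(D_1)$ by an i.i.d.\ variance calculation and $\Var(D_2)$ via Hoeffding's U-statistic variance formula, each delivering $O(M^2/n)$. Your RKHS route---reducing to a fourth-moment bound for a Hilbert-space sample mean---is tidier and, as you observe, actually yields the sharper $O(M^2/n^2)$; it also makes explicit that only \Cref{ass:finite_moment} is used. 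The paper's argument buys freedom from RKHS machinery and stays within elementary U-statistic theory.

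One small caveat: the identity $\ed(\P,\P^n)=c\,\|\mu_\P-\mu_{\P^n}\|_{\mathcal H}^2$ holds for the V-statistic (biased) empirical term $\tfrac{1}{n^2}\sum_{i,j}\|x_i-x_j\|$, whereas the paper's $\ed(\P,\P^n)$ is defined with the U-statistic $\tfrac{1}{n(n-1)}\sum_{i\neq j}\|x_i-x_j\|$. The two versions differ by $\tfrac{1}{n}U_n$, whose second moment is $O(M^2/n^2)$, so your conclusion survives after a one-line correction; but the equality you invoke is, strictly, for the V-form, and you should acknowledge and dispose of the discrepancy.
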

\begin{proof}
We have
\begin{align*}
\ed_U(\P,\P^n) =\frac{2}{n}\sum_{i=1}^n\E_{X\sim \P}\|X-x_i\|- \E_{X,X'\sim \P}\|X-X'\|-\frac{1}{n(n-1)}\sum_{j=1}^n\sum_{\substack{k=1\\k\neq j}}^n\|x_j-x_k\|.
\end{align*}
We write $u(x) = \E_{X\sim \P}\|X-x\|$, $\nu = \E_{X,X'\sim \P}\|X-X'\|$, $U_n = \frac{1}{n(n-1)}\sum_{j=1}^n\sum_{\substack{k=1\\k\neq j}}^n\|x_j-x_k\|$. Clearly, $\E[U_n] =  \nu$.
We can thus write
\begin{align*}
    \ed_U(\P,\P^n) = \underbrace{\frac{2}{n}\sum_{i=1}^n \big(u(x_i)-\nu\big)}_{D_1}+\underbrace{\big(\nu-U_n\big).}_{D_2}
\end{align*}

For $D_1$:
$$
\Var(D_1) = \frac{4}{n}\Var\big(u(X)-\nu\big)= \frac{4}{n} \Var\big(u(X)\big).
$$
 By Jensen's inequality, we have $u(x)^2\leq \E_{X\sim \P}\|X-x\|^2$. Noting further that $\|X-x\|^2\leq2\|X\|^2+2\|x\|^2$,  and applying  \Cref{lem:finite_moment}, we get
\begin{align*}
    \E_{X\sim \P}[u(X)^2]\leq  4M^2,
\end{align*}
so that
$$
\Var(u(X))\leq \E[u(X)^2]\leq 4M^2
$$
and

$$
\Var(D_1) \leq \frac{4\cdot4M^2}{n}=\frac{16M^2}{n}.
$$

For $D_2$, 
% $U_n $ is a U-statistic of order 2 with kernel $k(x,y) = \|x-y\|$.
using the standard variance formula for U-statistics of order two \citep[equation 5.18]{hoeffding1948ustatistics}, we obtain
$$
\Var(U_n) = \frac{4}{n}\sigma^2 + O\Big(\frac{1}{n^2}\Big),
$$ where $\sigma^2 = \Var\big[\E_{X'\sim \P} k(X,X')\big]
=\Var(u(X))\leq 4M^2$.
Thus,
$$
\Var(D_2) = \Var(U_n)\leq \frac{16M^2}{n} + O\Big(\frac{1}{n^2}\Big),
$$
which can be expressed as 
$$
\Var(D_2) \leq \frac{C_2M^2}{n},
$$ 
for some constant $C_2>0$. 
We can therefore obtain
$$
\E[\ed_U(\P,\P^n)^2] \leq 2\Var(D_1) + 2\Var(D_2),
$$
since $\E D_1 = \E D_2 = 0$.

Thus we obtain
\begin{align*}
\E[\ed_U(\P,\P^n)^2]\leq \frac{C_0M^2}{n}
\end{align*}
for some constant $C_0>0$.
\end{proof}

\begin{proof}[Proof of \Cref{lemma:concentration_inequality}]
\label{proof:concentration_inequality}
Let $\mathcal{A}_{R_n}$ denote the event 
$
\mathcal{A}_{R_n} := \big\{\max_{1\leq i \leq  n} \|X_i\|\leq {R_n}\big\}.
$ 
On $\mathcal{A}_{R_n}$, any sample points $x, x'$ satisfy $\|x\|,\|x'\|\leq {R_n}$, hence
$$
\|x-x'\|\leq \|x\|+ \|x'\| \leq 2{R_n}.
$$
With \Cref{ass:tail}, we have
\begin{align*}
    q_{R_n} =\Pr(\mathcal{A}^c_{R_n}) = \Pr(\exists i,\|X_i\|>{R_n})&\leq n C\exp(-\kappa R_n)\\
    &=n C\exp\Big(-\kappa \frac{3\log n}{\kappa}\Big)\\
    &= C/n^2.
\end{align*}

We adapt the bounded-difference argument of \citet{briol19inference}. Denote $h(x_1,\ldots,x_n) = \ed_U(\P,\P^n)$, explicitly we have
\begin{align*}
    h(x_1, \ldots, x_n)
    &=\frac{2}{n}\sum_{i=1}^n\E_{X\sim\P}\|X-x_i\|- \E_{X,X'\sim \P}\|X-X'\|-\frac{1}{n(n-1)}\sum_{j=1}^n\sum_{\substack{k=1\\k\neq j}}^n\|x_j-x_k\|.
\end{align*}
As $\mathcal{A}_{R_n}=\bigcap_{i=1}^n\{\|X_i\|\le R_n\}$ is a product event, conditioning on it keeps the $X_i$ independent and supported in $\{\|x\|\le R_n\}$; we apply McDiarmid in this conditional distribution, where the bounded difference need only hold for $x_i,x_i'$ in the ball. Only the first and last term can be affected by changing a single point from $x_i$ to $x_i'$. On the event $\mathcal{A}_{R_n}$, the first term changes by at most
\begin{align*}
    \frac{2}{n}\Big|\E\|X-x_i\|-\E\|X-x'_i\|\Big|
    \leq\frac{2}{n}\|x_i-x'_i\|
    \leq \frac{2}{n}\big(\|x_i\|+\|x'_i\|\big)
    \leq \frac{4R_n}{n},
\end{align*}
using $|\E Z|\le\E|Z|$ and $\|x_i\|,\|x'_i\|\le R_n$. The last term $\frac{1}{n(n-1)}\sum_{j\neq k}\|x_j-x_k\|$ changes only in the $2(n-1)$ pairs involving $x_i$, each by at most $\|x_i-x'_i\|\le 2R_n$, hence by at most $\frac{1}{n(n-1)}\cdot 2(n-1)\cdot 2R_n=\frac{4R_n}{n}$. The bounded-difference constant is therefore $8R_n/n$.

By McDiarmid's inequality \citep{mcdiarmid1989method}, we get that for any $\epsilon>0$,
\begin{align*}
    P\Big(\ed_U(\P,\P^n)-\E\big[\ed_U(\P,\P^n)\,|\,\mathcal{A}_{R_n}\big]\geq\epsilon\,\big|\,\mathcal{A}_{R_n}\Big) &\leq \exp\big(-2\epsilon^2 /\{n(8{R_n}/n)^2\} \big)\\
    &=\exp\Bigg(\frac{-n\epsilon^2}{32{R_n}^2}\Bigg).
\end{align*}
Setting this bound to $\delta$ gives $\epsilon=4{R_n}\sqrt{\frac{2}{n}\log\big(\frac{1
}{\delta}\big)}$. Thus, we get that on the event $\mathcal{A}_{R_n}$
$$
 P\bigg(\ed_U(\P,\P^n)-\E\big[\ed_U(\P,\P^n)\,|\,\mathcal{A}_{R_n}\big]<4{R_n}\sqrt{\frac{2}{n}\log\Big(\frac1\delta\Big)}\bigg) \geq  1-\delta.
$$

Observe that
\begin{align*}
    \E\Big[\ed_U(\P,\P^n)\big|\mathcal{A}_{R_n}\Big] &= \frac{\E[\ed_U(\P,\P^n)\mathbbm{1}(\mathcal{A}_{R_n})]}{\Pr(\mathcal{A}_{R_n})}\\
    &\leq \frac{\sqrt{\E [\ed_U(\P,\P^n)^2]}\sqrt{\Pr(\mathcal{A}_{R_n})}}{\Pr(\mathcal{A}_{R_n})}\\
    &\leq \frac{\sqrt{\E [\ed_U(\P,\P^n)^2]}}{\sqrt{1-q_{R_n}}}\\
    &\leq\frac{C_1M}{\sqrt{n(1-q_{R_n})}};
\end{align*}
here the last inequality is from \Cref{lemma:finite_ed_second}.

We thus obtain that conditioning on event $\mathcal{A}_{R_n}$:
\begin{align*}
    P\bigg(\ed_U(\P,\P^n)<\frac{C_1M}{\sqrt{n(1-q_{R_n})}}+4{R_n}\sqrt{\frac{2}{n}\log\Big(\frac1\delta\Big)}\bigg) &\geq 1-\delta,
\end{align*}
and thus the unconditional probability is
\begin{align*}
    P\bigg(\ed_U(\P,\P^n)<\frac{C_1M}{\sqrt{n(1-q_{R_n})}}+4{R_n}\sqrt{\frac{2}{n}\log\Big(\frac1\delta\Big)}\bigg)\geq(1-\delta)\Pr(\mathcal{A}_{R_n}) = (1-\delta)(1-q_{R_n}).
\end{align*}
The claim follows.
\end{proof}

\Cref{lemma:concentration_inequality} bounds $\ed_U(\P,\P^n)$; it transfers to $\ed(\P,\P^n)$ because the two differ by $U_n/n$, with $U_n=\frac{1}{n(n-1)}\sum_{j\neq k}\|x_j-x_k\|$. On $\mathcal A_{R_n}$ each $\|x_j-x_k\|\le 2R_n$, so $U_n/n\le 2R_n/n$, lower-order than the $R_n\sqrt{n^{-1}\log(1/\delta)}$ term.

\begin{lemma}[Energy-distance perturbation]
\label{lemma:ed_perturbation}
For probability measures $P,Q,R$ on $\R^d$,
\[
|\ed(P,Q)-\ed(P,R)|
\,\leq\,\big(\sqrt{\ed(P,Q)}+\sqrt{\ed(P,R)}\big)\sqrt{\ed(Q,R)}.
\]
\end{lemma}
\begin{proof}
With $\ed(P,Q)=\|\mu_P-\mu_Q\|^2_{\mathcal H}$ in the RKHS of the energy
kernel,
\[
\ed(P,Q)-\ed(P,R)
= \big\langle\mu_R-\mu_Q,\,2\mu_P-\mu_Q-\mu_R\big\rangle.
\]
Cauchy--Schwarz with $\|2\mu_P-\mu_Q-\mu_R\|\leq\|\mu_P-\mu_Q\|+\|\mu_P-\mu_R\|$ concludes the claim.
\end{proof}

\begin{lemma}[Uniform convergence over a bounded parameter set]
\label{lemma:uniform_ed}
Let $\Theta_0\subseteq\Theta$ be bounded with closure $\overline{\Theta_0}$, and suppose \Cref{ass:tail} holds and, for each $\nu$, the map $\theta\mapsto \textsl{g}_\theta(\nu)$ is Lipschitz on $\overline{\Theta_0}$, that is, $\|\textsl{g}_\theta(\nu)-\textsl{g}_{\theta'}(\nu)\|\le L(\nu)\,\|\theta-\theta'\|$, where $L(\nu)$ satisfies $\E[\sqrt{L(\nu)}]<\infty$. With $\P^m_\theta=\frac1m\sum_{i=1}^m\delta_{\textsl{g}_\theta(\nu_i)}$ for $\nu_1,\dots,\nu_m\overset{\rm iid}{\sim}\cN(0,I)$,
\begin{align*}
\sup_{\theta\in\overline{\Theta_0}}\ed(\P_\theta,\P^m_\theta)\to0\quad\text{in probability.}
\end{align*}
\end{lemma}
\begin{proof}
$\overline{\Theta_0}$ is compact, and by \Cref{lemma:concentration_inequality}, $\ed(\P_\theta,\P^m_\theta)\to0$ in probability for each fixed $\theta$, at a rate set by the common constants of \Cref{ass:tail}. It remains to make this convergence uniform over $\theta$.

Write $D_m(\theta)=\sqrt{\ed(\P_\theta,\P^m_\theta)}$. As $\sqrt{\ed}=\|\mu_\cdot-\mu_\cdot\|_{\mathcal H}$ is a metric,
\begin{align*}
\big|D_m(\theta)-D_m(\theta')\big|\le \sqrt{\ed(\P_\theta,\P_{\theta'})}+\sqrt{\ed(\P^m_\theta,\P^m_{\theta'})}.
\end{align*}
For the energy kernel $\|k(\cdot,a)-k(\cdot,b)\|_{\mathcal H}=\sqrt{\|a-b\|}$, so
\begin{align*}
\sqrt{\ed(\P_\theta,\P_{\theta'})}\le\E_\nu\sqrt{\|\textsl{g}_\theta(\nu)-\textsl{g}_{\theta'}(\nu)\|}
\le\sqrt{\|\theta-\theta'\|}\,\E[\sqrt{L(\nu)}],
\end{align*}
the last step by the Lipschitz condition; the same bound, with $\E_\nu$ replaced by $\frac1m\sum_{i=1}^m$, holds for $\sqrt{\ed(\P^m_\theta,\P^m_{\theta'})}$, since the two empirical measures share the draws $\nu_1,\dots,\nu_m$. Hence $|D_m(\theta)-D_m(\theta')|\le A_m\sqrt{\|\theta-\theta'\|}$ with $A_m=\E[\sqrt{L(\nu)}]+\frac1m\sum_{i=1}^m\sqrt{L(\nu_i)}$, and $A_m\to 2\,\E[\sqrt{L(\nu)}]$ a.s., so $A_m=O_P(1)$.

Fix $\eta,\gamma>0$ and take $A<\infty$ with $\Pr(A_m\le A)\ge 1-\gamma$ for all large $m$. On $\{A_m\le A\}$ the modulus bound gives $|D_m(\theta)-D_m(\theta')|\le\eta$ whenever $\|\theta-\theta'\|\le(\eta/A)^2$; covering $\overline{\Theta_0}$ by finitely many such balls, centred at $\theta_1,\dots,\theta_N$,
\begin{align*}
\sup_{\theta\in\overline{\Theta_0}}D_m(\theta)\le\max_{1\le j\le N}D_m(\theta_j)+\eta.
\end{align*}
The finite maximum tends to $0$ in probability, and $\eta,\gamma$ are arbitrary, so $\sup_\theta D_m(\theta)\to0$, and hence $\sup_\theta\ed(\P_\theta,\P^m_\theta)\to0$, in probability.
\end{proof}

\begin{theorem}[Generalisation bounds]
\label{thm:generalization_bound}
Write 
\begin{align*}
\hat\theta_n=\argmin_{\theta\in\Theta}\ed_U(\P_\theta,\Q^n) \quad \mbox{and}\quad \hat\theta_{m,n}=\argmin_{\theta\in\Theta}\ed_U(\P^m_\theta,\Q^n).
\end{align*} Under
the assumptions of \Cref{lemma:concentration_inequality}, there exists
a deterministic sequence $K_n=O(1+R_n)$ such that $\ed(\P_\theta,\Q)$ and $\ed(\P_\theta,\Q^n)$ are bounded by $K_n$, uniformly in
$\theta\in\Theta$, on the truncation event $\mathcal A_{R_n}$. With
probability at least $1-\delta-q_{R_n}$,
\[
\ed(\P_{\hat\theta_n},\Q)-\inf_{\theta\in\Theta}\ed(\P_\theta,\Q)
\,\leq\, 4\sqrt{K_n}\,\sqrt{\frac{C_1M}{\sqrt{n(1-q_{R_n})}}+4R_n\sqrt{\frac{2}{n}\log\tfrac{1}{\delta}}}.
\]

For the estimator $\hat\theta_{m,n}$, under conditions (c)--(d) of \Cref{thm:convergence},
\[
\ed(\P_{\hat\theta_{m,n}},\Q)-\inf_{\theta\in\Theta}\ed(\P_\theta,\Q)\to0\qquad\text{in probability as }m,n\to\infty.
\]
\end{theorem}

\begin{proof}[Proof of \Cref{thm:generalization_bound}]
On $\mathcal A_{R_n}$, decompose
\[
\ed(\P_{\hat\theta_n},\Q)-\inf_{\theta\in\Theta}\ed(\P_\theta,\Q)
\;=\; T_1+T_2+T_3,
\]
with $T_1=\ed(\P_{\hat\theta_n},\Q)-\ed(\P_{\hat\theta_n},\Q^n)$,
$T_2=\ed(\P_{\hat\theta_n},\Q^n)-\inf_{\theta\in\Theta}\ed(\P_\theta,\Q^n)$
and
$T_3=\inf_{\theta\in\Theta}\ed(\P_\theta,\Q^n)-\inf_{\theta\in\Theta}\ed(\P_\theta,\Q)$.
The optimality of $\hat\theta_n$ gives $T_2=0$: $\ed_U(\P_\theta,\Q^n)$ and $\ed(\P_\theta,\Q^n)$ differ only by the $\Q^n$ self-term, constant in $\theta$, so they share the minimiser $\hat\theta_n$.
\Cref{lemma:ed_perturbation} bounds
$|T_1|\leq 2\sqrt{K_n}\,\sqrt{\ed(\Q,\Q^n)}$; the same lemma combined with
$\inf_\theta f(\theta)-\inf_\theta g(\theta)\leq\sup_\theta(f(\theta)-g(\theta))$
gives $|T_3|\leq 2\sqrt{K_n}\,\sqrt{\ed(\Q,\Q^n)}$. Hence
\[
\ed(\P_{\hat\theta_n},\Q)-\inf_{\theta\in\Theta}\ed(\P_\theta,\Q)
\,\leq\, 4\sqrt{K_n}\,\sqrt{\ed(\Q,\Q^n)}\qquad\text{on }\mathcal A_{R_n},
\]
and \Cref{lemma:concentration_inequality} delivers the first display in the
statement.

For $\hat\theta_{m,n}$, condition (c) makes $\Theta$ compact, so $\hat\theta_{m,n}\in\Theta$. By optimality of $\hat\theta_{m,n}$,
\begin{align*}
\ed(\P_{\hat\theta_{m,n}},\Q)-\inf_{\theta\in\Theta}\ed(\P_\theta,\Q)
\le 2\sup_{\theta\in\Theta}\big|\ed_U(\P^m_\theta,\Q^n)-\ed(\P_\theta,\Q)\big|.
\end{align*}
Here $\ed_U(\P^m_\theta,\Q^n)$ and $\ed(\P^m_\theta,\Q^n)$ differ by the self-term correction, $o_P(1)$ uniformly in $\theta$ under conditions (c)--(d), so we may pass to $\ed$ and apply the triangle inequality. Write $d=\sqrt{\ed}$; then $|d(\P^m_\theta,\Q^n)-d(\P_\theta,\Q)|\le d(\P^m_\theta,\P_\theta)+d(\Q,\Q^n)$, so
\begin{align*}
\big|\ed(\P^m_\theta,\Q^n)-\ed(\P_\theta,\Q)\big|\le\big(d(\P^m_\theta,\P_\theta)+d(\Q,\Q^n)\big)\big(d(\P^m_\theta,\Q^n)+d(\P_\theta,\Q)\big).
\end{align*}
The first factor is $o_P(1)$ uniformly in $\theta$, by \Cref{lemma:uniform_ed} (with $\Theta_0=\Theta$) and \Cref{lemma:concentration_inequality}. For the second, $\sup_{\theta\in\Theta}d(\P_\theta,\Q)=O(1)$ by \Cref{lem:finite_moment}, whence $d(\P^m_\theta,\Q^n)\le d(\P^m_\theta,\P_\theta)+d(\P_\theta,\Q)+d(\Q,\Q^n)$ is $O_P(1)$ uniformly in $\theta$. Hence $\sup_{\theta\in\Theta}|\ed(\P^m_\theta,\Q^n)-\ed(\P_\theta,\Q)|=o_P(1)$, and the excess risk tends to $0$ in probability.
\end{proof}

% \begin{theorem}[Consistency]
% \label{thm:consistency_general}
%     When the assumptions in \Cref{lemma:concentration_inequality} hold and there exists a unique minimiser $\theta^\ast = \arg\inf_{\theta\in\Theta}\ed(\P^n,\Q)$, we have $\hat\theta_n \to \theta^\ast$, $\hat\theta_{m,n} \to \theta^\ast$ as $n\to \infty$, almost surely.
% \end{theorem}

With the above theorems, we can now prove \Cref{thm:convergence}.

\begin{proof}[Proof of \Cref{thm:convergence}]
By \Cref{thm:generalization_bound} applied with $\delta=1/n^2$,
\[
\Pr\big\{\ed(\P_{\hat\theta_n},\Q)-\inf_{\theta\in\Theta}\ed(\P_\theta,\Q)
\geq 4\sqrt{K_n}\,\sqrt{B_n}\big\}
\leq \tfrac{1}{n^2}+q_{R_n}=O\big(\tfrac{1}{n^2}\big),
\]
where
$B_n=\frac{C_1M}{\sqrt{n(1-q_{R_n})}}+8R_n\sqrt{\frac{\log n}{n}}=O\big(\frac{(\log n)^{3/2}}{\sqrt n}\big)$.
Since $K_n=O(\log n)$, we have $\sqrt{K_nB_n}\to0$, and the Borel--Cantelli lemma then gives, almost surely,
\[
\ed(\P_{\hat\theta_n},\Q)-\inf_{\theta\in\Theta}\ed(\P_\theta,\Q)\,\to\,0.
\]
When $\P_{\theta^\ast}\stackrel{d}{=}\Q$, the infimum equals $0$, so
$\ed(\P_{\hat\theta_n},\Q)\xrightarrow{P}0$, and
\Cref{thm:convergence_by_convergent_ed} gives
$\P_{\hat\theta_n}\Rightarrow\Q$ in probability. For $\hat\theta_{m,n}$, under conditions (c)--(d) of \Cref{thm:convergence}, the
$(m,n)$ part of \Cref{thm:generalization_bound} gives
$\ed(\P_{\hat\theta_{m,n}},\Q)\xrightarrow{P}0$ under the same specification, and
\Cref{thm:convergence_by_convergent_ed} again yields
$\P_{\hat\theta_{m,n}}\Rightarrow\Q$ in probability.
\end{proof}

Based on these results, we provide proof of \Cref{cor:component_convergence}.
\begin{proof}
For the first claim, apply \Cref{thm:convergence} to \eqref{eq:empirical_g}
with target $\P_{ZX}$; \Cref{thm:convergence_by_convergent_ed} yields
$\hat g(\epsilon)\Rightarrow\P_{ZX}$ in probability.

The second and third claims both follow once the conclusions of \Cref{prop:population_estimates} are shown to hold, off \emph{common} null sets, for \emph{every} population minimiser rather than for a single solution. The proof of \Cref{prop:population_estimates} uses strict propriety, hence characterises any minimiser: each $\theta^\ast\in\Theta^\ast$ satisfies $f_{\theta^\ast}\big(x,h_{\theta^\ast}(z,x,\xi)\big)\sim\P_{Y\,|\, Z=z,X=x}$ off some $\P_{ZX}$-null set, and $h_{\theta^\ast}(Z_x,x,\xi)\sim\cN(0,I_{d_y})$ with $Z_x\sim\P_{Z\mid X_0=x_0}$ off some $\P_X$-null set, both sets depending on $\theta^\ast$. As $\Theta$ is compact, $\Theta^\ast$ is separable; fix a countable dense $\{\theta_k\}\subset\Theta^\ast$ and let $N$ and $M$ be the unions of the corresponding $\P_{ZX}$- and $\P_X$-null sets, themselves null. For any $\theta^\ast\in\Theta^\ast$ take $\theta_k\to\theta^\ast$; since $\theta\mapsto\P_\theta$ is continuous and the distributions indexed by the $\theta_k$ agree, off $N$ and $M$, with the same true conditional and standard Gaussian, uniqueness of weak limits gives the same two properties for $\theta^\ast$ off the common null sets $N$ and $M$.

We assume the empirical objective \eqref{eq:empirical_fh} converges uniformly over $\Theta$ to its population counterpart; when $X_0$ is present, this also requires the auxiliary model used to draw $Z$ from $\P_{Z\mid X_0}$ to be consistent. Under the compactness, componentwise Lipschitz, and uniform tail conditions of \Cref{thm:convergence} together with this consistency, the excess population risk of $\hat\theta$ tends to zero. Since $\Theta$ is compact and the population risk is continuous, every limit point of $\hat\theta$ lies in $\Theta^\ast$, so $\operatorname{dist}(\hat\theta,\Theta^\ast)\to 0$. For the second claim, fix $(z,x)\notin N$: the conditional fit then holds for every $\theta^\ast\in\Theta^\ast$, so by continuity of $\theta\mapsto\P_\theta$, along any subsequence with $\hat\theta\to\theta^\ast$ we have $\hat f(x,\hat h(z,x,\xi))\Rightarrow\P_{Y\,|\, Z=z,X=x}$; as this limit is the same for all $\theta^\ast\in\Theta^\ast$, it holds in probability for $\P_{ZX}$-almost every $(z,x)$.

For the third claim, fix $x\notin M$ and outside the positivity-null set. By positivity (\Cref{ass:identification}) the conditional fit holds at $\P_{Z\mid X_0=x_0}$-almost every $z$, and the calibration gives $h_{\theta^\ast}(Z_x,x,\xi)\sim\cN(0,I_{d_y})$; substituting both into the identification formula of \Cref{lem:identifiability} yields $f_{\theta^\ast}(x,\eta)\sim\P_{\Yx}$ for every $\theta^\ast\in\Theta^\ast$. Continuity of $\theta\mapsto\P_\theta$ then gives, along any subsequence with $\hat\theta\to\theta^\ast$, $\hat f(x,\eta)\Rightarrow f_{\theta^\ast}(x,\eta)\sim\P_{\Yx}$; since this limit is the same for all $\theta^\ast\in\Theta^\ast$, $\hat f(x,\eta)\Rightarrow\P_{\Yx}$ in probability for $\P_X$-almost every $x$.

\end{proof}

\subsection{Proof of \Cref{prop:extrap}}
\label{sec:proof_extrap}
\begin{proof}
    Since $\widetilde\eta$ is a continuous random variable, there exists a strictly monotone function $\widetilde{f}_1$ such that $\widetilde\eta\overset{d}{=} \widetilde{f}_1(\widetilde\xi)$ where $\widetilde\xi\sim\cN(0,1)$. Let $f^\ast,f^\ast_1$ denotes the element of an optimal frengression model such that
    \[
    f^\ast(x+f^\ast_1(\widetilde\xi))\sim \P_{\Yx},
    \]
    for $\P_X$-almost every $x\in\cX$ by \Cref{prop:population_estimates}. Under \Cref{ass:preanm} both sides depend continuously on $x$ and $\P_X$ has full support on $\cX$, so the relation extends to every $x\in\cX$. This implies
    \[
    \widetilde{f}(x+\widetilde{f}_1(\widetilde\xi))\overset{d}{=}f^\ast(x+f^\ast_1(\widetilde\xi)),
    \]
    for all $x\in\cX$. Taking medians in the last display gives $\widetilde{f}(x)=f^\ast(x)$ on $\cX$, so $\widetilde{f}$ and $f^\ast$ are increasing or decreasing together; replacing $Y$ by $-Y$ in the decreasing case, and choosing $\widetilde{f}_1$ and $f^\ast_1$ increasing, which is possible because $\widetilde\xi\overset{d}{=}-\widetilde\xi$, we may take all four maps strictly increasing.

    Since both sides are strictly increasing functions of the same $\widetilde\xi\sim\cN(0,1)$, equality in distribution implies that they coincide as functions: for all $x\in\cX$ and $u\in\R$,
    \begin{equation}\label{eq:pf2}
        \widetilde{f}(x+\widetilde{f}_1(u))=f^\ast(x+f^\ast_1(u)).
    \end{equation}
    The maps $\widetilde f,\widetilde f_1, f^\ast, f^\ast_1$ are continuously differentiable by \Cref{ass:preanm}; taking the derivative w.r.t.~$x$ and $u$ respectively yields
    \begin{align*}
    \widetilde{f}'(x+\widetilde{f}_1(u)) &= {f^\ast}'(x+f^\ast_1(u))\\
    \widetilde{f}'(x+\widetilde{f}_1(u))\widetilde{f}'_1(u) &= {f^\ast}'(x+f^\ast_1(u)){f^\ast_1}'(u).
    \end{align*}
    Using the first equation to replace ${f^\ast}'(x+f^\ast_1(u))$ in the second gives
    \[
    \widetilde{f}'(x+\widetilde{f}_1(u))\big(\widetilde{f}'_1(u)-{f^\ast_1}'(u)\big)=0.
    \]
    Since $\widetilde{f}$ is strictly increasing, its derivative cannot be zero everywhere on the interval $\{x+\widetilde{f}_1(u):x\in[x_{\rm min},x_{\rm max}]\}$ -- otherwise $\widetilde{f}$ would be constant there. Hence for each $u$ there is an $x$ with $\widetilde{f}'(x+\widetilde{f}_1(u))>0$, so $\widetilde{f}'_1(u)={f^\ast_1}'(u)$. As this holds for every $u$, integration gives $\widetilde{f}_1(u)=f^\ast_1(u)+c$ for a constant $c$. Since $f^\ast_1(0)=0$ by normalisation and $\widetilde{f}_1(0)=0$ because $\widetilde\eta$ has median $0$, we obtain $c=0$ and $\widetilde{f}_1=f^\ast_1$.

    Then according to \eqref{eq:pf2}, we have 
    \[
    \widetilde{f}(\widetilde{x}) = f^\ast(\widetilde{x}),
    \]
    for all $\widetilde{x}\in[x_{\rm min}-\widetilde\eta_0, x_{\rm max}+\widetilde\eta_0]$. Note that $\widetilde{f}(x)$ and $f^\ast(x)$ are exactly the true and modelled conditional medians, respectively; the desired result then follows.
\end{proof}

% \section{Frugal parameterisation}

% The original frugal parameterisation \citep{evans2024parameterizing} for a causal model consists of three parts: \robin{I can change notation as desired here.}
% \begin{itemize}
%     \item `the past': a parameterisation of the model for the covariates and treatments, $p_{\ZX}(z, x)$;
%     % \item the causal parameters: a specific kernel $p_{Y|WX}^\ast(y \cmid w, x)$ for some (possibly null) $W \subset Z$;
%     % \item a (conditional) dependence parameter that gives the joint distribution of $\widetilde{Z} = Z \setminus W$ and $Y$ given $W,X$, denoted by $\phi_{Y\widetilde{Z}|WX}$.
%     \item the causal parameters: a specific kernel $p_{Y|X}^\ast(y \cmid x)$, generally with some causal interpretation;
%     \item a (conditional) dependence parameter that gives the joint distribution of $Z$ and $Y$ given $X$, denoted by $\phi_{\YZIX}$.
% \end{itemize}

% In \citet{evans2024parameterizing} and \citet{de2024marginal}, the dependency $\phi_{\YZIX}$ is measured using copula. 
% 
% 

\section{Hyperparameters in frengression}
All hyperparameters are kept fixed across experiments, except for the number of epochs, which is chosen so that the loss converges. The hyperparameters include
\begin{itemize}

    \item learning\_rate: The learning rate is set to be $10^{-4}$.
    \item hidden\_layer: The number of hidden layers in the neural network is set to be $3$.
    \item hidden\_dim: The size of neurons per hidden layer is set to be $100$.
\end{itemize}

\section{Experiment details}
All experiments were conducted on a MacBook with an Apple M3 chip, 8-core CPU, and 32GB RAM.
\subsection{Binary intervention}
\label{sec:binary_exp_details}
In this setting, for each simulation, we train frengression for $1000$ epochs.

For the AIPW estimator, we fit the propensity score model with logistic regression; because the true propensity scores are linear in the covariates, this specification performs well. We model the outcomes separately using random forests.

Next, we compare frengression to two advanced deep neural network approaches for causal inference, CausalEGM and Dragonnet. A comprehensive list of CausalEGM’s hyper-parameters is available at \url{https://causalegm.readthedocs.io/en/latest/tutorial_py.html}. Because the search space is extensive, we adopt the configuration recommended for the binary treatment experiments in the original paper. The settings of key hyperparameters are given in \Cref{tab:egm_params}; all other quantities remain at their default values. Although the authors of CausalEGM suggest  30,000 epochs especially for continuous treatment setting, for fair comparison we train CausalEGM for $1000$ epochs in our experiments.
\begin{table}[ht]
  \centering
  \begin{tabular}{@{}ll@{}}
    \toprule
    Hyperparameter & Value \\ \midrule
    Latent covariate dimensions: z\_dims               & ${[}1, 1, 1, 1{]}$ \\
    Learning rate: lr                        & $2\times10^{-4}$ \\
    Reconstruction loss weight: alpha               & 1 \\
    Round trip loss weight: beta                            & 1 \\
    Gradient penalty loss weight: gamma                  & 10 \\
    Generator units: g\_units                              & {[}64, 64, 64, 64, 64{]} \\
    Encoder units: e\_units                                & {[}64, 64, 64, 64, 64{]} \\
    F network units: f\_units                            & {[}64, 32, 8{]}  \\
    H network units: h\_units                            & {[}64, 32, 8{]}  \\
    Discriminator units in latent space: dz\_units                   & {[}64, 32, 8{]}  \\
    Discriminator units in covariate space: dv\_units                   & {[}64, 32, 8{]}  \\ \bottomrule
  \end{tabular}
    \caption{Values of key hyperparameters in the CausalEGM model.}
      \label{tab:egm_params}
\end{table}

For Dragonnet, in line with its requirements, we tuned the hyperparameters with \texttt{optuna}. The search range is shown in \Cref{tab:dragonnet_params}. We conduct $30$ hyperparameter search trials, selecting the best configuration by validation performance on a hold-out set of $400$ samples, and then apply the resulting model to an independent  test set of $400$ samples to obtain the reported estimates.

\begin{table}[ht]
  \centering
  \begin{tabular}{@{}ll@{}}
    \toprule
    Hyperparameter & Value \\ \midrule

    Learning rate: lr                        & $\operatorname{logUniform}(10^{-5},10^{-2})$ \\
    Weights decay rate: wd              & $\operatorname{logUniform}(10^{-5},10^{-2})$ \\
    Batch size: bs                          & $\{32,128,256\}$ \\
    Number of iterations: epochs                  & Grid search in between $200$ and $600$ \\
    Number of neurons in shared layers: shared\_hidden                              & Grid search in between $50$ and $200$\\
    Number of neurons in shared layers: outcome\_hidden                              & Grid search in between $50$ and $200$  \\ \bottomrule
  \end{tabular}
    \caption{Search ranges of key hyperparameters in the Dragonnet model.}
      \label{tab:dragonnet_params}
\end{table}

\subsection{Longitudinal Analysis}
\subsubsection{Synthetic Data}
\label{sec:sequential_exp_details}
We provide the details of the synthetic data generation in \Cref{sec:longitudinal}.

\begin{table}[ht]
\centering
\small                     
\setlength{\tabcolsep}{2pt}
\resizebox{\textwidth}{!}{%
\begin{tabular}{llll}
\toprule
 & \textbf{Setting 1} & \textbf{Setting 2} & \textbf{Setting 3}\\
\midrule
$C$ & $\operatorname{Bernoulli}(0.5)$ & $\operatorname{Exp}(1)$ & $\operatorname{Bernoulli}(0.5)$\\[2pt]

$Z_{t}\mid\overline{X}_{t-1},C$ &
  $\mathcal N(-0.5+0.5X_{t-1}+0.25C,\,0.5)$ &
  $\mathcal N(0.7Z_{t-1}+0.2X_{t-1})$ &
  $\mathcal N(-0.5+0.5X_{t-1}+0.25C,\,0.5)$\\[2pt]

$X_{t}\mid\overline{Z}_{t},C$ &
  $\operatorname{Bernoulli}\big(\operatorname{expit}(0.2Z_{t}+0.1)\big)$ &
  $\operatorname{Bernoulli}\big(\operatorname{expit}(-0.5+0.25X_{t-1}+0.5Z_{t})\big)$ &
  $\mathcal N(1.2+0.1Z_{t}+2C,\,0.1)$\\[2pt]

$\widetilde Y_{t}(\overline{x}_{t})\mid C$ &
  $\operatorname{Exp}(0.3+0.2x_{t}+0.1C)$ &
  $\operatorname{Exp}(0.5+0.2x_{t}+0.2C,1)$ &
  $\operatorname{Exp}(0.1+0.3x_{t}+0.1C)$\\[2pt]

\makecell[l]{Gaussian copula \\  (with Pearson \\ correlation $\rho$)}  &
  \makecell[l]{$\rho=0.4$ between\\ $Z_{t}$ and $Y_{t}$} &
  \makecell[l]{pair-copula:\\ $\rho_{Y_{t},Z_{t-1}}=0.2$,\\ $\rho_{Y_{t},Z_{t}\mid Z_{t-1}}=0.3$\\ (depends on $C$)} &
  \makecell[l]{$\rho=0.4$ between\\ $Z_{t}$ and $Y_{t}$}\\
\bottomrule
\end{tabular}
}
\caption{Simulation settings 1--3.}
\end{table}

% ----------  Table 2 : Setting 4 only ----------
\begin{table}[ht]
\centering
\small
\setlength{\tabcolsep}{6pt} 
\begin{tabular}{lc}
\toprule
 & \textbf{Setting 4}\\
\midrule
$C$ & $\mathcal N(0,1)$\\[2pt]

$Z_{t}\mid\overline{X}_{t-1},C$ &
  $\mathcal N(-0.5+0.5X_{t-1}+0.5Z_{t-1}+0.5C,\,0.5)$\\[2pt]

$X_{t}\mid\overline{Z}_{t},C$ &
  $\mathcal N(2Z_{t}+0.1C,\,1)$\\[2pt]

$Y_{t}(\overline{x}_{t})\mid C$ &
  $\mathcal N(2x_{t}+x_{t-1}+0.5x_{t-2}+C,\,1)$\\[2pt]
  \makecell[l]{Gaussian copula \\  with Pearson correlation $\rho$}  & $\rho=0.24$\\[2pt]
\bottomrule
\end{tabular}
\caption{Simulation setting 4.}
\end{table}

\subsubsection{Survival Analysis: comparison with \texttt{dltmle}}
\label{sec:survival_dltmle}
In our implementation of \texttt{dltmle} \citep{dltmle} in \Cref{sec:longitudinal}, we tuned the hyperparameters using their built in \texttt{dltmle.tune} function within the candidates in \Cref{tab:dltmle_params}.
\begin{table}[ht]
  \centering
  \begin{tabular}{@{}ll@{}}
    \toprule
    Hyperparameter & Hyperparameter values / candidates for search \\ \midrule
    dim\_emb               & $\{8,16\}$ \\
    dim\_emb\_time                   & $\{4,8\}$ \\
    dim\_emb\_type               & $\{4,8\}$ \\
    hidden\_size              & $\{8,16,32\}$ \\
    num\_layers                  & $\{1,2,4\}$ \\
    nhead          & $\{2,4\}$ \\
    dropout    & $\{0,0.1,0.2\}$ \\
    learning\_rate  & $\{10^{-3},5\times 10^{-4}, 10^{-4}, 5\times 10^{-3}\}$  \\
    alpha      & {[}0.05, 0.1, 0.5, 1{]}  \\
    beta      & {[}0.05, 0.1, 0.5, 1{]}  \\
 \bottomrule
  \end{tabular}
    \caption{Search ranges of key hyperparameters of  \texttt{dltmle} model.}
      \label{tab:dltmle_params}
\end{table}

\subsection{LEADER}
\label{sec:LEADER_experiment_details}
We provide the description of each variable name in our experiment on the LEADER trial in \Cref{tab:leader_description}.

\begin{table}[ht]
  \centering
  \begin{tabular}{ll}
    \toprule
    Variable & Description \\ 
    \midrule
    SEX\_F & Binary variable. If \texttt{True}, the patient is female.\\
    SMOKER\_NEVER.SMOKED & Binary variable. If \texttt{True}, the patient has never smoked before.\\
    STENFL   & Carotid $>50\%$ stenosis on angiography. Binary. \\
    NEPSCRFL &    Diabetic Nephropathy at screening flag. Binary.\\
    Age & Patient's age when enrolled in the trial. \\
    DIABDUR & Diabetes duration (years) at baseline.\\
    HDL1BL & 	HDL Cholesterol (mmol/L) at baseline.\\
    LDL1BL &  LDL Cholesterol (mmol/L) at baseline. \\
    CHOL1BL & Total Cholesterol (mmol/L) at baseline.\\
    log\_TRIG1BL & Logarithm of triglycerides (mmol/L) at baseline. \\
    log\_CREATBL & 	Serum Creatinine ($\mu$mol/L) at baseline.\\
 \bottomrule
  \end{tabular}
    \caption{Description of each variable used in \Cref{sec:leader}.}
      \label{tab:leader_description}
\end{table}

The hyperparameter configurations used in \texttt{dltmle} are listed in \Cref{tab:dltmle_leader_params}. Similar to its implementation in \Cref{sec:survival_dltmle}, we tuned the hyperparameters with the built in \texttt{dltmle.tune} function over $20$ trials.

\begin{table}[ht]
  \centering
  \begin{tabular}{@{}ll@{}}
    \toprule
    Hyperparameter & Hyperparameter values / candidates for search \\ \midrule
    dim\_emb               & $\{16,32\}$ \\
    hidden\_size              & $\{32,64\}$ \\
    num\_layers                  & $\{1,2,4\}$ \\
    nhead          & $\{2,3,4\}$ \\
    dropout    & $\{0,0.1,0.2\}$ \\
    learning\_rate  & $\{10^{-3},10^{-2}\}$  \\
    batch\_size      & $\{64,128\}$  \\
    epochs      & $[50,200]$  \\
 \bottomrule
  \end{tabular}
    \caption{Search ranges of key hyperparameters of  \texttt{dltmle} model on the LEADER trial.}
      \label{tab:dltmle_leader_params}
\end{table}
\bibliography{paper-ref}
\end{document}